\pdfoutput=1
\documentclass[11pt, oneside]{elsarticle}
\usepackage{geometry}
\usepackage{graphicx}
\usepackage{amssymb}
\usepackage{amsmath}
\usepackage{amsthm}
\theoremstyle{plain}

\newtheorem{lemma}{Lemma}
\newtheorem{proposition}{Proposition}
\newtheorem{corollary}{Corollary}
\theoremstyle{definition}

\theoremstyle{remark}

\usepackage{booktabs}
\usepackage{array}
\usepackage{etoolbox}
\AtBeginEnvironment{table}{\setlength{\belowcaptionskip}{8pt}}
\usepackage{multirow}
\usepackage{placeins}

\usepackage{longtable}
\usepackage{microtype}
\usepackage[detect-weight=true,detect-family=true]{siunitx}
\usepackage{xcolor}
\usepackage[hidelinks]{hyperref}
\graphicspath{{figs/}}
\newcommand{\Wvec}{\mathbf{W}}
\newcommand{\Wdot}{\mathbf{W}'}
\newcommand{\gvec}{\mathbf{g}}
\newcommand{\Qvec}{\mathbf{Q}}
\newcommand{\Dmat}{\mathbf{D}}
\newcommand{\Bmat}{\mathbf{L}}
\newcommand{\Phimat}{\boldsymbol{\Phi}}

\usepackage{fancyhdr}
\fancypagestyle{plain}{%
  \fancyhf{}%
  \fancyfoot[R]{\footnotesize\thepage}%
}
\fancypagestyle{pprintTitle}{%
  \fancyhf{}%
}
\begin{document}

\begin{frontmatter}

\title{Structure-preserving reduced-order models of thin panels in supersonic and hypersonic flow}

\author[morphorm]{Miguel A. Aguil\'o\corref{cor1}}
\ead{maguilo@morphorm.com}
\cortext[cor1]{Corresponding author}
\address[morphorm]{Morphorm LLC, 100 Sun Avenue Northeast Suite 650,
Albuquerque, NM 87109-4670, USA}

\begin{abstract}
We report structure-preserving reduced-order models of both challenge problems of the AePW-4 High-Speed Working
Group~\cite{aepw4hswg}. They are the AFRL/SSC RC-19 clamped thin panel at Mach $1.92$ and the UNSW HyMAX cantilever plate
at Mach $5.85$. For each problem one construction gives two models. The reference model is a physics-based
Ritz--Galerkin model of the plate. The learned model keeps the reference model's exact potential, fits the remaining
generalized forces, and trains only its dissipation. Each model is built from the data package the working group
distributes for its case. Neither package holds a time-resolved response or repeated runs. There is no ensemble to learn from. The physics is
therefore carried by the form of the model rather than by data. The learned model's energy balance holds for any weights by construction. On the trained model, at random states, its relative residual is $2.2\times10^{-15}$. On
RC-19, inviscid shock relations applied to the package's wedge geometry and schlieren shock angle predict a pressure
ratio of $1.521$. The ratio derived from the pressure-sensitive paint that Brouwer et al.~\cite{brouwer2023pressure} applied to the rigid
panel is also $1.521$.
On the periodic case we ran the trained model thirty times, each from a different initial push.
The mean dominant frequency is \SI{1.0}{\percent} from the \SI{236}{\hertz} measured by Brouwer et al.~\cite{brouwer2021}.
Brouwer et al.\ simulated the experiment with their own reduced-order model. We take one periodic and one chaotic condition from their simulations.
At both, the trained model gives the same regime as their model in all sixteen runs, eight from each buckled shape.
It does so without their model's \SI{113}{\hertz} spectral peak, which the
experiment does not show. Because the response is chaotic, any model loses a single trajectory after a horizon set by the largest Lyapunov exponent.
Measured displacements therefore correct the model through one more generalized force, which enters the same energy balance.
We tested this correction in a twin experiment at the experiment's \SI{5}{\kilo\hertz} measurement rate. With an offline smoother it holds the panel state to $0.0125$ thicknesses without noise and $0.0352$ with \SI{10}{\micro\meter} of noise. On HyMAX, with nothing fed back, the learned model tracks the reference model's trailing-edge
displacement for \SI{65}{\milli\second} to a relative error of \SI{3.67}{\percent}. Every prediction in the reported studies was registered before its run. Four of nine failed, and all four are reported. The one open comparison is the displacement standard deviation on the RC-19 attached-shock case. There our reference model gives about a sixteenth of the measured value. The gap is traced to inputs the public
package does not carry, and we state the measurement that would close it.
\end{abstract}

\begin{keyword}
aeroelasticity \sep panel flutter \sep high-speed flow \sep reduced-order modeling \sep learned reduced-order model \sep structure
preservation \sep port-Hamiltonian \sep discrete gradient \sep data assimilation \sep preregistration
\end{keyword}

\end{frontmatter}

\section{Introduction}
\label{sec:intro}

Thin panels on high-speed vehicles are loaded by the flow, heated by it, and driven by it into
flutter~\cite{dowell1970review,mei1999review,mcnamara2011hypersonic}. A heated clamped panel buckles, and above the
buckling temperature it has two stable buckled shapes. Each sits in a well, and a well is a minimum of the panel's potential energy. In supersonic flow it can oscillate about one of them, switch between
them, or do both in no repeatable order~\cite{dowell1982chaotic}. Which of these it does depends on the temperature rise
and on the pressure across it. A design has to know the answer over the whole operating envelope, not at one point.
High-fidelity coupled simulation resolves one condition at a cost of hours. Linear reduced-order models are affordable
over the envelope but cannot represent a panel that crosses between buckled wells. A learned surrogate can be affordable
and nonlinear. However, a surrogate whose energy change does not equal the work done on it cannot be defended to a reviewer. Nothing in its
training says where it stops being valid.

The AePW-4 High-Speed Working Group~\cite{aepw4hswg,brouwer2025aepw4} poses two challenge problems that test these needs
directly. The AFRL/SSC RC-19 clamped thin panel at Mach $1.92$ has periodic and chaotic responses and a case with an
impinging shock. The UNSW HyMAX cantilever plate sits at Mach $5.85$ under an impinging shock in a short-duration
facility. The group's stated objectives include assessing coupled tools predictively. They also include identifying the
gaps in high-speed experiments that future tests should close.

This work builds learned reduced-order models of both problems. Three observations organize it. First, the data packages the working group distributes for the two problems hold statistics,
not trajectories. The RC-19 package supplies measured mode shapes, time-averaged pressures and
the nine points at which the response is reported. It holds no time history and no repeated runs. The response statistics at
those points are published separately~\cite{brouwer2021,piccoloserafim2025}. A method that learns its physics from
examples needs an ensemble that does not exist. The physics must therefore be carried by the model's form, leaving
little to learn. Second, the panel's non-conservative forces are linear in its state. Piston theory and
Rayleigh damping are both linear. Their coefficients are therefore \emph{fitted} once, by least squares, and frozen, and only
the dissipation is trained. Third, the response is chaotic. Any model of it, however accurate, loses a single trajectory
within a horizon set by the largest Lyapunov exponent $\lambda$. That exponent is the exponential rate at which two
neighboring trajectories separate. Tracking beyond that horizon needs measurement. The correction that measurement applies to the model must enter the energy balance as a generalized force, like the aerodynamic force. Our working hypothesis is the following.
\begin{quote}\itshape
Consider a reduced-order model of a thin panel in high-speed flow, built from the data a challenge-problem
package provides and from published constants. Let it have four properties.
\begin{enumerate}
  \item[(i)] Its elastic and membrane energy is expressed in the panel's measured mode
  shapes. Its nonlinear stiffness force is exactly the gradient of that energy.
  \item[(ii)] Its aerodynamic and structural damping forces enter as generalized forces, with coefficients fitted by
  least squares and then held fixed.
  \item[(iii)] Its only trained term is a dissipation that can remove energy, but never add it, for every value of
  its parameters.
  \item[(iv)] It is advanced in time by a discrete-gradient step. That step replaces the energy gradient by its average along
  the step, which balances energy exactly at every step.
\end{enumerate}
Such a model will satisfy four claims.
\begin{enumerate}
  \item[(P1)] Its energy changes exactly by the work of its generalized forces minus its dissipation, for any trained
  weights and under any measurement correction.
  \item[(P2)] At the simulation operating conditions of Brouwer et al.~\cite{brouwer2021}, it reproduces the regimes of their
  reduced-order model. It does so outside its training temperature and from both buckled wells.
  \item[(P3)] It matches the experiment's response statistics wherever the package supports a comparison.
  \item[(P4)] It tracks the chaotic panel beyond its tracking horizon, given observations at the rate and points the
  experiment already uses.
\end{enumerate}
\end{quote}

We test the hypothesis on both challenge problems, with each claim's acceptance conditions fixed before its run. RC-19
carries the weight of the test, because its package supports comparisons with experiment. P1 is checked at random states
with amplitudes from $10^{-6}$ to $10^{6}$, and along every trajectory. P2 is tested at the two simulation operating conditions of Brouwer et al.~\cite{brouwer2021}, sixteen runs each, eight from each well. P3 is tested against the measured response frequency and displacement standard deviation. P4 is tested in a twin
experiment at the published sampling rate. There the reference model generates synthetic observations of its own
trajectory. HyMAX then tests whether the same construction serves a second
configuration unchanged. Every prediction was registered before its run. Four of the nine registered predictions broke (Table~\ref{tab:predictions}), and they are reported as findings
(Section~\ref{sec:predictions}). P1 and P2 held. P3 held for the frequency, and failed for the displacement standard deviation on the
attached-shock case, which Section~\ref{sec:rc19-open} traces to data the package lacks. P4 held in the twin experiment
and awaits a measured time series.

\paragraph{Place in the series} The earlier reports in this
series~\cite{aguilo2026spml,aguilo2026spmlfsi,aguilo2026spmlopen} reduce fields on a mesh with Whitney forms. These are finite-element bases whose discrete derivatives mirror the
continuous ones. This property is called the discrete de Rham structure. Thin panels are better described by their mode shapes. This report
therefore uses a modal Hamiltonian reduction, in which the energy is written in the amplitudes of the mode shapes. It keeps
the same commitments: exact energy accounting, dissipation that cannot add energy, and discrete balances that hold exactly.

\paragraph{Contributions} This work contributes:
\begin{enumerate}
\item A \emph{reference model built on the RC-19 measurements} (Sections~\ref{sec:methods-panel}
and~\ref{sec:methods-aero}). Its basis is the seven mode shapes measured on the panel before the test, made orthogonal by
L\"owdin's symmetric construction. Of all orthonormal sets, it gives the one nearest to the measured shapes. Its membrane stiffness comes from solving the in-plane elasticity problem for every pair of
modes. Its temperature field comes from the surface pressures measured by fifteen Kulite transducers on a rigid panel.
Those pressures give the adiabatic wall temperature, which is the temperature an insulated wall reaches under the
boundary layer. A fin equation then gives the panel's temperature distribution. It models steady conduction from the heated
panel into its cooler frame, calibrated on the panel and frame thermocouples.
\item A \emph{learned model whose energy balance holds for any parameters} (Proposition~\ref{prop:identity}). The rate of
change of the panel's energy always equals the power of the generalized forces minus the power the learned dissipation
absorbs. Training cannot violate this balance, because it follows from the form of the model. The dissipation is positive definite for
every parameter value. It can therefore only absorb energy (Lemma~\ref{lem:spd}). The generalized aerodynamic force is fitted by
least squares and then frozen (Section~\ref{sec:methods-head}). On the trained model, at random states with amplitudes from $10^{-6}$ to $10^{6}$, the worst relative residual of the
balance is $2.2\times10^{-15}$.
\item A \emph{discrete-gradient step for the learned model} (Proposition~\ref{prop:avf-learned}). The training record is a family of reference model trajectories, generated with the
average-vector-field (AVF) discrete gradient. The generalized-$\alpha$ method is the common alternative. The
two steps carry different truncation errors and numerical damping. A model integrated differently from its data drifts
from them even when its forces are exact. Proposition~\ref{prop:avf-learned} gives the learned model the same discrete-gradient step that generated its
training record. Each such step removes that drift and balances the model's energy exactly, for any trained weights.
The trained models reported here were integrated with the generalized-$\alpha$ step (Section~\ref{sec:methods-time}). The
discrete-gradient step applies to them unchanged.
\item \emph{Assimilation as a generalized force} (Section~\ref{sec:methods-observer}). Measured displacements at the
instrumented points correct the model through one more generalized force. It is proportional to the error in the
displacement and in its rate. Each time step still balances the panel's energy exactly, with the work of this force
reported as a separate term (Corollary~\ref{cor:observer}). The two feedback gains are not tuned. They are set from the
linearized equation for the model-to-truth error so that its largest Lyapunov exponent is negative. The nine published
measurement points observe all seven modes: no combination of modes is invisible at all nine points. In a twin experiment at the published \SI{5}{\kilo\hertz} rate, the
correction with an offline smoother holds the model's state to $0.0125$ thicknesses without noise. A causal hold of
the last sample reaches $0.325$ (Section~\ref{sec:rc19-assimilation}).
\item \emph{Comparisons as statistics} (Sections~\ref{sec:rc19-agreement} and~\ref{sec:rc19-regime}). A thirty-member ensemble of the trained model, at the periodic case's operating condition, places the mean dominant frequency at \SI{238.4}{\hertz}. That is \SI{1.0}{\percent} from the measured \SI{236}{\hertz}.
At the simulation operating conditions of Brouwer et al., it returns the regimes of their reduced-order model. It does so without that
model's \SI{113}{\hertz} peak, which the experiment does not show either. We also measure how an ensemble mean depends on the range of initial perturbations, or kicks, it samples. Every mean
is quoted with that range.
\item A \emph{measured bound on the aerodynamic enrichment} (Section~\ref{sec:rc19-aero}). The reduced-order model of Brouwer et al.\ replaces freestream quantities in piston theory by local ones, which scales the aerodynamic operators by \SI{5.4}{\percent}. Rather than build that enriched model, we measured what the
scaling changes in our reference model. We ran it at the periodic case's experimental operating condition in four
ensembles of $192$ runs each. They combine unscaled and scaled aerodynamic operators with release from each buckled well.
A two one-sided equivalence test~\cite{schuirmann1987tost} compared the mean dominant frequency with and without scaling.
It declares equivalence when the \SI{95}{\percent} interval of the difference of means lies inside a band, here
$\pm\SI{2}{\percent}$, registered before the runs. The mean frequency moved \SI{0.40}{\percent} in the lower well and \SI{0.51}{\percent} in the upper well, inside the
band. The fluctuation amplitude moved \SI{2.4}{\percent}. Its elasticity, the percent change in amplitude per percent
change in the coefficients, is $2.4/5.4 = 0.44$.
\item A \emph{measured data requirement for the attached-shock case} (Section~\ref{sec:rc19-open}). Forced at the measured pressure fluctuation level, the reference model's displacement standard deviation there is $0.06$
thicknesses. The measured value is $0.943$, about sixteen times larger. Ruling out every admissible cause by measurement
isolates one remaining mechanism. That mechanism is the dependence of the surface pressure on the panel's deformed shape. Testing it requires
time-resolved surface pressure, or a computation, on the deformed panel, which the package lacks.
\item \emph{One construction across two configurations} (Section~\ref{sec:hymax}). The HyMAX model uses the RC-19
construction unchanged. With no observations fed back, it tracks the reference model's trailing-edge displacement for \SI{64.9}{\milli\second}. Its relative root-mean-square
error over that window is \SI{3.67}{\percent}.
It also detects, rather than assumes, which modal bases its data support. A six-mode basis is declined, because its
whitened coordinates keep only eleven of its twelve state directions.
\end{enumerate}

\section{The challenge problems and what their packages hold}
\label{sec:problems}

Both packages are distributed by the working group and marked Distribution A. They are used as distributed, and no number in this report is fitted to a measured flutter response. Two
structural calibrations are stated where they are made: the thermal effectiveness on the buckling temperature rise, and
the HyMAX bending stiffness on the first frequency.

\subsection{AFRL/SSC RC-19 clamped thin panel}
\label{sec:problems-rc19}

\paragraph{The facility and the panel} The experiment~\cite{spottswood2019exploring,brouwer2021experiments,aepw4hswg} runs in a
blowdown supersonic tunnel with a test section
\SI{131.1}{\milli\meter} high and \SI{152.4}{\milli\meter} wide, at a nominal Mach number of $2.0$. RANS simulations in the
package give $1.92 \pm 0.02$ at the panel~\cite{aepw4hswg}. The panel is a steel plate of $253.9 \times 127 \times 0.635$~mm, set flush in the tunnel floor
(Figure~\ref{fig:rc19-setup}). A frame of $304.7 \times 152.4$~mm, \SI{12.85}{\milli\meter} thick, clamps it on all four
edges. A sealed cavity behind the panel holds the cavity pressure $p_c$, which sets the static
pressure differential across it. The incoming boundary layer is \SI{8.6}{\milli\meter} thick. In the heated runs the stagnation
temperature is about \SI{388}{\kelvin} at a stagnation pressure of \SI{346}{\kilo\pascal}. The flow heats the panel faster than its frame. The panel's temperature therefore rises above the frame's. The frame
constrains the panel's thermal expansion, which puts the panel into compression. It buckles once the rise exceeds about \SI{5.6}{\kelvin}, which is the flat panel's
buckling temperature rise reported in~\cite{brouwer2021}. One thermocouple on the panel and one on the frame measure their temperatures. The package's temperature rise
$\Delta T$ is the difference between the two readings.

\paragraph{The cases} The package defines three cases. Two have no shock on the panel and are labeled periodic and
chaotic. The periodic case runs at a cavity pressure of \SI{50.4}{\kilo\pascal} and a temperature rise of
\SI{12.76}{\kelvin}. Brouwer et al.~\cite{brouwer2021} report its measured dominant frequency as \SI{236}{\hertz}. The chaotic case runs at
\SI{51.5}{\kilo\pascal} and a temperature rise of \SI{14.74}{\kelvin}. Its cavity pressure is \SI{0.49}{\kilo\pascal}
above the RANS reference surface pressure $p_s = \SI{51.0}{\kilo\pascal}$ of~\cite{brouwer2021}. Brouwer et al.~\cite{brouwer2021} report its measured dominant frequency as \SI{257}{\hertz}. The third case, attached-shock,
mounts a wedge of length \SI{110.7}{\milli\meter} and height \SI{7.8}{\milli\meter} in the test section. Its leading
edge is \SI{86}{\milli\meter} upstream of the panel's. The wedge places an oblique shock, measured at \SI{35.7}{\degree} in the
schlieren, on the panel. That case runs at a cavity pressure of \SI{68.9}{\kilo\pascal} and is labeled periodic.

\paragraph{The instrumentation and what the package holds} For each case the package supplies the geometry of the tunnel, the wedge and the panel, and the operating conditions
with their error bars. It supplies seven mode shapes measured before the test on an eleven by eleven grid, with their
frequencies and damping (Figure~\ref{fig:rc19-modes}). The panel's edges were free during these modal measurements. It supplies the time-averaged surface pressure measured on a rigid
panel by $15$ Kulite transducers and $28$ scanner taps. For the attached-shock case it adds the schlieren and a pressure-sensitive paint field~\cite{brouwer2023pressure}. The response was measured by digital image
correlation at \SI{5}{\kilo\hertz}~\cite{aepw4hswg}. The package defines nine extraction points (Figure~\ref{fig:rc19-setup}b). It does not hold the mean and the standard
deviation of the displacement at those points. We take them from the published papers~\cite{brouwer2021,piccoloserafim2025}.
The package's spectral settings are Hamming windows of \SI{0.2048}{\second} with half overlap. The
resolution is therefore $1/\SI{0.2048}{\second} = \SI{4.88}{\hertz}$. Spectra and amplitudes in this report are read at three-quarter chord. The package supplies no time history of the response.

\begin{figure}[htbp]
  \centering
  \includegraphics[width=\textwidth]{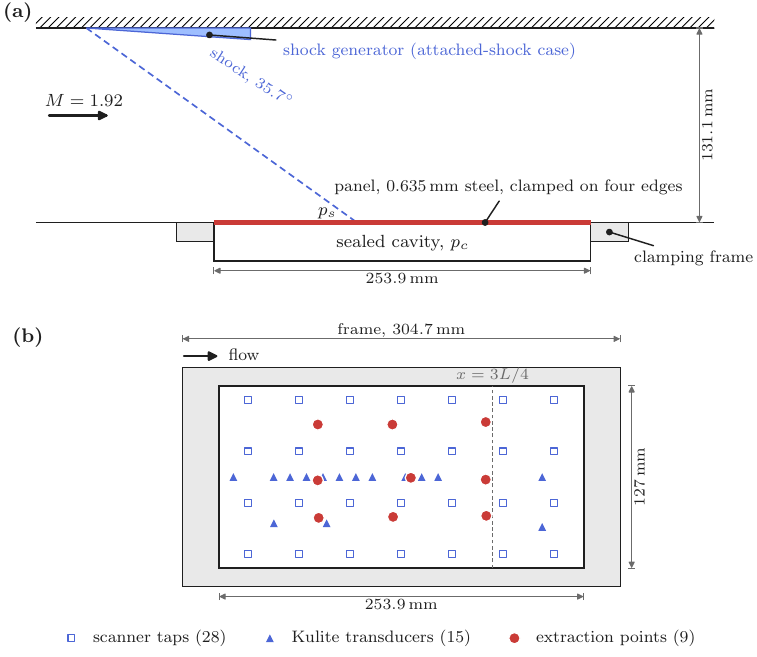}
  \caption{The RC-19 configuration, with dimensions from the package. (a) The test section in side view. The panel is
  flush in the floor over a sealed cavity at pressure $p_c$. Its flow side sees the surface pressure, whose RANS reference value is $p_s$. The shock generator and its shock, at the measured \SI{35.7}{\degree}, are present on the attached-shock case only. The
generator's wall and the cavity depth are schematic. (b) The planform, with the clamping frame shaded, the rigid-panel pressure taps and the nine extraction points. The dashed
line is the three-quarter-chord station at which spectra and amplitudes are read.}
  \label{fig:rc19-setup}
\end{figure}

\begin{figure}[htbp]
  \centering
  \includegraphics[width=\textwidth]{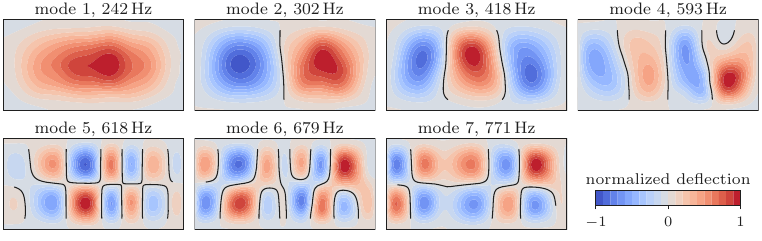}
  \caption{The seven RC-19 mode shapes measured before the test, with their measured frequencies. They are interpolated
by bicubic splines on the package's grid, as in Section~\ref{sec:methods-panel}. Each is normalized to unit peak with its
  larger lobe positive. Black lines are nodal lines, where the mode's deflection is zero. They separate regions that deflect in
  opposite directions.}
  \label{fig:rc19-modes}
\end{figure}

\subsection{UNSW HyMAX cantilever plate}
\label{sec:problems-hymax}

\paragraph{The facility and the plate} The HyMAX experiment of UNSW Canberra~\cite{currao2019} runs in the TUSQ Ludwieg tube of the University of Southern
Queensland. The test flow is a free jet of compression-heated air at Mach $5.85$. A forebody \SI{100}{\milli\meter} long carries a flat plate of 6061-T6, $130 \times 80 \times 2$~mm
(Figure~\ref{fig:hymax-setup}a). The plate is attached by a bolted joint at its hinge line and is free at its trailing
edge, \SI{230}{\milli\meter} from the forebody's leading edge. The shock is produced by a static shock generator~\cite{currao2021oscillating}. The generator is a wedge \SI{110}{\milli\meter} long with an
included angle of \SI{5}{\degree}. Its leading edge sits \SI{10}{\milli\meter} upstream of the forebody's and \SI{40}{\milli\meter} above its
surface. Its angle of attack is $\alpha_g = \SI{-3}{\degree}$. The package defines the flow deflection as $\alpha_g$ plus the
wedge angle, $\SI{-3}{\degree} + \SI{5}{\degree} = \SI{2}{\degree}$. That is
the case the working group makes mandatory, and \SI{10}{\degree} is optional. The inviscid oblique shock from its leading edge stands at \SI{11.16}{\degree}. That is the weak-branch solution of the
oblique-shock relation~\cite{ames1953naca1135} for a \SI{2}{\degree} deflection at Mach $5.85$. The shock meets the plate's
plane $\SI{40}{\milli\meter} / \tan \SI{11.16}{\degree} = \SI{203}{\milli\meter}$ behind the generator's leading edge.
That is \SI{93}{\milli\meter} behind the hinge, beside the first of the two shock positions the training trajectories use, $x_s = 91$ and \SI{117}{\milli\meter}. The flow lasts about
\SI{200}{\milli\second}, and its freestream pressure is not stationary over the first \SI{90}{\milli\second} of that
window (Figure~\ref{fig:hymax-setup}b).

\paragraph{What the package holds} The package supplies one run, with its freestream pressure and temperature and the geometry of both configurations. It
holds the surface pressure at four transducers sampled at \SI{50}{\kilo\hertz}, with their calibrations. Three sit on the
forebody at $40$, $50$ and \SI{60}{\milli\meter}, and one on the plate near its trailing edge at \SI{220}{\milli\meter}.
Schlieren images show the flexible plate and a rigid one at shock-generator angles of $2$ and \SI{10}{\degree}. The transducer on the plate sits under the impinging shock and reads about twice the
freestream pressure. The one structural measurement is the first frequency, \SI{83}{\hertz}~\cite{currao2019}. Xu and
Dowell~\cite{xu2025} compute \SI{84.7}{} and \SI{530.4}{\hertz} for the first two modes, which we carry as a computational
reference rather than as data. The package holds no displacement. It states that a measured trailing-edge displacement will be provided for
validation~\cite{aepw4hswg}. It is not used here.

\begin{figure}[htbp]
  \centering
  \includegraphics[width=\textwidth]{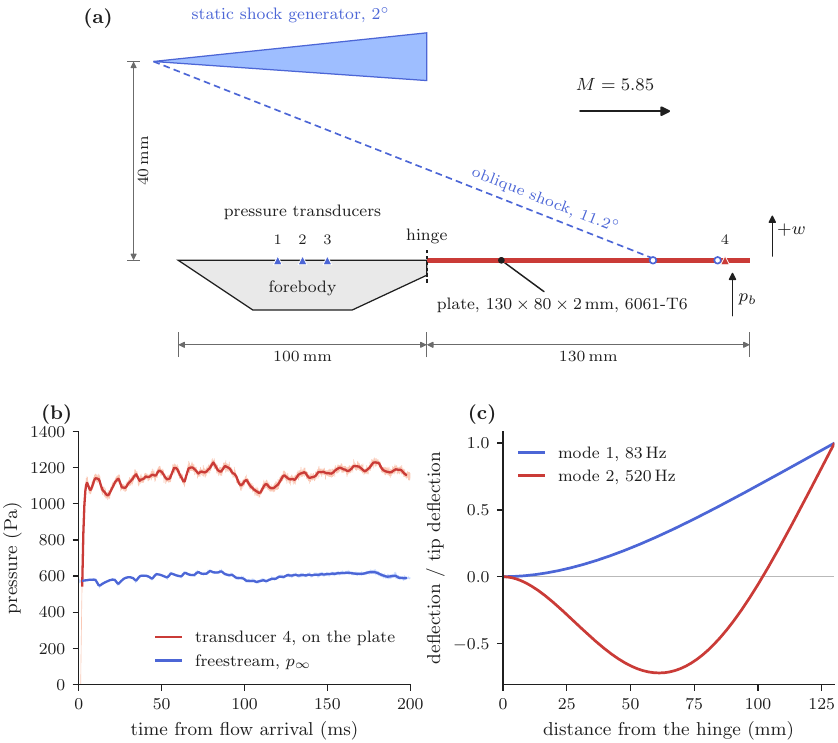}
  \caption{The HyMAX configuration, \SI{2}{\degree} case. (a) Side view, dimensioned from the UNSW data package. The vertical scale is exaggerated two to one; angles are labeled with their true values. It
shows the static shock generator, the forebody, the hinge, the plate and the four pressure transducers. The dashed line is
the inviscid oblique shock from the generator's leading edge at Mach $5.85$. Open circles mark the two shock positions of the training trajectories, $x_s = 91$ and
  \SI{117}{\milli\meter} behind the hinge. The back pressure $p_b$ is not
  measured, and $w$ is positive toward the shock side. (b) The measured freestream pressure and the trailing-edge transducer over the \SI{200}{\milli\second} flow window. These are the inputs of the aerodynamic force, raw (light) and averaged over \SI{2}{\milli\second} (dark). (c) The two clamped-free modes of the
  reference model, calibrated on the measured first frequency.}
  \label{fig:hymax-setup}
\end{figure}

\section{Background and related work}
\label{sec:background}

\paragraph{Panel flutter and its reduced-order models} Panel flutter has been studied for decades, with first-order
piston theory~\cite{lighthill1953oscillating,ashley1956piston} as its standard aerodynamic model~\cite{dowell1970review,
dowell1975plates}. Nonlinear panel flutter at supersonic and hypersonic speeds is reviewed in~\cite{mei1999review}, and
its aerothermoelastic setting in~\cite{mcnamara2011hypersonic}. A heated buckled plate in flutter can be chaotic, as
Dowell showed for a deterministic autonomous model~\cite{dowell1982chaotic}. Reduced-order models of geometrically
nonlinear structures are reviewed in~\cite{mignolet2013review}. The RC-19 experiments that the first challenge problem
draws on are reported in~\cite{spottswood2019exploring,brouwer2021experiments,brouwer2023pressure}.

\paragraph{The reduced-order model of Brouwer et al.} Brouwer et al.~\cite{brouwer2021} built a reduced-order model and the
simulations that accompany the RC-19 experiment. Their model projects a von K\'arm\'an plate on finite element modes and loads it with piston theory corrected for the boundary layer~\cite{brouwer2019enriched}. It references the static load to the computed surface pressure and damps the panel in Rayleigh form. It reproduces the published regimes, with two qualifications the authors state.
Their periodic simulation required a cavity pressure \SI{4.6}{\percent} from the measured one and a temperature rise
\SI{44.5}{\percent} from it. They attribute both offsets to the distribution of the temperature rise. Their reduced model
also shows a spectral peak near \SI{113}{\hertz} that the experiment does not show. They establish that the response is chaotic with a positive largest Lyapunov exponent, and report how it grows with
temperature. The exponent is computed from their reduced model's trajectories by the linearization of their Equations 19
to 23. Our reference model follows their
Sections 2.2 and 5 closely, and the differences are stated in Section~\ref{sec:methods}.

\paragraph{The other working-group analyses} Piccolo Serafim et al.~\cite{piccoloserafim2025} correlate
computations with the RC-19 measurements in the presence of an impinging shock. Xu and Dowell~\cite{xu2025} model the
HyMAX plate computationally, and Currao et al.~\cite{currao2019} report the HyMAX experiment, its response and its
first structural frequency. Ganti et al.~\cite{ganti2025panel} couple a wall-modeled large-eddy simulation to a
finite element plate for a cantilevered aluminum panel at Mach $2.11$. That panel overhangs a backward-facing step and
is not the RC-19 panel. Their simulated response frequency matches the measurement they compare against to within
\SI{1}{\hertz}. Brouwer~\cite{brouwer2025aepw4} summarizes the working group's cases and its first
round of results.

\paragraph{Structure in learned models} A learned model can be made to respect a physical structure in two ways.
The structure can be added to the objective as a penalty, as in physics-informed networks~\cite{raissi2019pinn}. It then
holds to the degree the penalty weight buys, and only where the penalty was sampled. Alternatively, it can be built into the parameterization, and then it holds for every value of the parameters. Examples are
Hamiltonian networks~\cite{greydanus2019hnn}, port-Hamiltonian models~\cite{vanderSchaftJeltsema2014} and
port-Hamiltonian networks~\cite{desai2021porthnn}. The models of this
report are of the second kind. The panel is open: the flow does work on it, and the work can have either sign. The
port-Hamiltonian form accounts for that work exactly instead of excluding it. The average-vector-field discrete gradient~\cite{quispel2008avf,gonzalez1996discrete,mclachlan1999discrete}
carries the accounting through each time step exactly.

\section{Notation}
\label{sec:notation}

Table~\ref{tab:notation} collects the symbols used in the rest of this report. Bold lowercase letters are vectors over the
retained modes, bold uppercase letters are matrices, and a calligraphic letter is a tensor of order four. An overdot is a
derivative in physical time $t$, and a prime on a nondimensional state is a derivative in the nondimensional time $\tau = \omega_1 t$. On the HyMAX beam, a
prime on a shape or a deflection is a derivative in $x$. Here $\omega_1$ is the
first modal circular frequency.

{\small
\begin{longtable}{p{0.20\textwidth}p{0.72\textwidth}}
  \caption{Symbols.}\label{tab:notation}\\[2pt]
  \toprule
  Symbol & Meaning \\
  \midrule
  \endfirsthead
  \toprule
  Symbol & Meaning \\
  \midrule
  \endhead
  \bottomrule
  \endfoot
  \multicolumn{2}{l}{\emph{Flow}} \\
  $M$, $\beta$ & Mach number; $\beta = \sqrt{M^2 - 1}$ \\
    $p_0$, $T_0$, $p_{0,\mathrm{r}}$ & stagnation pressure and temperature of a heated run; stagnation pressure of the rigid run \\
  $p_\infty$, $q_\infty$, $V_\infty$, $T_\infty$ & freestream static pressure, dynamic pressure, speed, static temperature \\
  $f_M$, $R_{\mathrm{air}}$, $\mathrm{Pr}$ & isentropic factor $1 + \tfrac{\gamma-1}{2}M^2$; gas constant of air; Prandtl number \\
  $\gamma$ & ratio of specific heats, $1.4$ \\
  $p_{\mathrm{pt}}$ & piston-theory pressure, Equation~\eqref{eq:piston-pressure} \\
  $p_K$ & static pressure at a Kulite transducer on the rigid RC-19 panel \\
  $p_c$, $p_s$, $\Delta p$ & RC-19 cavity pressure, reference surface pressure, $\Delta p = p_c - p_s$ \\
  $T_{aw}$, $r_c$ & adiabatic wall temperature and recovery factor, Equation~\eqref{eq:recovery} \\
  $M_\ell$ & local Mach number: on RC-19, from the measured surface pressure; on HyMAX, freestream or post-shock \\
  $q_\ell$, $\beta_\ell$, $V_\ell$ & local dynamic pressure, supersonic factor and flow speed on the HyMAX plate,
    Equation~\eqref{eq:local-piston} \\
  $p_b$, $x_s$, $\Pi$ & HyMAX back pressure, shock position from the hinge, measured shock pressure ratio \\
  $\alpha_g$ & HyMAX forebody angle of attack \\
  $M_n$, $\Theta_s$, $\Theta_w$ & HyMAX shock-normal Mach number, shock angle, flow deflection \\
  $p_\ell$, $T_\ell$ & HyMAX post-shock static pressure and temperature \\
  $p_4$, $p_{\mathrm{top}}$, $p_{\mathrm{st}}$ & HyMAX trailing-edge transducer pressure; flow-side pressure on the plate;
    static pressure distribution, Equation~\eqref{eq:hymax-port} \\
  $a$, $c$ & local piston-theory stiffness and damping coefficients on the HyMAX plate,
    Equation~\eqref{eq:local-piston} \\
  \midrule
  \multicolumn{2}{l}{\emph{Structure}} \\
  $x$, $y$, $t$ & streamwise and spanwise coordinates; physical time \\
  $n$ & number of retained modes \\
    $\Omega$, $h$, $\rho$ & RC-19 panel planform; plate thickness, of the RC-19 panel or the HyMAX plate by section; density \\
  $E$, $\nu$, $\alpha$, $\mathbb{C}$ & Young's modulus, Poisson's ratio, thermal expansion, plane-stress stiffness times $h$ \\
  $w$, $\mathbf{u}$ & transverse deflection; in-plane displacement \\
  $\mathcal{K}$, $U_b$, $U_m$, $\mathcal{W}_s$ & kinetic, bending and membrane energy of the plate; work of the static pressure \\
  $a_b$ & symmetric bilinear form of the bending energy, $U_b = \tfrac12 a_b(w, w)$ \\
  $\delta\mathcal{W}$, $\mathcal{W}$ & virtual work of the pressures on the HyMAX plate; total work of the aerodynamic force and
    the damping, Figure~\ref{fig:energy} \\
  $D_b$, $\mathbf{K}_b$ & flexural rigidity $E h^3 / [12(1-\nu^2)]$; modal bending stiffness matrix \\
  $\mathcal{L}_g$, $\tilde{\mathbf{M}}$ & Lagrangian $\mathcal{K} - U$; mass matrix of the orthogonalized shapes \\
  $\mathbf{u}_{ij}$, $\boldsymbol{\eta}_{ij}$ & in-plane displacement and membrane strain of the mode pair $(i,j)$ \\
  $\boldsymbol{\sigma}_T$, $\mathbf{u}_T$ & thermal stress per kelvin; in-plane displacement under the thermal strain alone \\
  $\mathbf{e}_q$ & membrane strain carried by the modal amplitudes, Equation~\eqref{eq:Um-split} \\
  $\boldsymbol{\varepsilon}$, $\boldsymbol{\eta}$, $\boldsymbol{\varepsilon}_T$ & in-plane strain, von K\'arm\'an
    membrane strain, thermal strain \\
  $\Delta T$, $s_T$, $\eta_T$ & temperature rise, panel thermocouple minus frame thermocouple; unit-mean shape of the panel's temperature excess; thermal effectiveness \\
  $\vartheta$, $\bar\vartheta$, $L_f$ & panel temperature excess over the frame, its panel mean, fin length,
    Equation~\eqref{eq:fin} \\
  $T_{\mathrm{frame}}$, $\bar T_{aw}$ & frame temperature; panel mean of the adiabatic wall temperature \\
  $k$, $h_c$ & thermal conductivity of the panel; film coefficient of the boundary layer \\
  $\phi_i$, $\boldsymbol{\phi}$, $\mathbf{M}$ & measured RC-19 mode shapes, singly and as a vector; their mass matrix, $(\mathbf{M})_{ij} = \rho h \int \phi_i \phi_j
    \,\mathrm{d}A$ \\
  $\tilde\phi_i$, $\tilde{\boldsymbol\phi}$ & unit-mass RC-19 mode shapes (measured, orthogonalized), singly and as a vector \\
  $\psi_i$, $\boldsymbol\psi$ & unit-mass HyMAX mode shapes (clamped-free), singly and as a vector \\
  $b$, $L$, $A_c$, $I$, $EI$ & HyMAX plate width, length, section area, second moment of area, bending stiffness \\
  $\xi$, $\kappa_i$, $\chi_i$ & position along the beam $x/L$; clamped-free eigenvalue roots; shape coefficient,
    Equation~\eqref{eq:cf-modes} \\
  $\mathbf{I}$, $\delta_{ij}$ & identity matrix; Kronecker delta \\
  $f_i$, $\omega_i$, $\boldsymbol{\omega}$ & modal frequency, $\omega_i = 2\pi f_i$; the vector of the $\omega_i$ \\
  $\mathbf{q}$, $\mathbf{p}$ & modal displacement and momentum, $\mathbf{p} = \dot{\mathbf{q}}$ \\
  $H$, $U$ & Hamiltonian; its potential part \\
  $\mathbf{G}$, $n_T$ & thermal preload operator, Equation~\eqref{eq:thermal-operator}; stress resultant per kelvin \\
  $\mathbf{K}(\Delta T)$ & linear stiffness of the heated flat panel, Equation~\eqref{eq:reference model-H} \\
  $\mathcal{D}$ & quartic membrane tensor, Equation~\eqref{eq:quartic} \\
  $\boldsymbol\ell$, $\boldsymbol\ell_f$ & generalized load of a unit pressure; of a measured pressure field \\
  $\mathbf{C}$, $\alpha_d$, $\beta_d$, $\zeta$ & reference model damping; Rayleigh coefficients; HyMAX modal damping ratio \\
  $c_i$ & Rayleigh damping of mode $i$ in the units of $\tau$, $(\alpha_d + \beta_d \omega_i^2)/\omega_1$ \\
  $\mathbf{A}$, $\mathbf{B}$ & aerodynamic stiffness and damping operators \\
  $\mathbf{f}$, $\mathbf{g}$ & non-conservative generalized force; prescribed drive, dimensional in the reference model and
    divided by $w_s \omega_1^2$ in the learned model \\
  $\mathbf{J}$, $\mathbf{R}$ & skew-symmetric structure matrix; dissipation matrix, Equation~\eqref{eq:phs} \\
  $\Delta t$, $\Delta\tau$ & integrator step size, in physical time and in $\tau$ \\
  $\rho_\infty$ & spectral radius at infinite frequency of the generalized-$\alpha$ step \\
  $\mathbf{q}_m$, $\mathbf{p}_m$, $\mathbf{f}_m$, $\overline{\nabla U}$ & step midpoint values; averaged potential gradient,
    Equation~\eqref{eq:avf} \\
  \midrule
  \multicolumn{2}{l}{\emph{Learned model}} \\
  $\Wvec$, $\Wdot$, $\tau$ & nondimensional modal displacement (thicknesses), its rate, time $\tau = \omega_1 t$ \\
  $w_s$ & displacement scale: $W_1 = 1$ is one thickness of peak first-mode deflection \\
  $\mathbf{x}$, $\mathbf{z}$ & state: $(\mathbf{q}, \mathbf{p})$ for the reference model, $(\Wvec, \Wdot)$ for the learned
    model; conditioning parameters \\
  $U_{\mathrm{ref}}$ & potential of the reference model, before nondimensionalization \\
  $N_m$, $N_t^{(m)}$, $N_r$, $\Delta t_r$ & number of record members; time samples of member $m$; time samples of the whole
    record; time-sample spacing \\
  $\bar{\mathbf{x}}$, $\boldsymbol{\Sigma}_x$ & record mean and covariance of the state \\
  $\mathbf{V}$, $\mathbf{S}$ & eigenvectors of $\boldsymbol{\Sigma}_x$; diagonal matrix of the square roots of its eigenvalues \\
  $\tilde{\mathbf{x}}$, $\bar{\tilde x}$, $\sigma_{\tilde x}$ & rotated and scaled state; the scalar mean and standard
    deviation of its entries over the record \\
  $\Dmat$, $\mathbf{d}$, $\Bmat$, $r$ & learned dissipation, Equation~\eqref{eq:dissipation}; its diagonal parameters,
    low-rank factor, and rank \\
  $\gamma_D$, $\boldsymbol{\Lambda}$ & dissipation floor; diagonal part of $\Dmat$ \\
  $\Qvec$, $\mathbf{A}_h$, $\mathbf{b}_h$ & fitted force term, Equation~\eqref{eq:head}; its coefficient matrix and
    constant offset \\
  $\mathbf{r}$, $\mathbf{v}$ & fit target, the residual acceleration of Equation~\eqref{eq:head-target}; features of the
    fitted force term \\
  $\hat{\mathbf{z}}$, $\bar z_j$, $\sigma_{z,j}$ & standardized conditioning parameters, Equation~\eqref{eq:z-standard};
    record mean and standard deviation of entry $j$ \\
  $\epsilon_r$ & ridge weight of the fit, Equation~\eqref{eq:ridge} \\
  $\mathbf{A}_W$, $\mathbf{A}_{W'}$, $\mathbf{A}_z$ & blocks of $\mathbf{A}_h$ acting on displacements, rates and
    conditioning parameters \\
  $\mathbf{F}$ & residual of the AVF step, solved by Newton's method \\
    $\theta$, $J$ & trained parameters; objective, Equation~\eqref{eq:loss} \\
  $\mathbf{s}$, $\hat{\mathbf{s}}$, $N$ & recorded and predicted states in whitened coordinates; number of terms in the
    data sum \\
  $\mu$, $c_u$, $s_{\max}$ & weight of the bound term; bound radius factor; largest state norm in the record \\
  $R_{\mathrm{sat}}$, $R_{\mathrm{knee}}$, $\varrho$ & radius and knee of the radial saturation of the rollout; its radial map \\
  \midrule
  \multicolumn{2}{l}{\emph{Chaos and assimilation}} \\
  $\lambda$ & largest Lyapunov exponent \\
  $\boldsymbol{\delta}$ & difference between two neighboring solutions, Equation~\eqref{eq:error} \\
  $\varepsilon_0$, $\sigma$, $T_h$ & initial separation; state scale, about one thickness; tracking horizon,
    Equation~\eqref{eq:horizon} \\
  $n_s$ & number of instrumented points \\
  $\Phimat$, $\mathbf{M}_\Phi$ & mode shapes at the instrumented points; observability Gramian $\Phimat\Phimat^{\mathsf{T}}$ \\
  $\mathbf{y}$ & displacement readings at the instrumented points \\
  $\mathbf{y}_{\mathrm{obs}}$, $\mathbf{f}_{\mathrm{obs}}$ & observed readings; assimilation force \\
  $\Delta t_s$, $t_j$, $\upsilon$ & observation interval; arrival time of sample $j$; elapsed fraction $(t - t_j)/\Delta t_s$ \\
  $k_p$, $k_d$ & observer gains on the reading and on its rate \\
  \midrule
  \multicolumn{2}{l}{\emph{Statistics}} \\
  $N_e$ & number of ensemble members \\
  $e_{\mathrm{rms}}$, $N_w$, $w_{\mathrm{te}}$ & HyMAX relative root-mean-square error, Equation~\eqref{eq:hymax-rms};
    time samples in its window; trailing-edge displacement \\
  $\delta_{\mathrm{eq}}$, $s_{\mathrm{pool}}$, $z^\ast_q$ & equivalence tolerance; pooled standard deviation; standard normal
    quantile at probability $q$ \\
\end{longtable}}

\section{Methods}
\label{sec:methods}

Each case is modeled at two levels, and the distinction runs through the rest of this report. The \emph{reference model} is a reduced physical model of the panel, assembled from the challenge package and from
published constants with no training. It is a Hamiltonian in the measured modes, a generalized aerodynamic force, and an integrator
whose discrete energy balance is exact. The \emph{learned model} has the same
Hamiltonian form. Its non-conservative forces are fitted and its dissipation is trained on a \emph{record}. A record is a set of reference model trajectories (Section~\ref{sec:methods-record}). After training, the learned model is called the trained model. Neither package holds a time-resolved response. The record is therefore generated
rather than measured. The experimental measurements the package does hold, the response statistics,
the surface pressures and the schlieren, grade the reference model against the experiment. The reference model grades
the learned model. Table~\ref{tab:levels} states which level produced each result reported here.
Where a result was first computed on the reference model, it was repeated on the trained model with the protocol
unchanged. The prediction for each repeat was registered before its run (Section~\ref{sec:predictions}). On RC-19, the trained model's fitted force term is a linear least-squares fit of the generalized aerodynamic and
damping forces. It reproduces the reference model's non-conservative forces over the record. The relative root-mean-square error is $3.6\times10^{-7}$, which is about one part in three million
(Section~\ref{sec:methods-head}). Agreement between the two levels therefore measures the fitted structure-preserving model
and its integrator. It does not show that the
dynamics were learned by training.

\begin{table}[htbp]
  \centering
  \caption{The level of the construction behind each result.}
  \label{tab:levels}
  \small
  \begin{tabular}{p{0.52\textwidth}p{0.38\textwidth}}
    \toprule
    Result & Produced by \\
    \midrule
    Energy identity at random states & reference model, and the trained model \\
    Discrete budget along a trajectory & reference model \\
    Shock pressure ratio and adiabatic wall temperature & derived from experimental measurements \\
    Regime map at the simulation operating conditions of Brouwer et al. & reference model and trained model, sixteen repeats each, eight released from each buckled shape \\
    Ensemble response frequency and amplitude & reference model and trained model, thirty realizations each \\
    Aerodynamic sensitivity & reference model, four ensembles of $192$ \\
        Lyapunov exponent & reference model and trained model, at $5.7$, $7.2$, $8.7$, $10.2$, $11.7$ and \SI{14.74}{\kelvin} \\
    Assimilation twin experiment & reference model, observing itself \\
    Force fit, spectral band, amplitude and mean deflection against the record & learned model \\
    HyMAX trailing-edge rollout & learned model \\
    \bottomrule
  \end{tabular}
\end{table}

\subsection{The RC-19 reference model}
\label{sec:methods-panel}

The panel is $253.9 \times 127 \times 0.635$~mm and clamped on its four edges (Figure~\ref{fig:rc19-setup}). The
coordinate $x$ runs streamwise from its leading edge, and $y$ spanwise from its centerline. The model follows Sections 2.2 and 5 of Brouwer et al.~\cite{brouwer2021}, with two differences. It is built on the
package's measured mode shapes rather than on a finite element basis. Its temperature field is derived from the package's
own pressures.

\paragraph{The continuum model} Consider a von K\'arm\'an plate~\cite{dowell1975plates} occupying the planform $\Omega$, of thickness $h$ and density
$\rho$. Its boundary $\partial\Omega$ is the panel's four edges, where the frame clamps it. The plate has transverse deflection $w(x,y,t)$, positive away from the cavity and into the flow, and in-plane displacement
$\mathbf{u}(x,y,t)$.
Four energies describe it. The kinetic and bending energies are
\begin{equation}
  \mathcal{K}[w] \;=\; \frac12 \int_\Omega \rho h\, \dot w^2 \,\mathrm{d}A, \qquad
  U_b[w] \;=\; \frac12 \int_\Omega D_b \Big[ (\nabla^2 w)^2 - 2(1 - \nu)\big( w_{,xx} w_{,yy} - w_{,xy}^2 \big) \Big]
  \,\mathrm{d}A ,
  \label{eq:K-Ub}
\end{equation}
with the flexural rigidity $D_b = E h^3 / [12 (1 - \nu^2)]$. A subscript after a comma denotes partial differentiation,
as in $w_{,xy} = \partial^2 w / \partial x\, \partial y$. The membrane energy couples $w$ to $\mathbf{u}$,
\begin{equation}
\begin{gathered}
  U_m[w] \;=\; \min_{\mathbf{u} = 0 \text{ on } \partial\Omega} \; \frac12 \int_\Omega
  \big( \boldsymbol{\varepsilon}(\mathbf{u}) + \boldsymbol{\eta}(w) - \boldsymbol{\varepsilon}_T \big) : \mathbb{C} :
  \big( \boldsymbol{\varepsilon}(\mathbf{u}) + \boldsymbol{\eta}(w) - \boldsymbol{\varepsilon}_T \big) \,\mathrm{d}A, \\
  \boldsymbol{\eta}(w) = \tfrac12 \nabla w \otimes \nabla w .
\end{gathered}
  \label{eq:membrane-energy}
\end{equation}
Here $\boldsymbol{\varepsilon}(\mathbf{u}) = \tfrac12 (\nabla\mathbf{u} + \nabla\mathbf{u}^{\mathsf{T}})$ is the linear
in-plane strain. $\mathbb{C}$ is the plane-stress stiffness
multiplied by $h$, with Young's modulus $E$ and Poisson's ratio $\nu$. $\boldsymbol{\varepsilon}_T = \alpha\, \eta_T\,
\Delta T\, s_T(x,y)\, \mathbf{I}$ is the thermal strain of a temperature rise $\Delta T$ above the frame. Here $\alpha$ is the coefficient of thermal
expansion and $\mathbf{I}$ the in-plane identity tensor. It is distributed with
the unit-mean shape $s_T$ and scaled by the thermal effectiveness $\eta_T$. Holding $\mathbf{u} = 0$ on $\partial\Omega$ is the clamped condition in the plane. The static pressure differential $\Delta p$ across the panel does the work
\begin{equation}
  \mathcal{W}_s[w] \;=\; \Delta p \int_\Omega w \,\mathrm{d}A ,
  \label{eq:Ws}
\end{equation}
plus the work of the measured surface pressure's deviation from its reference value, when that deviation is kept. The
total potential energy is $U = U_b + U_m - \mathcal{W}_s$. The flow's unsteady pressure $p_{\mathrm{pt}}$ is not conservative. It enters as the
generalized aerodynamic force of Section~\ref{sec:methods-aero}, through its virtual work $-\int_\Omega p_{\mathrm{pt}}\, \delta w
\,\mathrm{d}A$. The reference model never evaluates $D_b$. The measured frequencies stand in for the bending energy, as the
term list below explains.

\paragraph{The basis} The challenge package supplies seven mode shapes measured on the panel before the test
(Figure~\ref{fig:rc19-modes}). The basis retains all seven, giving $n = 7$ retained modes. The package
supplies no further measured shapes. Each measured shape is
interpolated by a bicubic spline on the package's eleven by eleven grid.
The integrals below are evaluated by four-point Gauss--Legendre quadrature on every spline cell. A bicubic spline is a
cubic polynomial in each variable on every cell. A product of two spline factors therefore has degree at most six in
each variable. The four-point rule is exact through degree seven. Hence the mass, aerodynamic and load integrals carry no quadrature error. The membrane and thermal integrals also
contain finite-element fields, and they carry the discretization error of those meshes. The measured shapes are not exactly orthogonal in the mass inner
product. With the mass matrix $(\mathbf{M})_{ij} = \rho h \int \phi_i \phi_j \,\mathrm{d}A$, the largest normalized cross term
$|(\mathbf{M})_{ij}|/\sqrt{(\mathbf{M})_{ii} (\mathbf{M})_{jj}}$, over $i \ne j$, is $0.11$. They are therefore replaced by their symmetric L\"owdin~\cite{lowdin1950nonorthogonality}
orthogonalization, $\tilde{\boldsymbol{\phi}} = \mathbf{M}^{-1/2} \boldsymbol{\phi}$. Of all orthonormal sets, this one is nearest to the measured shapes in the mass norm. That is what justifies assigning
the measured frequencies to it. The deflection is $w = \sum_i q_i(t)\, \tilde\phi_i(x,y)$. Unit modal mass, $\rho h \int \tilde\phi_i^2 \,\mathrm{d}A =
1$, gives each $\tilde\phi_i$ units of $\mathrm{kg^{-1/2}}$. Since $w$ is in meters, $\mathbf{q}$ is in
$\mathrm{m\,kg^{1/2}}$.

\paragraph{From the continuum model to the Hamiltonian} The continuum model has the Lagrangian $\mathcal{K}[w] -
U[w]$. Hamilton's principle, with the virtual work of the unsteady pressure added, gives the von K\'arm\'an equations of
the plate. The reduced model applies the same principle to deflections in the span of the modes only, $w = \sum_i q_i
\tilde\phi_i$. This is the Ritz--Galerkin method. Every energy then becomes a function of the modal amplitudes, and the
virtual work becomes $\mathbf{f} \cdot \delta\mathbf{q}$ with $\mathbf{f} = -\int_\Omega p_{\mathrm{pt}}\, \tilde{\boldsymbol{\phi}}
\,\mathrm{d}A$. The Hamiltonian below is therefore the continuum total energy $\mathcal{K} + U$, evaluated at the reduced
deflection. Its rate of change is the continuum energy balance restricted to the modes. 

Three steps carry out the substitution. First, substituting the expansion into the kinetic energy of Equation~\eqref{eq:K-Ub} gives $\mathcal{K} = \tfrac12 \dot{\mathbf{q}}^{\mathsf{T}}
\tilde{\mathbf{M}} \dot{\mathbf{q}}$, with $(\tilde{\mathbf{M}})_{ij} = \rho h \int \tilde\phi_i \tilde\phi_j \,\mathrm{d}A$. The orthogonalized
shapes have unit modal mass: $\tilde{\mathbf{M}} = \mathbf{M}^{-1/2}\mathbf{M}\,\mathbf{M}^{-1/2} = \mathbf{I}$, because
$\mathbf{M}$ is symmetric positive definite, as is its inverse square root. Hence $\mathcal{K} = \tfrac12
|\dot{\mathbf{q}}|^2$.

Second, substituting the expansion into the potential energy, $U = U_b + U_m - \mathcal{W}_s$, gives a
polynomial in $\mathbf{q}$,
\begin{equation}
  U(\mathbf{q}) \;=\;
  \underbrace{\tfrac12\, \mathbf{q}^{\mathsf{T}} \operatorname{diag}(\boldsymbol{\omega}^2)\, \mathbf{q}}_{\text{bending}}
  \;+\; \underbrace{\tfrac12\, \eta_T n_T\, \Delta T\; \mathbf{q}^{\mathsf{T}} \mathbf{G}\, \mathbf{q}}_{\text{thermal preload}}
  \;+\; \underbrace{\tfrac14\, \mathcal{D}_{ijkl}\, q_i q_j q_k q_l}_{\text{membrane hardening}}
  \;\underbrace{-\; \Delta p\; \boldsymbol{\ell} \cdot \mathbf{q} \;-\; \boldsymbol{\ell}_f \cdot \mathbf{q}}_{\text{static load}} .
  \label{eq:U}
\end{equation}
In Equation~\eqref{eq:U}, $\omega_i$ is the circular frequency of mode $i$. The scalar $n_T$ is the stress resultant
per kelvin, and $\mathbf{G}$ the preload operator of Equation~\eqref{eq:thermal-operator}. The tensor $\mathcal{D}$ is the
quartic membrane tensor of Equation~\eqref{eq:quartic}. The vector $\boldsymbol\ell$ is the generalized load of a unit
pressure, and $\boldsymbol\ell_f$ is that of the measured surface pressure's deviation from $p_s$. The bending and load
terms follow directly from the expansion, as the list below shows. The thermal and membrane terms
both come from $U_m$. Write the strain inside Equation~\eqref{eq:membrane-energy} as $\boldsymbol{\varepsilon}(\mathbf{u}) +
\boldsymbol{\eta} - \boldsymbol{\varepsilon}_T$. With the expansion, $\boldsymbol{\eta} = \sum_{ij} q_i q_j
\boldsymbol{\eta}_{ij}$, where $\boldsymbol{\eta}_{ij} = \tfrac12 \operatorname{sym}(\nabla\tilde\phi_i \otimes
\nabla\tilde\phi_j)$ and $\operatorname{sym}\mathbf{X} = \tfrac12(\mathbf{X} + \mathbf{X}^{\mathsf{T}})$. The thermal strain is $\boldsymbol{\varepsilon}_T = \eta_T \Delta T\, \alpha s_T \mathbf{I}$. For fixed
$\mathbf{q}$, the in-plane displacement minimizes $U_m$, whose integrand is quadratic in $\mathbf{u}$. The minimizer
therefore solves a linear plane-stress problem, loaded linearly by $\boldsymbol{\eta}$ and by $\boldsymbol{\varepsilon}_T$.
By superposition, $\mathbf{u} = \sum_{ij} q_i q_j \mathbf{u}_{ij} + \eta_T \Delta T\, \mathbf{u}_T$. Each
$\mathbf{u}_{ij}$ satisfies $\int \boldsymbol{\varepsilon}(\delta\mathbf{u}) : \mathbb{C} : \big(
\boldsymbol{\varepsilon}(\mathbf{u}_{ij}) + \boldsymbol{\eta}_{ij} \big) \,\mathrm{d}A = 0$. Likewise, $\mathbf{u}_T$ satisfies
$\int \boldsymbol{\varepsilon}(\delta\mathbf{u}) : \mathbb{C} : \big( \boldsymbol{\varepsilon}(\mathbf{u}_T) - \alpha s_T
\mathbf{I} \big) \,\mathrm{d}A = 0$. Both hold for every $\delta\mathbf{u}$ that vanishes on $\partial\Omega$, and both
displacements vanish there. With the thermal stress per kelvin
\begin{equation}
  \boldsymbol{\sigma}_T \;=\; \mathbb{C} : \big( \boldsymbol{\varepsilon}(\mathbf{u}_T) - \alpha\, s_T\, \mathbf{I} \big) ,
  \label{eq:sigmaT}
\end{equation}
the total strain becomes $\mathbf{e}_q + \eta_T \Delta T\, \mathbb{C}^{-1} : \boldsymbol{\sigma}_T$. Here $\mathbf{e}_q =
\sum_{ij} q_i q_j \big( \boldsymbol{\varepsilon}(\mathbf{u}_{ij}) + \boldsymbol{\eta}_{ij} \big)$. Expanding the quadratic
form, and using the symmetry of $\mathbb{C}$ for the cross term, $U_m$ splits into three parts,
\begin{equation}
  U_m \;=\; \underbrace{\tfrac12 \int \mathbf{e}_q : \mathbb{C} : \mathbf{e}_q \,\mathrm{d}A}_{\text{quartic in } \mathbf{q}}
  \;+\; \underbrace{\eta_T \Delta T \int \mathbf{e}_q : \boldsymbol{\sigma}_T \,\mathrm{d}A}_{\text{quadratic in } \mathbf{q}}
  \;+\; \underbrace{\tfrac12 (\eta_T \Delta T)^2 \int \boldsymbol{\sigma}_T : \mathbb{C}^{-1} : \boldsymbol{\sigma}_T
  \,\mathrm{d}A}_{\text{constant}} .
  \label{eq:Um-split}
\end{equation}
The constant does not affect the dynamics and is dropped. The middle term of Equation~\eqref{eq:Um-split} simplifies because $\boldsymbol{\sigma}_T$ is a self-equilibrated
in-plane stress. By the definition of $\mathbf{u}_T$, it satisfies $\int \boldsymbol{\varepsilon}(\delta\mathbf{u}) :
\boldsymbol{\sigma}_T \,\mathrm{d}A = 0$ for every in-plane displacement $\delta\mathbf{u}$ that vanishes on the edges $\partial\Omega$. Each $\mathbf{u}_{ij}$ vanishes
there. Hence the work of $\boldsymbol{\sigma}_T$ on $\boldsymbol{\varepsilon}(\mathbf{u}_{ij})$
is zero, and only $\boldsymbol{\eta}_{ij}$ remains. Because the stress $\boldsymbol{\sigma}_T$ is symmetric, its contraction with $\operatorname{sym}\mathbf{X}$ equals its
contraction with $\mathbf{X}$. Hence $\boldsymbol{\eta}_{ij} : \boldsymbol{\sigma}_T = \tfrac12 \boldsymbol{\sigma}_T :
(\nabla\tilde\phi_i \otimes \nabla\tilde\phi_j)$. The middle term is then $\tfrac12 \eta_T \Delta T \sum_{ij} q_i q_j \int
\boldsymbol{\sigma}_T : (\nabla\tilde\phi_i \otimes \nabla\tilde\phi_j) \,\mathrm{d}A$. This is the thermal preload of
Equation~\eqref{eq:U}, with
\begin{equation}
  G_{ij} \;=\; \frac{1}{n_T} \int \boldsymbol{\sigma}_T : \big( \nabla\tilde\phi_i \otimes \nabla\tilde\phi_j \big)
  \,\mathrm{d}A .
  \label{eq:thermal-operator}
\end{equation}
Here $n_T = -E h \alpha / (1 - \nu)$ is the in-plane stress resultant per kelvin of uniform heating. For uniform
heating, $s_T = 1$, and $\mathbf{u}_T = 0$ satisfies its defining equation. The trial stress $\mathbb{C} : (-\alpha
\mathbf{I})$ is then constant, and a constant stress does no work on a displacement that vanishes on $\partial\Omega$.
The plane-stress stiffness gives $\mathbb{C} : \mathbf{I} = E h\, (1 + \nu)/(1 - \nu^2)\, \mathbf{I} = E h/(1 - \nu)\,
\mathbf{I}$, which gives $\boldsymbol{\sigma}_T = n_T \mathbf{I}$.
$\mathbf{G}$ is thus the preload operator per kelvin of $\eta_T \Delta T$, in units of $n_T$. The first term of
Equation~\eqref{eq:Um-split} is the membrane hardening of Equation~\eqref{eq:U}, with
\begin{equation}
  \mathcal{D}_{ijkl} \;=\; 2\, \operatorname{Sym} \int \big( \boldsymbol{\varepsilon}(\mathbf{u}_{ij}) +
  \boldsymbol{\eta}_{ij} \big) : \mathbb{C} : \big( \boldsymbol{\varepsilon}(\mathbf{u}_{kl}) + \boldsymbol{\eta}_{kl} \big)
  \,\mathrm{d}A ,
  \label{eq:quartic}
\end{equation}
where $\operatorname{Sym}$ averages over all twenty-four permutations of $(i,j,k,l)$. The factor and the average
follow from the first term of Equation~\eqref{eq:Um-split}. Call its integral $X_{ijkl}$. The term is then $\tfrac12
X_{ijkl}\, q_i q_j q_k q_l = \tfrac14 (2 X_{ijkl})\, q_i q_j q_k q_l$. The product $q_i q_j q_k q_l$ is unchanged by any
permutation of the indices. Replacing $2X$ by its average over permutations therefore leaves the sum unchanged.

Third, the Lagrangian $\mathcal{L}_g = \mathcal{K} - U$ defines the generalized momenta $\mathbf{p} = \partial
\mathcal{L}_g / \partial \dot{\mathbf{q}}$. The modal mass is unity, which gives $\mathbf{p} = \dot{\mathbf{q}}$. The Legendre transform
$H = \mathbf{p} \cdot \dot{\mathbf{q}} - \mathcal{L}_g$ then gives the total energy, $H = \tfrac12 |\mathbf{p}|^2 +
U(\mathbf{q})$. With Equation~\eqref{eq:U}, and the two quadratic terms combined,
\begin{equation}
\begin{split}
  H(\mathbf{q}, \mathbf{p}; \Delta T, \Delta p) \;=\; {} & \tfrac12 |\mathbf{p}|^2
  \;+\; \tfrac12\, \mathbf{q}^{\mathsf{T}} \underbrace{\big[ \operatorname{diag}(\boldsymbol{\omega}^2)
        + \eta_T n_T\, \Delta T\, \mathbf{G} \big]}_{\mathbf{K}(\Delta T)} \mathbf{q} \\
  & \;+\; \tfrac14\, \mathcal{D}_{ijkl}\, q_i q_j q_k q_l
  \;-\; \Delta p\; \boldsymbol{\ell} \cdot \mathbf{q}
  \;-\; \boldsymbol{\ell}_f \cdot \mathbf{q} .
\end{split}
  \label{eq:reference model-H}
\end{equation}
The matrix $\mathbf{K}(\Delta T)$ is the linear stiffness of the heated flat panel. For the measured temperature fields, the computed $\mathbf{G}$ is positive definite. Since $\eta_T > 0$ and
$n_T < 0$, heating then lowers every quadratic form $\mathbf{q}^{\mathsf{T}} \mathbf{K}(\Delta T)\, \mathbf{q}$. The
flat panel buckles at the smallest $\Delta T$ at which $\mathbf{K}(\Delta T)$ loses positive definiteness.
The quartic term of Equation~\eqref{eq:reference model-H} contributes a cubic force to Hamilton's
equations. That force is a true gradient only because $\mathcal{D}$ is symmetric.

\begin{lemma}[The cubic force is a gradient]
\label{lem:gradient}
If $\mathcal{D}$ is invariant under every permutation of its indices, then $\partial_{q_i} \big( \tfrac14
\mathcal{D}_{jklm} q_j q_k q_l q_m \big) = \mathcal{D}_{ijkl}\, q_j q_k q_l$.
\end{lemma}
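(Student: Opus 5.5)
The proof is a direct computation with the product rule, followed by relabelling of summation indices. We differentiate the quartic form $\tfrac14 \mathcal{D}_{jklm} q_j q_k q_l q_m$ with respect to $q_i$, using $\partial q_j/\partial q_i = \delta_{ij}$. The product of four factors gives four terms:
\begin{equation*}
  \tfrac14 \mathcal{D}_{jklm}\big( \delta_{ij} q_k q_l q_m + q_j \delta_{ik} q_l q_m + q_j q_k \delta_{il} q_m + q_j q_k q_l \delta_{im} \big).
\end{equation*}
Contracting each Kronecker delta puts the free index $i$ in a different slot of $\mathcal{D}$ in each term. The four terms are $\mathcal{D}_{iklm} q_k q_l q_m$, $\mathcal{D}_{jilm} q_j q_l q_m$, $\mathcal{D}_{jkim} q_j q_k q_m$ and $\mathcal{D}_{jkli} q_j q_k q_l$, each with the factor $\tfrac14$.

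Next we rename the dummy indices in each term so that the three contracted indices are called $j,k,l$ in order. For example, the second term becomes $\mathcal{D}_{jikl} q_j q_k q_l$. We then use the hypothesis that $\mathcal{D}$ is invariant under every permutation of its indices. A single transposition moves $i$ to the first slot in each term: $\mathcal{D}_{jikl} = \mathcal{D}_{ijkl}$, and likewise for the third and fourth terms. All four terms therefore equal $\mathcal{D}_{ijkl} q_j q_k q_l$. Their sum times $\tfrac14$ gives the claim.

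We expect no real obstacle. The only care needed is to keep free and dummy indices apart during the relabelling. It is also worth remarking why the hypothesis is needed. Without full symmetry the derivative is $\tfrac14\big(\mathcal{D}_{ijkl}+\mathcal{D}_{jikl}+\mathcal{D}_{jkil}+\mathcal{D}_{jkli}\big)q_jq_kq_l$, which in general differs from $\mathcal{D}_{ijkl}q_jq_kq_l$. The hypothesis holds for the tensor of Equation~\eqref{eq:quartic}, because it is defined through the averaging operator $\operatorname{Sym}$ over all twenty-four permutations, so it is invariant by construction.
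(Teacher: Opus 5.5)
Your proof is correct and is essentially the paper's argument. Both use the product rule to get four Kronecker-delta terms, contract them, and then use full permutation symmetry plus relabelling of dummy indices to bring each term to $\tfrac14\mathcal{D}_{ijkl}q_jq_kq_l$; the paper moves $i$ first and renames afterwards, while you rename first and then permute, which is equivalent. One wording point: in the third and fourth terms a single transposition puts $i$ in the first slot but leaves the other indices out of order (e.g.\ $\mathcal{D}_{jkil}\to\mathcal{D}_{ikjl}$), so you need a cyclic permutation or the symmetry of $q_jq_kq_l$ to finish, which the full-symmetry hypothesis already supplies.
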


\begin{proof}
By the product rule, $\partial_{q_i} (q_j q_k q_l q_m) = \delta_{ij} q_k q_l q_m + \delta_{ik} q_j q_l q_m + \delta_{il}
q_j q_k q_m + \delta_{im} q_j q_k q_l$, with $\delta$ the Kronecker delta. Contracting with $\tfrac14 \mathcal{D}_{jklm}$
gives four sums. The second, for example, is $\tfrac14 \mathcal{D}_{jilm}\, q_j q_l q_m$. Full symmetry lets $i$ move to
the first index, $\mathcal{D}_{jilm} = \mathcal{D}_{ijlm}$. Renaming the summed indices to $j, k, l$ then gives
$\tfrac14 \mathcal{D}_{ijkl}\, q_j q_k q_l$. The other three sums give the same, and the four add to $\mathcal{D}_{ijkl}\,
q_j q_k q_l$.
\end{proof}

Without full symmetry, the cubic force of a tensor assembled by quadrature is not the gradient of any energy. The
energy identities below would then fail by the size of the asymmetry. This is why Equation~\eqref{eq:quartic}
symmetrizes, and why the reference and learned models both use the symmetrized tensor.

Four remarks complete the terms of Equation~\eqref{eq:U}.

\begin{itemize}
  \item \emph{Bending.} Substituting the expansion into $U_b$ of Equation~\eqref{eq:K-Ub} gives $\tfrac12
  \mathbf{q}^{\mathsf{T}} \mathbf{K}_b \mathbf{q}$, with a bending stiffness matrix $\mathbf{K}_b$. Write $U_b[w] = \tfrac12 a_b(w, w)$, with
  $a_b$ the symmetric bilinear form of Equation~\eqref{eq:K-Ub}. Then $(\mathbf{K}_b)_{ij} = a_b(\tilde\phi_i,
  \tilde\phi_j)$. A free-vibration mode $\tilde\phi_j$ of frequency $\omega_j$ satisfies $a_b(\tilde\phi_i, \tilde\phi_j)
  = \omega_j^2\, \rho h \int \tilde\phi_i \tilde\phi_j \,\mathrm{d}A$ for every $i$. Unit modal mass then gives
  $(\mathbf{K}_b)_{ij} = \omega_j^2 \delta_{ij}$. Taking the measured shapes as those modes gives the bending term of
  Equation~\eqref{eq:U}. Here $\omega_i = 2\pi f_i$, with the measured frequencies $f_i$
  from \SI{242}{\hertz} to \SI{771}{\hertz}. The package measured these frequencies with free edges, and all but the
  first are carried as measured. The bending energy $U_b$ is therefore represented by the measurements, and
  the flexural rigidity never has to be assumed. The one exception is $\omega_1$. It is replaced by the installed
  fundamental of \SI{263.67}{\hertz} given in the package~\cite{aepw4hswg}. The package estimates it from the strain
  spectrum of the panel during tunnel startup.
  \item \emph{Thermal preload.} For a uniform field $\mathbf{u}_T = 0$, $\boldsymbol{\sigma}_T = n_T \mathbf{I}$, and $G_{ij} = \int \nabla\tilde\phi_i \cdot
  \nabla\tilde\phi_j \,\mathrm{d}A$. Against a uniform field of the same mean, a field cooler near the frame carries less compression where the shapes' slopes are largest. The panel then buckles at a higher mean temperature rise. For the measured field, $\mathbf{u}_T$ is computed with the
same quadratic triangular elements as below, on a finer $48 \times 24$ structured mesh.
  \item \emph{Membrane hardening.} Each $\mathbf{u}_{ij}$ solves the plane-stress problem with $\mathbf{u}_{ij} = 0$ on $\partial\Omega$, $\int
  \boldsymbol{\varepsilon}(\delta\mathbf{u}) : \mathbb{C} : \big( \boldsymbol{\varepsilon}(\mathbf{u}_{ij}) + \boldsymbol{\eta}_{ij}
  \big) \,\mathrm{d}A = 0$ for every admissible $\delta\mathbf{u}$. The plane-stress problems, one per
  pair of modes, are solved with quadratic triangular elements on a $24 \times 12$ structured mesh of the panel. 
  \item \emph{Static load.} The pressure work of Equation~\eqref{eq:Ws} is $\mathcal{W}_s = \Delta p \sum_i q_i \int \tilde\phi_i
  \,\mathrm{d}A$. Its negative, $-\mathcal{W}_s$, is the load term of Equation~\eqref{eq:U}. Here $\Delta p = p_c - p_s$ is the cavity pressure less the RANS surface pressure $p_s =
  \SI{51.0}{\kilo\pascal}$ of~\cite{brouwer2021}, and $\boldsymbol{\ell} = \int \tilde{\boldsymbol\phi}\,\mathrm{d}A$.
  Referencing the static load to the surface pressure, not to the freestream, produces the asymmetry of the response
about that pressure (Equation 25 of~\cite{brouwer2021}). $\boldsymbol\ell_f$ is the generalized load of
  the measured surface pressure's deviation from $p_s$. It is zero in every simulated record of this report (Section~\ref{sec:methods-record}). The whole static load therefore enters through $\Delta p$.
\end{itemize}

The package gives no material constants. We take $E = \SI{205}{\giga\pascal}$, $\nu = 0.29$ and $\rho = \SI{7850}{\kilo\gram\per
\cubic\meter}$. These are the values for annealed AISI 4140 in~\cite{matweb4140}. The thermal expansion coefficient is
$\alpha = \SI{1.23e-5}{\per\kelvin}$, against the \SI{1.22e-5}{\per\kelvin} that~\cite{matweb4140} gives from
\SI{0}{} to \SI{100}{\celsius}. The measured frequencies fix the flexural stiffness whatever $E$ is taken to be. $E$
enters only the membrane and thermal terms, and $\alpha$ only the thermal one. There $\alpha$ appears only in the product
$\alpha \eta_T$. The thermal effectiveness $\eta_T$ is the fraction of the thermocouple temperature difference that
strains the panel in its plane. It is calibrated below on the measured buckling temperature. The choice between the
two values of $\alpha$ therefore changes $\eta_T$ by \SI{0.8}{\percent} and leaves the model unchanged.

\paragraph{The temperature field} The shape $s_T$ is derived from the package's measurements in two steps. First, the
adiabatic wall temperature that the boundary layer drives the surface toward follows from the Kulite pressures measured on
the rigid panel. The local Mach number $M_\ell$ follows from the static pressure $p_K$ and the rigid run's stagnation pressure
$p_{0,\mathrm{r}}$ through the isentropic relation $p_K/p_{0,\mathrm{r}} = \big(1 + \tfrac{\gamma-1}{2}
M_\ell^2\big)^{-\gamma/(\gamma-1)}$~\cite{ames1953naca1135}. The recovery relation~\cite{white2006viscous} then gives
\begin{equation}
  T_{aw} \;=\; T_0\, \frac{1 + r_c\, \tfrac{\gamma - 1}{2} M_\ell^2}{1 + \tfrac{\gamma - 1}{2} M_\ell^2},
  \label{eq:recovery}
\end{equation}
with the heated run's stagnation temperature $T_0$ and $\gamma = 1.4$. The turbulent recovery factor is
$r_c = \mathrm{Pr}^{1/3} = 0.89$, with Prandtl number $\mathrm{Pr} = 0.71$ for air~\cite{white2006viscous}. A
quadratic surface in $x$ and $y$, with six coefficients, is fitted by least squares to the fifteen Kulite values. It
carries $T_{aw}$ over the panel. Second, the panel is a thin fin~\cite{incropera2007heat} heated by the boundary layer and cooled by conduction to the
frame. Its excess
temperature over the frame, $\vartheta$, satisfies
\begin{equation}
  L_f^2\, \nabla^2 \vartheta \;=\; \vartheta - \big( T_{aw} - T_{\mathrm{frame}} \big) \quad \text{in } \Omega, \qquad
  \vartheta = 0 \quad \text{on } \partial\Omega,
  \label{eq:fin}
\end{equation}
with one parameter, the fin length $L_f = \sqrt{k h / h_c}$ of a plate of conductivity $k$ and thickness $h$ under a
film coefficient $h_c$. For the panel, $h = \SI{0.635}{\milli\meter}$ and $k = \SI{42.6}{\watt\per\meter\per\kelvin}$. This is
the value for annealed AISI 4140 at \SI{100}{\celsius}~\cite{matweb4140}.
$L_f$ is set by the package's panel and frame thermocouples. Their measured difference $\Delta T$, divided by the mean driving difference $\bar T_{aw} - T_{\mathrm{frame}}$, gives a ratio. $L_f$ is the fin length whose panel-mean excess $\bar\vartheta$ reproduces that ratio. On the chaotic case that ratio is $0.464$ and $L_f =
\SI{25.7}{\milli\meter}$. The panel thermocouple sits in the steep temperature gradient near the frame~\cite{brouwer2021}. The measured ratio therefore understates
the panel's mean excess, and this $L_f$ cools the panel too much. The calibrated $L_f$ is therefore an upper bound on the fin length. It
lies above the range of $3.7$ to \SI{11.6}{\milli\meter} that film coefficients of $2000$ down to
\SI{200}{\watt\per\square\meter\per\kelvin} give. That range brackets an estimate at the test condition. The turbulent
flat-plate correlation with the Colburn analogy~\cite{incropera2007heat} gives $h_c \approx
\SI{570}{\watt\per\square\meter\per\kelvin}$ with freestream properties. Taking the properties at Eckert's reference
temperature~\cite{eckert1956} gives \SI{460}{\watt\per\square\meter\per\kelvin}. Both use the running length that grows
the package's \SI{8.6}{\milli\meter} boundary layer, and they give $L_f$ of \SI{6.9}{} and \SI{7.7}{\milli\meter}. The shape is $s_T = \vartheta / \bar\vartheta$. The
thermal effectiveness $\eta_T = 0.468$ is then the one value that places the flat panel's buckling temperature rise at the
\SI{5.6}{\kelvin} reported in~\cite{brouwer2021}.

\paragraph{Adding the non-conservative forces} Hamilton's equations, $\dot{\mathbf{q}} = \partial H / \partial \mathbf{p}
= \mathbf{p}$ and $\dot{\mathbf{p}} = -\partial H / \partial \mathbf{q}$, give the conservative motion. Damping and the
aerodynamic pressure are not derived from a potential. By the Lagrange--d'Alembert principle, they enter through their
virtual work. In modal coordinates that work is $(-\mathbf{C}\mathbf{p} + \mathbf{f}) \cdot \delta\mathbf{q}$. They therefore appear
as generalized forces on the momentum equation alone. With $\mathbf{x} = (\mathbf{q}, \mathbf{p})$, the reference model is
\begin{equation}
  \dot{\mathbf{x}} \;=\; \big( \mathbf{J} - \mathbf{R} \big) \nabla_{\mathbf{x}} H + \begin{bmatrix} \mathbf{0} \\ \mathbf{f}
  \end{bmatrix}, \qquad
  \mathbf{J} = \begin{bmatrix} \mathbf{0} & \mathbf{I} \\ -\mathbf{I} & \mathbf{0} \end{bmatrix}, \qquad
  \mathbf{R} = \begin{bmatrix} \mathbf{0} & \mathbf{0} \\ \mathbf{0} & \mathbf{C} \end{bmatrix} .
  \label{eq:phs}
\end{equation}
This is the port-Hamiltonian form~\cite{vanderSchaftJeltsema2014}. By the chain rule, $\mathrm{d}H/\mathrm{d}t =
\nabla H^{\mathsf{T}} \dot{\mathbf{x}} = \nabla H^{\mathsf{T}} (\mathbf{J} - \mathbf{R}) \nabla H + \nabla_{\mathbf{p}} H
\cdot \mathbf{f}$. Since $\mathbf{J}$ is skew-symmetric, $\nabla H^{\mathsf{T}} \mathbf{J} \nabla H = 0$. With
$\nabla_{\mathbf{p}} H = \mathbf{p}$, the remaining terms give
\begin{equation}
  \frac{\mathrm{d}H}{\mathrm{d}t} \;=\; -\,\mathbf{p}^{\mathsf{T}} \mathbf{C}\, \mathbf{p} \;+\; \mathbf{p} \cdot \mathbf{f} .
  \label{eq:reference-balance}
\end{equation}
Damping keeps this structure when $\mathbf{C}$ is symmetric positive semidefinite. Then $-\mathbf{p}^{\mathsf{T}}
\mathbf{C}\mathbf{p} \le 0$, and damping can remove energy but never add it. Here $\mathbf{C} = \alpha_d \mathbf{I} + \beta_d
\operatorname{diag}(\boldsymbol\omega^2)$ is the Rayleigh damping of~\cite{brouwer2021}, with $\alpha_d =
\SI{15}{\per\second}$ and $\beta_d = \SI{8e-7}{\second}$. It is diagonal with positive entries and therefore meets the condition. The
learned dissipation of Section~\ref{sec:methods-dissipation} meets the same condition by construction. The term
$\mathbf{p} \cdot \mathbf{f}$ is the power of the generalized aerodynamic force of Section~\ref{sec:methods-aero}. It can
take either sign. The records of this report carry no turbulent boundary-layer pressure forcing.

\paragraph{Nondimensional state} The record and the learned model use $\Wvec = \mathbf{q}/w_s$ and $\tau =
\omega_1 t$, with $\Wdot = \mathrm{d}\Wvec/\mathrm{d}\tau$. The scale is $w_s = h / \max_\Omega |\tilde\phi_1|$. With this scale, $W_1 = 1$ places the first mode's peak
deflection at one panel thickness. In these units the post-flutter state is of order one in
every component. With $\mathbf{q} = w_s \Wvec$ and $\mathbf{p} = w_s \omega_1 \Wdot$, the kinetic energy is $w_s^2 \omega_1^2\,
\tfrac12 |\Wdot|^2$. The nondimensional energy is therefore $H/(w_s^2 \omega_1^2)$. From Section~\ref{sec:methods-identity} on, $H$ denotes this
nondimensional energy and $\mathbf{x}$ the nondimensional state $(\Wvec, \Wdot)$. Before it, $\mathbf{x}$ is $(\mathbf{q},
\mathbf{p})$ and $H$ is dimensional.

\subsection{The generalized aerodynamic force}
\label{sec:methods-aero}

First-order piston theory~\cite{ashley1956piston}, Equations 1 to 5 of~\cite{brouwer2021}, gives the quasi-steady
pressure on the flow side of the panel,
\begin{equation}
  p_{\mathrm{pt}}(x,y,t) \;=\; \frac{2 q_\infty}{\beta} \left( \frac{\partial w}{\partial x}
  + \frac{1}{V_\infty} \frac{\partial w}{\partial t} \right), \qquad \beta = \sqrt{M^2 - 1}, \quad
  q_\infty = \tfrac12 \gamma p_\infty M^2 ,
  \label{eq:piston-pressure}
\end{equation}
with $V_\infty$ the freestream speed. Projected on the unit-mass modes it gives a generalized force $\mathbf{f} =
-(\mathbf{A}\mathbf{q} + \mathbf{B}\mathbf{p})$, with
\begin{equation}
  A_{ij} = \frac{2 q_\infty}{\beta} \int \tilde\phi_i\, \tilde\phi_{j,x} \,\mathrm{d}A, \qquad
  B_{ij} = \frac{2 q_\infty}{\beta V_\infty} \int \tilde\phi_i\, \tilde\phi_j \,\mathrm{d}A .
  \label{eq:piston}
\end{equation}
These entries follow from substituting $w = \sum_j q_j \tilde\phi_j$ into $\mathbf{f} = -\int p_{\mathrm{pt}}\,
\tilde{\boldsymbol\phi} \,\mathrm{d}A$, with $\partial w/\partial t = \sum_j p_j \tilde\phi_j$ since $\mathbf{p} =
\dot{\mathbf{q}}$. Unit modal mass with uniform $\rho h$ gives $\int \tilde\phi_i \tilde\phi_j \,\mathrm{d}A =
\delta_{ij}/(\rho h)$. Hence $\mathbf{B} = 2 q_\infty / (\beta V_\infty \rho h)\, \mathbf{I}$, which is symmetric positive
definite, and its power $-\mathbf{p}^{\mathsf{T}} \mathbf{B}\mathbf{p}$ is never positive. Integrating $A_{ij}$ by parts in
$x$ gives $-A_{ji}$ plus a boundary integral of $\tilde\phi_i \tilde\phi_j$. The package supplies the shapes with zero edge rows and
columns, hence they vanish on the clamped edges. That integral is therefore zero, and $\mathbf{A}$ is skew-symmetric. Computed, $\|\mathbf{A} + \mathbf{A}^{\mathsf{T}}\| /
\|\mathbf{A}\|$ is about $10^{-15}$. The power of $\mathbf{A}$, $-\mathbf{p} \cdot \mathbf{A}\mathbf{q}$, can take
either sign. It is the only term in the model that can feed energy from the flow into the panel. Flutter enters there
and nowhere else. The
flow state comes from each case's heated-run stagnation pressure $p_0$ and temperature $T_0$ in the package. The Mach
number is the test-section value $M = 1.94$ of~\cite{brouwer2021}, which they support with boundary-layer measurements.
The package's RANS simulations give $1.92 \pm 0.02$ at the panel (Section~\ref{sec:problems-rc19}). With $f_M = 1 + \tfrac{\gamma-1}{2}M^2$ and $\gamma =
1.4$, the isentropic relations~\cite{ames1953naca1135} give $p_\infty = p_0 f_M^{-\gamma/(\gamma-1)}$ and $T_\infty =
T_0/f_M$. The dynamic pressure is $q_\infty = \tfrac12\gamma p_\infty M^2$, and the speed is $V_\infty = M\sqrt{\gamma
R_{\mathrm{air}} T_\infty}$, with $R_{\mathrm{air}} = \SI{287}{\joule\per\kilo\gram\per\kelvin}$. For the periodic case,
$p_0 = \SI{347.5}{\kilo\pascal}$ and $T_0 = \SI{388.3}{\kelvin}$ give $p_\infty = \SI{48.8}{\kilo\pascal}$, $q_\infty =
\SI{128.4}{\kilo\pascal}$ and $V_\infty = \SI{579}{\meter\per\second}$. For the chaotic case, $p_0 =
\SI{346.5}{\kilo\pascal}$ and $T_0 = \SI{387.2}{\kelvin}$ give \SI{48.6}{\kilo\pascal}, \SI{128.1}{\kilo\pascal} and
\SI{578}{\meter\per\second}.
The power of this force, $-\mathbf{p}\cdot(\mathbf{A}\mathbf{q} + \mathbf{B}\mathbf{p})$, can take either sign, and
nothing in the construction bounds it. That is intended: this force is where the flow does work on the
panel, and the construction reports that work rather than constraining it.

\subsection{The HyMAX reference model}
\label{sec:methods-hymax}

The HyMAX plate is $130 \times 80 \times 2$~mm of 6061-T6, bolted to the forebody at its hinge line and free at its
trailing edge (Figure~\ref{fig:hymax-setup}). The model idealizes the bolted joint as a clamp. The plate's measured response is linear and dominated by the first mode~\cite{currao2019}. The ratio of the two computed
frequencies of~\cite{xu2025}, $530.4/84.7 = 6.26$, is close to the clamped-free beam's own. That ratio is
$(\kappa_2/\kappa_1)^2 = (4.694/1.875)^2 = 6.267$, with $\kappa_i$ the eigenvalue roots of Equation~\eqref{eq:cf-modes}
below. The reference model is therefore an Euler--Bernoulli beam of the plate's width $b$ and length $L$.

\paragraph{The continuum model} Let $x$ run from the hinge, $w(x,t)$ be the deflection, positive toward the shock side,
$A_c = b h$ and $I = b h^3/12$. The thickness is $h = \SI{2}{\milli\meter}$, and the 6061-T6 density is $\rho =
\SI{2700}{\kilo\gram\per\cubic\meter}$, from the data sheet the package names~\cite{matweb6061t6}. On the beam, a prime denotes differentiation in $x$. The beam's kinetic and
bending energies are
\begin{equation}
  \mathcal{K}[w] \;=\; \frac12 \int_0^L \rho A_c\, \dot w^2 \,\mathrm{d}x, \qquad
  U_b[w] \;=\; \frac12 \int_0^L EI\, (w'')^2 \,\mathrm{d}x .
  \label{eq:beam-energies}
\end{equation}
The flow-side pressure $p_{\mathrm{top}}$ pushes the plate toward $-w$. The back pressure $p_b$, which is not measured, pushes
it back. Their virtual work is $\delta\mathcal{W} = -\,b \int_0^L \big[ p_{\mathrm{top}}(x,t) - p_b \big]\, \delta w \,\mathrm{d}x$.
Hamilton's principle with this work, and with the clamped-free conditions, gives the beam equation
\begin{equation}
  \rho A_c\, \ddot w + E I\, w'''' \;=\; -\,b \big[ p_{\mathrm{top}}(x,t) - p_b \big], \qquad
  w(0) = w'(0) = 0, \quad w''(L) = w'''(L) = 0 .
  \label{eq:beam}
\end{equation}

\paragraph{The basis} The plate is expanded in its first two clamped-free modes~\cite{blevins1979formulas}, and
the number of retained modes is $n = 2$. Section~\ref{sec:hymax} shows that a third mode does not reduce the error.
The mode shapes are
\begin{equation}
  \psi_i(\xi) \;=\; \cosh \kappa_i \xi - \cos \kappa_i \xi - \chi_i \big( \sinh \kappa_i \xi - \sin \kappa_i \xi \big),
  \qquad \chi_i = \frac{\sinh \kappa_i - \sin \kappa_i}{\cosh \kappa_i + \cos \kappa_i},
  \label{eq:cf-modes}
\end{equation}
with $\xi = x/L$ and $\cos\kappa_i \cosh\kappa_i + 1 = 0$, which gives $\kappa_1 = 1.8751$ and $\kappa_2 = 4.6941$. Each mode solves the
beam's own eigenproblem, $EI\, \psi_i'''' = \omega_i^2\, \rho A_c\, \psi_i$, with the clamped-free conditions and $\omega_i^2 =
(\kappa_i / L)^4\, E I / (\rho A_c)$. The functions of Equation~\eqref{eq:cf-modes} satisfy $\int_0^1 \psi_i^2 \,\mathrm{d}\xi = 1$. Multiplying each by
$(\rho A_c L)^{-1/2}$ scales it to unit modal mass, $\int_0^L \rho A_c\, \psi_i \psi_j \,\mathrm{d}x = \delta_{ij}$. From
here on, $\psi_i$ denotes the scaled shape, and $\boldsymbol\psi = (\psi_1, \psi_2)$. $EI$ is scaled so that $f_1$ is the
measured \SI{83}{\hertz}~\cite{currao2019}. Since $\omega_i \propto \kappa_i^2$, $f_2 = f_1 (\kappa_2/\kappa_1)^2 =
\SI{83}{\hertz} \times 6.267 = \SI{520}{\hertz}$.

\paragraph{From the continuum model to the Hamiltonian} The reduction follows the three steps of
Section~\ref{sec:methods-panel}, with $w = \sum_i q_i(t)\, \psi_i(x)$. First, unit modal mass turns the kinetic energy of
Equation~\eqref{eq:beam-energies} into $\mathcal{K} = \tfrac12 |\dot{\mathbf{q}}|^2$.

Second, the bending energy becomes $U_b = \tfrac12 \mathbf{q}^{\mathsf{T}} \mathbf{K}_b \mathbf{q}$, with $(\mathbf{K}_b)_{ij} =
\int_0^L EI\, \psi_i'' \psi_j'' \,\mathrm{d}x$. Integrating by parts twice moves two derivatives from $\psi_i$ onto $\psi_j$.
The boundary terms vanish, because $\psi_i$ and $\psi_i'$ are zero at the clamp and $\psi_j''$ and $\psi_j'''$ at the free end.
The eigenproblem then gives
\begin{equation}
  (\mathbf{K}_b)_{ij} \;=\; \int_0^L \psi_i \, EI\, \psi_j'''' \,\mathrm{d}x
  \;=\; \omega_j^2 \int_0^L \rho A_c\, \psi_i \psi_j \,\mathrm{d}x \;=\; \omega_j^2\, \delta_{ij} ,
  \label{eq:beam-Kb}
\end{equation}
which gives $U_b = \tfrac12 \mathbf{q}^{\mathsf{T}} \operatorname{diag}(\boldsymbol{\omega}^2)\, \mathbf{q}$. The plate's measured
response is linear. The beam therefore carries no membrane or thermal term, and its potential is purely quadratic.

Third, the Lagrangian $\mathcal{K} - U_b$ defines the generalized momenta $\mathbf{p} = \dot{\mathbf{q}}$, and the Legendre
transform gives
\begin{equation}
  H(\mathbf{q}, \mathbf{p}) \;=\; \tfrac12 |\mathbf{p}|^2 + \tfrac12\, \mathbf{q}^{\mathsf{T}}
  \operatorname{diag}(\boldsymbol{\omega}^2)\, \mathbf{q} .
  \label{eq:beam-H}
\end{equation}
The virtual work of the pressures becomes $\mathbf{f} \cdot \delta\mathbf{q}$, with
\begin{equation}
  \mathbf{f}(t) \;=\; -\,b \int_0^L \big[ p_{\mathrm{top}}(x,t) - p_b \big] \boldsymbol{\psi} \,\mathrm{d}x .
  \label{eq:beam-modal}
\end{equation}
Unlike the RC-19 static load, the measured static pressure here varies in time. It therefore cannot enter $U$ as a
potential, and it enters $\mathbf{f}$ as a prescribed drive.

\paragraph{The port-Hamiltonian form} With modal damping $\mathbf{C} = \operatorname{diag}(2\zeta\omega_i)$, the HyMAX
reference model takes the form of Equation~\eqref{eq:phs}, with the $H$ of Equation~\eqref{eq:beam-H}. Its rate law is
$\dot{\mathbf{q}} = \mathbf{p}$ and $\dot{\mathbf{p}} = -\operatorname{diag}(\boldsymbol{\omega}^2)\, \mathbf{q} -
\mathbf{C}\mathbf{p} + \mathbf{f}$. The energy balance is that of Equation~\eqref{eq:reference-balance}. $\mathbf{C}$ is
diagonal with positive entries, and damping can therefore only remove energy. The damping ratio is $\zeta = 0.01$. The package holds no
displacement. This value is therefore a setting rather than a measurement. The paragraph below splits $\mathbf{f}$ into the
prescribed drive and a state-dependent aerodynamic force.

\paragraph{The aerodynamic force} The flow-side pressure in Equation~\eqref{eq:beam-modal} has two parts,
$p_{\mathrm{top}} = p_{\mathrm{st}} + p_{\mathrm{pt}}$. The static part $p_{\mathrm{st}}$ is set by the measured pressures. The part $p_{\mathrm{pt}}$
follows the plate's motion through piston theory, Equation~\eqref{eq:piston-pressure}, with local coefficients. The
freestream pressure $p_\infty(t)$ at Mach $5.85$ is not stationary over the first
\SI{90}{\milli\second} of the Ludwieg-tube flow (Figure~\ref{fig:hymax-setup}b). It enters first-order piston theory as a
time-varying inflow. Downstream of the shock position $x_s$, the flow state is the one behind an oblique shock~\cite{ames1953naca1135} of the measured pressure
ratio. That ratio is $\Pi(t) = p_4(t)/p_\infty(t)$, with $p_4$ the trailing-edge transducer. The ratio fixes the shock's
normal Mach number $M_n$, its angle $\Theta_s$ on the weak branch, and the flow deflection $\Theta_w$,
\begin{equation}
  M_n^2 = 1 + \frac{\gamma+1}{2\gamma} (\Pi - 1), \qquad \sin\Theta_s = \frac{M_n}{M}, \qquad
  \tan\Theta_w = \frac{2 \cot\Theta_s\, (M^2 \sin^2\Theta_s - 1)}{M^2 (\gamma + \cos 2\Theta_s) + 2} ,
  \label{eq:oblique}
\end{equation}
with $M = 5.85$ and $\gamma = 1.4$. The post-shock Mach number and temperature are
\begin{equation}
  M_\ell = \frac{1}{\sin(\Theta_s - \Theta_w)} \sqrt{\frac{1 + \tfrac{\gamma-1}{2} M_n^2}{\gamma M_n^2 -
  \tfrac{\gamma-1}{2}}}, \qquad
  \frac{T_\ell}{T_\infty} = \Pi\, \frac{2 + (\gamma-1) M_n^2}{(\gamma+1) M_n^2} .
  \label{eq:post-shock}
\end{equation}
Here $T_\infty$ is the package's measured freestream temperature, averaged over the flow window. The post-shock pressure is $p_\ell = \Pi p_\infty$. The local dynamic pressure is $q_\ell = \tfrac12 \gamma p_\ell M_\ell^2$,
and the local speed is $V_\ell = M_\ell \sqrt{\gamma R_{\mathrm{air}} T_\ell}$. With the
local coefficients
\begin{equation}
  a(x,t) = \frac{2 q_\ell(x,t)}{\beta_\ell(x,t)}, \qquad c(x,t) = \frac{2 q_\ell(x,t)}{\beta_\ell(x,t)\, V_\ell(x,t)},
  \label{eq:local-piston}
\end{equation}
where $q_\ell$, $\beta_\ell = \sqrt{M_\ell^2 - 1}$ and $V_\ell$ are the local dynamic pressure, supersonic factor and
flow speed. They take the freestream state upstream of $x_s$ and the post-shock state downstream of it. Then the generalized force is
\begin{equation}
\begin{gathered}
  \mathbf{f}(t) \;=\; \mathbf{g}(t) - \mathbf{A}(t)\mathbf{q} - \mathbf{B}(t)\mathbf{p}, \qquad
  A_{ij}(t) = b \int_0^L a\, \psi_i \psi_j' \,\mathrm{d}x, \qquad
  B_{ij}(t) = b \int_0^L c\, \psi_i \psi_j \,\mathrm{d}x, \\
  g_i(t) = -\,b \int_0^L \big[ p_{\mathrm{st}}(x,t) - p_b \big] \psi_i \,\mathrm{d}x, \qquad
  p_{\mathrm{st}}(x,t) = \begin{cases} p_\infty(t), & x \le x_s, \\ \Pi(t)\, p_\infty(t), & x > x_s. \end{cases}
\end{gathered}
  \label{eq:hymax-port}
\end{equation}
Here the $\psi_i$ are the unit-mass shapes. The static pressure $p_{\mathrm{st}}$ jumps at $x_s$. It is the footprint of the impinging shock on the
plate, anchored to the level the trailing-edge transducer measures. The measured signals are resampled at the record's
time samples, which are the instants at which the record stores the state (Section~\ref{sec:methods-record}). Hence $\mathbf{g}$, $\mathbf{A}$ and $\mathbf{B}$ are known at every time sample and are interpolated linearly between
them. The two unmeasured quantities, $p_b$ and $x_s$, are not fitted. The record's trajectories sample three values of $p_b$ and two of $x_s$. The back pressures are $0.9$, $1.0$ and $1.1$
times the time mean of the three forebody transducers. The shock positions are $0.7L$ and $0.9L$ from the hinge. The inviscid estimate of
Section~\ref{sec:problems-hymax}, \SI{93}{\milli\meter}, lies between them.

\subsection{Time integration of the reference model}
\label{sec:methods-integrator}

The reference model is integrated with the average-vector-field (AVF) discrete gradient~\cite{quispel2008avf}. For a step of size $\Delta t$ from
$(\mathbf{q}_0, \mathbf{p}_0)$ to $(\mathbf{q}_1, \mathbf{p}_1)$, with midpoints $\mathbf{q}_m = \tfrac12(\mathbf{q}_0 + \mathbf{q}_1)$ and $\mathbf{p}_m = \tfrac12(\mathbf{p}_0 +
\mathbf{p}_1)$, the step is
\begin{equation}
  \mathbf{q}_1 - \mathbf{q}_0 = \Delta t\, \mathbf{p}_m, \qquad
  \mathbf{p}_1 - \mathbf{p}_0 = \Delta t \big[ -\overline{\nabla U} + \mathbf{f}_m \big], \qquad
  \overline{\nabla U} = \tfrac16 \big[ \nabla U(\mathbf{q}_0) + 4 \nabla U(\mathbf{q}_m) + \nabla U(\mathbf{q}_1) \big],
  \label{eq:avf}
\end{equation}
where $U$ is the potential part of $H$. The term $\mathbf{f}_m$ collects every non-conservative generalized force at the
midpoint: the damping, the aerodynamic force, and any prescribed load $\mathbf{g}$. Unlike $\mathbf{f}$ in
Equation~\eqref{eq:phs}, $\mathbf{f}_m$ includes the damping force.

\begin{proposition}[Discrete energy balance]
\label{prop:avf}
Let $H(\mathbf{q}, \mathbf{p}) = \tfrac12 |\mathbf{p}|^2 + U(\mathbf{q})$, with $U$ a polynomial of degree at most four
in $\mathbf{q}$, and write $H_k = H(\mathbf{q}_k, \mathbf{p}_k)$ for $k = 0, 1$. For any step of
Equation~\eqref{eq:avf}, and for any $\mathbf{f}_m$ whatever, $H_1 - H_0 = \Delta t\, \mathbf{p}_m \cdot \mathbf{f}_m$
exactly.
\end{proposition}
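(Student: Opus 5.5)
The plan is to split $H_1 - H_0$ into its kinetic and potential parts and treat each exactly. Then the step equations of Equation~\eqref{eq:avf} will show that the two pieces combine into the stated work term.

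For the kinetic part, the difference of squares gives
\[
\tfrac12|\mathbf{p}_1|^2 - \tfrac12|\mathbf{p}_0|^2 = \mathbf{p}_m\cdot(\mathbf{p}_1-\mathbf{p}_0) = \Delta t\,\mathbf{p}_m\cdot\big[-\overline{\nabla U} + \mathbf{f}_m\big],
\]
by the second relation of Equation~\eqref{eq:avf}. For the potential part, I will parametrize the segment $\mathbf{q}(s) = \mathbf{q}_0 + s(\mathbf{q}_1 - \mathbf{q}_0)$, $s\in[0,1]$. The fundamental theorem of calculus then gives
\[
U(\mathbf{q}_1) - U(\mathbf{q}_0) = (\mathbf{q}_1-\mathbf{q}_0)\cdot \int_0^1 \nabla U(\mathbf{q}(s))\,\mathrm{d}s = \Delta t\,\mathbf{p}_m\cdot \int_0^1 \nabla U(\mathbf{q}(s))\,\mathrm{d}s,
\]
using the first relation of Equation~\eqref{eq:avf}.

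The one real step is to show that the averaged gradient $\overline{\nabla U}$ of Equation~\eqref{eq:avf} equals the exact line average $\int_0^1 \nabla U(\mathbf{q}(s))\,\mathrm{d}s$. Since $U$ has degree at most four, each component of $\nabla U(\mathbf{q}(s))$ is a polynomial in $s$ of degree at most three. Simpson's rule with weights $\tfrac16,\tfrac46,\tfrac16$ at $s=0,\tfrac12,1$ is exact for cubics. The midpoint $\mathbf{q}(1/2)$ is $\mathbf{q}_m$. Hence the quadrature in Equation~\eqref{eq:avf} reproduces the integral exactly, with no error term. I will verify Simpson's exactness by symmetry about $s=\tfrac12$: odd powers of $(s-\tfrac12)$ integrate to zero on both sides, and checking $1$ and $(s-\tfrac12)^2$ directly settles the rest. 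This is where the degree hypothesis is used, and I expect it to be the only point needing care. It is essential: for a quintic or higher gradient the identity would fail by the Simpson remainder.

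Adding the two parts, the $\mp\Delta t\,\mathbf{p}_m\cdot\overline{\nabla U}$ terms cancel, leaving $H_1 - H_0 = \Delta t\,\mathbf{p}_m\cdot\mathbf{f}_m$. Nothing about $\mathbf{f}_m$ was used beyond its appearance in the momentum update, so the identity holds for any $\mathbf{f}_m$, including damping, aerodynamic, prescribed or later assimilation forces. This holds provided the step equations are satisfied exactly, that is, up to the Newton solver's tolerance in practice. The symmetry of $\mathcal{D}$ from Lemma~\ref{lem:gradient} is needed only so that the cubic force actually used is $\nabla U$. I will note that assumption explicitly, since the argument treats $\nabla U$ as the true gradient.
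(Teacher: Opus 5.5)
Your proof is correct and is essentially the paper's own argument. You split off the kinetic part by difference of squares, integrate $\nabla U$ along the segment from $\mathbf{q}_0$ to $\mathbf{q}_1$, use exactness of Simpson's rule for the at-most-cubic integrand with its middle node at $\mathbf{q}_m$, and cancel the $\overline{\nabla U}$ terms without using any property of $\mathbf{f}_m$; the paper does the same, except that it applies Simpson to the scalar integrand $\nabla U\cdot\Delta\mathbf{q}$ while you apply it componentwise. One small slip in your aside: the identity already fails in general for a quartic gradient (a degree-five potential), not only from quintic gradients on, since Simpson's rule is exact only through cubics.
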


\begin{proof}
Write $\Delta\mathbf{q} = \mathbf{q}_1 - \mathbf{q}_0$. The difference $H_1 - H_0$ is the change in kinetic energy plus the change in
potential energy. We treat the two parts in turn.

\emph{Kinetic part.} A difference of squares factors as a sum times a difference. With $\mathbf{p}_m = \tfrac12(\mathbf{p}_0
+ \mathbf{p}_1)$,
\begin{equation*}
  \tfrac12 \big( |\mathbf{p}_1|^2 - |\mathbf{p}_0|^2 \big)
  \;=\; \tfrac12 (\mathbf{p}_1 + \mathbf{p}_0) \cdot (\mathbf{p}_1 - \mathbf{p}_0)
  \;=\; \mathbf{p}_m \cdot (\mathbf{p}_1 - \mathbf{p}_0)
  \;=\; \Delta t\, \mathbf{p}_m \cdot \big( -\overline{\nabla U} + \mathbf{f}_m \big),
\end{equation*}
where the last equality uses the momentum update of Equation~\eqref{eq:avf}.

\emph{Potential part.} Follow the straight segment $\mathbf{q}(\varsigma) = \mathbf{q}_0 + \varsigma\, \Delta\mathbf{q}$ for $0 \le \varsigma \le 1$. By
the fundamental theorem of calculus and the chain rule, with $\mathrm{d}\mathbf{q}/\mathrm{d}\varsigma = \Delta\mathbf{q}$,
\begin{equation*}
  U(\mathbf{q}_1) - U(\mathbf{q}_0) \;=\; \int_0^1 \frac{\mathrm{d}}{\mathrm{d}\varsigma} U\big(\mathbf{q}(\varsigma)\big) \,\mathrm{d}\varsigma
  \;=\; \int_0^1 \nabla U\big(\mathbf{q}(\varsigma)\big) \cdot \Delta\mathbf{q} \,\mathrm{d}\varsigma .
\end{equation*}
Since $U$ has degree at most four in $\mathbf{q}$, $\nabla U$ has degree at most three. Since $\mathbf{q}(\varsigma)$ is linear in
$\varsigma$, the integrand is a polynomial of degree at most three in $\varsigma$. Simpson's rule, $\int_0^1 \varphi \,\mathrm{d}\varsigma =
\tfrac16 [\varphi(0) + 4\varphi(\tfrac12) + \varphi(1)]$, is exact for polynomials of degree at most three. Apply it to the integrand $\varphi(\varsigma) = \nabla
U\big(\mathbf{q}(\varsigma)\big) \cdot \Delta\mathbf{q}$. Its nodes are $\mathbf{q}(0) =
\mathbf{q}_0$, $\mathbf{q}(\tfrac12) = \mathbf{q}_m$ and $\mathbf{q}(1) = \mathbf{q}_1$. The vector $\Delta\mathbf{q}$ does not depend on $\varsigma$ and factors out of the three-point sum. The integral
is therefore exactly
\begin{equation*}
  U(\mathbf{q}_1) - U(\mathbf{q}_0) \;=\; \overline{\nabla U} \cdot \Delta\mathbf{q} \;=\; \Delta t\, \overline{\nabla U} \cdot \mathbf{p}_m ,
\end{equation*}
with $\overline{\nabla U}$ as in Equation~\eqref{eq:avf}, and the last equality from its position update $\Delta\mathbf{q} =
\Delta t\,\mathbf{p}_m$.

\emph{Sum.} Adding the two parts, the $\overline{\nabla U}$ terms cancel:
\begin{equation*}
  H_1 - H_0 \;=\; \Delta t\, \mathbf{p}_m \cdot \big( -\overline{\nabla U} + \mathbf{f}_m \big) + \Delta t\, \overline{\nabla U} \cdot \mathbf{p}_m
  \;=\; \Delta t\, \mathbf{p}_m \cdot \mathbf{f}_m .
\end{equation*}
No assumption on $\mathbf{f}_m$ was used. The identity therefore holds for any non-conservative force.
\end{proof}

Equation~\eqref{eq:reference model-H} is quartic, and the HyMAX Hamiltonian is quadratic.
Proposition~\ref{prop:avf} therefore applies to both. The implicit step is solved by Newton's method with the analytic
Hessian. The energy balance of Proposition~\ref{prop:avf} is then exact up to the solver tolerance and roundoff, and the computed residual is what is
reported. With $\mathbf{f}_m = -\mathbf{C}\mathbf{p}_m$ alone, Proposition~\ref{prop:avf} gives $H_1 - H_0 = -\Delta t\,
\mathbf{p}_m^{\mathsf{T}} \mathbf{C}\, \mathbf{p}_m \le 0$, because $\mathbf{C}$ is positive semidefinite. With
$\mathbf{f}_m = 0$, $H_1 = H_0$.

\subsection{The record}
\label{sec:methods-record}

A record is a family of reference model trajectories that differ in the parameters of the Hamiltonian or of the
aerodynamic force. Each trajectory is a \emph{member} of the record. Every RC-19 member starts from the buckled static shape under the temperature rise $\Delta T$ alone, with $\Delta p =
0$, as in Section 5.3 of~\cite{brouwer2021}. The shape is found by a trust-region minimization of the potential and polished by Newton's
method. A first-mode displacement of $10^{-3}$ thicknesses, the initial \emph{kick}, is then added. The kick sets a reproducible departure from the static shape. Every HyMAX member starts with the plate at rest before the flow arrives. The state is stored at fixed instants called \emph{time samples}, spaced uniformly by $\Delta t_r$. At each time sample a member stores the state, the prescribed load $\mathbf{g}$ and the energy $H$. It also stores the exact rate
$\dot{\mathbf{p}}$ from the reference model's equations of motion, not differenced from stored states.

\begin{itemize}
  \item \textbf{RC-19.} Three members at $\Delta T = \SI{12.76}{\kelvin}$, with cavity pressures of
  $49.90$, $50.43$ and \SI{50.96}{\kilo\pascal}. These are the package's measured \SI{50.43}{\kilo\pascal} and that value
  plus and minus its stated error of \SI{0.53}{\kilo\pascal}. The cavity differential $\Delta p = p_c - p_s$, with $p_s = \SI{51.0}{\kilo\pascal}$, is $-1103$, $-571$ and
  \SI{-39}{\pascal}. Each
  lasts \SI{0.5}{\second}, which is a fixed setting. It is stored at \SI{25}{\kilo\hertz} and integrated at
  \SI{50}{\kilo\hertz}, which gives two steps per time sample. The storage rate is a setting of the training, not of the experiment. At the experiment's
  \SI{5}{\kilo\hertz}, a component near \SI{1}{\kilo\hertz} has five time samples per period. At \SI{25}{\kilo\hertz} it has
  twenty-five. All three are released
from the same buckled static shape. That shape is one of the two wells of the potential. All three are chaotic, and each switches between the two wells.
  \item \textbf{HyMAX.} Six members: back pressures of $494$, $549$ and \SI{603}{\pascal}, each with the shock at
  $x_s = 91$ and \SI{117}{\milli\meter}. Each runs from \SI{-2}{\milli\second} to \SI{200}{\milli\second} of the
  measured flow, stored at \SI{25}{\kilo\hertz} and integrated at \SI{50}{\kilo\hertz}. The \SI{200}{\milli\second} is
  the package's flow window. At \SI{5}{\kilo\hertz} storage the Nyquist frequency would be \SI{2.5}{\kilo\hertz}, below the
  higher modes of the basis comparisons of Section~\ref{sec:hymax}. Every member is driven by
  the same measured freestream and trailing-edge pressures.
\end{itemize}

In neither case is the family an ensemble of experiments. It varies how one set of measured inputs loads the
structure.

\subsection{The learned model}
\label{sec:methods-identity}

The learned model advances the nondimensional state $\mathbf{x} = (\Wvec, \Wdot) \in \mathbb{R}^{2n}$. It is
conditioned on the parameter vector $\mathbf{z}$: $(\Delta p, \Delta T)$ for RC-19 and $(p_b, x_s)$ for HyMAX.
Its rate law is
\begin{equation}
  \Wvec'' \;=\; -\nabla_{\Wvec} U(\Wvec; \mathbf{z}) \;-\; \Dmat\, \Wdot \;+\; \gvec(\tau)
          \;+\; \Qvec(\Wvec, \Wdot, \mathbf{z}),
  \qquad H = \tfrac12 |\Wdot|^2 + U(\Wvec; \mathbf{z}),
  \label{eq:learned}
\end{equation}
in which $U(\Wvec; \mathbf{z}) = U_{\mathrm{ref}}(w_s \Wvec)/(w_s^2 \omega_1^2)$ is the reference model's potential
$U_{\mathrm{ref}}$ in nondimensional form, held fixed and not trained. The drive $\gvec$ is the reference model's generalized
drive divided by $w_s \omega_1^2$,
$\Dmat$ is the learned dissipation, and $\Qvec$ is the fitted force term. For RC-19 the drive is zero and $U$ depends on
$\mathbf{z}$. For HyMAX $U$ is independent of $\mathbf{z}$. The conditioning enters the prescribed drive $\gvec$ and the fitted term.

\begin{proposition}[The energy identity]
\label{prop:identity}
Let $U$ be any continuously differentiable potential. For every value of the trained and fitted parameters,
every state and every $\mathbf{z}$, solutions of Equation~\eqref{eq:learned} satisfy
\begin{equation}
  \frac{\mathrm{d}H}{\mathrm{d}\tau}
    \;=\; \Wdot \cdot \left( \gvec + \Qvec \right) \;-\; \Wdot^{\mathsf{T}} \Dmat\, \Wdot .
  \label{eq:identity}
\end{equation}
\end{proposition}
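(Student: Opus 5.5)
The plan is a direct chain-rule computation along a solution, mirroring the derivation of Equation~\eqref{eq:reference-balance} but written in the second-order form of Equation~\eqref{eq:learned}. Fix the trained and fitted parameters, the conditioning vector $\mathbf{z}$ and a solution $\tau \mapsto \Wvec(\tau)$. Because $\mathbf{z}$ is held constant along the solution, $H$ depends on $\tau$ only through $\Wvec$ and $\Wdot$. The prescribed drive $\gvec(\tau)$ enters the rate law but not $H$. Since $U$ is continuously differentiable and the solution is twice differentiable, $H(\Wvec(\tau), \Wdot(\tau))$ is differentiable in $\tau$. The chain rule then gives
\begin{equation*}
  \frac{\mathrm{d}H}{\mathrm{d}\tau} \;=\; \Wdot \cdot \Wvec'' \;+\; \nabla_{\Wvec} U(\Wvec;\mathbf{z}) \cdot \Wdot ,
\end{equation*}
the first term from the kinetic part $\tfrac12|\Wdot|^2$ and the second from the potential.

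Next, substitute the rate law of Equation~\eqref{eq:learned} for $\Wvec''$ in the first term. This produces four contributions:
\begin{equation*}
  -\Wdot\cdot\nabla_{\Wvec}U, \qquad -\Wdot^{\mathsf{T}}\Dmat\Wdot, \qquad \Wdot\cdot\gvec, \qquad \Wdot\cdot\Qvec .
\end{equation*}
The first cancels exactly against the potential term from the chain rule. What remains is Equation~\eqref{eq:identity}.

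The only point needing care is to make explicit what was \emph{not} used, since the claim is that the identity holds for every parameter value. No symmetry or definiteness of $\Dmat$ enters; the quadratic form $\Wdot^{\mathsf{T}}\Dmat\Wdot$ simply appears as written. Its sign is the separate matter of Lemma~\ref{lem:spd}. No property of $\Qvec$ or $\gvec$ enters beyond their being evaluated along the solution. The only structural fact used is that the conservative force in the rate law is exactly $-\nabla_{\Wvec}U$ of the same $U$ that appears in $H$. For the quartic RC-19 potential this relies on Lemma~\ref{lem:gradient}: the symmetrized $\mathcal{D}$ makes the implemented cubic force the true gradient.

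I expect no real obstacle. The step to state carefully is this exact match between the force in the rate law and the gradient of the potential in $H$, including the nondimensional scaling $U = U_{\mathrm{ref}}(w_s\Wvec)/(w_s^2\omega_1^2)$. By the chain rule, $\nabla_{\Wvec}U = \nabla U_{\mathrm{ref}}(w_s\Wvec)/(w_s\omega_1^2)$, which is consistent with the nondimensional force scaling used for $\gvec$. Beyond this, the argument is the same cancellation as in the port-Hamiltonian balance. With $\mathbf{J}$ skew-symmetric, it could be rewritten in that form with $\mathbf{R} = \operatorname{diag}(\mathbf{0}, \Dmat)$, but the direct substitution is shorter.
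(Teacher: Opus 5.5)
Your proposal is correct and follows the paper's proof essentially step for step. Both apply the chain rule to $H$ with $\mathbf{z}$ fixed, substitute the rate law for $\Wvec''$, cancel the $\nabla_{\Wvec}U$ terms, and note that no property of $\Dmat$, $\Qvec$ or $\gvec$ is used; your remarks on Lemma~\ref{lem:gradient} and the nondimensional scaling match the paper's comments around the proof, which also notes that a $\mathbf{z}$ varying along the trajectory would add a term $\partial_{\mathbf{z}} U \cdot \mathbf{z}'$.
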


\begin{proof}
Along a solution, $\Wvec$ and $\Wdot$ vary in $\tau$, and $\mathbf{z}$ is held fixed. The chain rule applied to $H =
\tfrac12 |\Wdot|^2 + U(\Wvec; \mathbf{z})$ then gives two terms, one from each part,
\begin{equation*}
  \frac{\mathrm{d}H}{\mathrm{d}\tau} \;=\; \Wdot \cdot \Wvec'' \;+\; \nabla_{\Wvec} U \cdot \Wdot .
\end{equation*}
The first comes from $\tfrac12 |\Wdot|^2$, whose gradient with respect to $\Wdot$ is $\Wdot$. The second needs $U$ to be
continuously differentiable, and nothing more. Substitute $\Wvec''$ from Equation~\eqref{eq:learned}:
\begin{equation*}
  \Wdot \cdot \Wvec'' \;=\; -\,\Wdot \cdot \nabla_{\Wvec} U \;-\; \Wdot^{\mathsf{T}} \Dmat\, \Wdot \;+\; \Wdot \cdot \big(
  \gvec + \Qvec \big) .
\end{equation*}
Adding $\nabla_{\Wvec} U \cdot \Wdot$ cancels the first term, which leaves Equation~\eqref{eq:identity}. The cancellation
uses only the form of Equation~\eqref{eq:learned}, in which the conservative force is exactly $-\nabla_{\Wvec} U$. No
property of $\Qvec$, $\Dmat$ or $\gvec$ was used. The identity therefore holds for every value of the parameters. If $\mathbf{z}$
varied along the trajectory, a term $\partial_{\mathbf{z}} U \cdot \mathbf{z}'$ would appear. Every record and rollout in this
report holds $\mathbf{z}$ fixed.
\end{proof}

Proposition~\ref{prop:identity} places no condition on $\Qvec$ or $\Dmat$, and that is the point of it. The
identity is a property of the form of Equation~\eqref{eq:learned}, not of its parameters. It holds at the
random initialization and after every epoch. An epoch is one pass of the optimizer over the training samples. It holds even
after a poor optimizer step. For the quartic $U$ of RC-19, the model evaluates the cubic force as $\mathcal{D}_{ijkl} W_j W_k W_l$. Here
$\mathcal{D}$ is the quartic membrane tensor of Equation~\eqref{eq:quartic}, in nondimensional form. This force is the
gradient of $U$ only when $\mathcal{D}$ is fully symmetric (Lemma~\ref{lem:gradient}). The identity therefore rests on that symmetrization.

\paragraph{The coordinates} The optimizer works in standardized, whitened coordinates. Let $\bar{\mathbf{x}}$ and
$\boldsymbol{\Sigma}_x$ be the mean and covariance of $\mathbf{x}$ over the record. Let $\mathbf{V}$ hold the
eigenvectors of $\boldsymbol{\Sigma}_x$ and $\mathbf{S}$ the diagonal matrix of the square roots of its eigenvalues. Let
$\tilde{\mathbf{x}} = \mathbf{S}^{-1}\mathbf{V}^{\mathsf{T}} (\mathbf{x} - \bar{\mathbf{x}})$, with mean
$\bar{\tilde x}$ and standard deviation $\sigma_{\tilde x}$ taken over all entries of $\tilde{\mathbf{x}}$ in the record.
Both are scalars. The whitened state is $\mathbf{s} = (\tilde{\mathbf{x}} - \bar{\tilde x}\,\mathbf{1})/\sigma_{\tilde x}$,
with $\mathbf{1}$ the vector of ones. An eigenvector is dropped when its
eigenvalue is below $10^{-10}$ times the largest. This affine change of variables is computed once from the record. The model
evaluates Equation~\eqref{eq:learned} in the physical coordinates $(\Wvec, \Wdot)$ and maps the resulting rate back
to the whitened coordinates. The map is invertible only if the whitening keeps all $2n$ directions. A model whose
whitening drops a direction is therefore rejected. Its coordinates no longer represent the state in which the
identity is stated.

\subsection{The learned dissipation}
\label{sec:methods-dissipation}

The dissipation is
\begin{equation}
  \Dmat \;=\; \operatorname{diag}\!\big( \gamma_D + \operatorname{softplus}(\mathbf{d}) \big)
    \;+\; \Bmat\Bmat^{\mathsf{T}},
  \qquad \mathbf{d} \in \mathbb{R}^{n}, \quad \Bmat \in \mathbb{R}^{n \times r}, \quad \gamma_D \ge 0 ,
  \label{eq:dissipation}
\end{equation}
where $\operatorname{softplus}(d) = \ln(1 + e^{d})$ is applied to each entry $d_i$ of $\mathbf{d}$. The floor $\gamma_D$ is a
constant fixed before training. The optimizer adjusts $\theta = (\mathbf{d}, \Bmat)$ and nothing else.

Equation~\eqref{eq:dissipation} is built to meet the requirement that Equation~\eqref{eq:identity} sets. There
$\Wdot^{\mathsf{T}}\Dmat\Wdot$ is the power the dissipation removes, and it must be positive for every nonzero velocity
$\Wdot$. Otherwise the dissipation could feed energy into the panel. Only the symmetric part of $\Dmat$ enters this
power. Let $\mathbf{Z}$ be antisymmetric, with $\mathbf{Z}^{\mathsf{T}} = -\mathbf{Z}$. A scalar equals its own
transpose. Hence $\Wdot^{\mathsf{T}}\mathbf{Z}\Wdot = \Wdot^{\mathsf{T}}\mathbf{Z}^{\mathsf{T}}\Wdot =
-\Wdot^{\mathsf{T}}\mathbf{Z}\Wdot$. A scalar equal to its own negative is zero. The requirement is
therefore that $\Dmat$ be symmetric positive definite. Each of the two parts of Equation~\eqref{eq:dissipation} has one
role. The diagonal part, with strictly positive entries $\gamma_D + \operatorname{softplus}(d_i)$, damps every modal
velocity on its own. The part $\Bmat\Bmat^{\mathsf{T}}$ adds $|\Bmat^{\mathsf{T}}\Wdot|^2 \ge 0$ to the power, and its
off-diagonal entries couple the modes. For $r < n$ it is singular, because $\Bmat^{\mathsf{T}}\Wdot = 0$ has nonzero
solutions $\Wdot$. It therefore cannot serve alone. With $r = n$ and $\gamma_D = 0$, the two parts can represent any
symmetric positive definite matrix $\mathbf{D}_\star$. Take $\gamma_\star$ between zero and the smallest eigenvalue of
$\mathbf{D}_\star$, and set every $d_i = \ln(e^{\gamma_\star} - 1)$, so that every diagonal entry is $\gamma_\star$. The
remainder $\mathbf{D}_\star - \gamma_\star\mathbf{I}$ is then symmetric positive definite, and its Cholesky factor gives
$\Bmat$ with $\Bmat\Bmat^{\mathsf{T}} = \mathbf{D}_\star - \gamma_\star\mathbf{I}$.

\begin{lemma}[The dissipation cannot supply energy]
\label{lem:spd}
For every finite $\theta$, $\Dmat$ is symmetric positive definite, and hence $\Wdot^{\mathsf{T}}\Dmat\Wdot > 0$ whenever
$\Wdot \neq 0$.
\end{lemma}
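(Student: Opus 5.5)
The proof splits $\Dmat$ into its two summands from Equation~\eqref{eq:dissipation} and checks symmetry and positivity for each; the sum then inherits both properties.

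\emph{Symmetry.} The diagonal part $\operatorname{diag}(\gamma_D + \operatorname{softplus}(\mathbf{d}))$ is diagonal and hence symmetric. The low-rank part is symmetric because $(\Bmat\Bmat^{\mathsf{T}})^{\mathsf{T}} = \Bmat\Bmat^{\mathsf{T}}$. A sum of symmetric matrices is symmetric, so $\Dmat$ is symmetric for every $\theta$.

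\emph{Positivity.} Fix a nonzero $\Wdot$ and expand the quadratic form:
\begin{equation*}
  \Wdot^{\mathsf{T}}\Dmat\Wdot \;=\; \sum_{i=1}^{n} \big(\gamma_D + \operatorname{softplus}(d_i)\big)\, \Wdot_i^{2} \;+\; |\Bmat^{\mathsf{T}}\Wdot|^{2}.
\end{equation*}
The second term is nonnegative for any $\Bmat$, whatever its rank $r$. The first term rests on one scalar fact: $\operatorname{softplus}(d) = \ln(1+e^{d}) > 0$ for every finite real $d$, because $e^{d} > 0$ gives $1 + e^{d} > 1$. Since $\gamma_D \ge 0$, every diagonal weight $\gamma_D + \operatorname{softplus}(d_i)$ is then strictly positive. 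Let $c$ be the smallest of these $n$ finitely many weights, so $c > 0$. The first sum is then at least $c\,|\Wdot|^{2} > 0$ whenever $\Wdot \neq 0$, and adding the nonnegative second term keeps the total strictly positive. Hence $\Dmat$ is positive definite.

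The one point to state explicitly is where finiteness of $\theta$ is used. As $d_i \to -\infty$, $\operatorname{softplus}(d_i) \to 0$. If in addition $\gamma_D = 0$, the diagonal weight can tend to zero in the limit, and positive definiteness could then fail. For every finite $\mathbf{d}$, however, each weight is strictly positive, which is exactly what the statement claims. There is therefore no real obstacle beyond noting that positivity is not uniform in $\theta$ when $\gamma_D = 0$, and that a strictly positive floor $\gamma_D$ makes it uniform. Finally, combining the lemma with the identity of Proposition~\ref{prop:identity} shows that the dissipation term $-\Wdot^{\mathsf{T}}\Dmat\Wdot$ is never positive, so the dissipation cannot supply energy.
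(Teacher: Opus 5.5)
Your proof is correct and follows the paper's own argument. You use the same split $\Dmat = \boldsymbol{\Lambda} + \Bmat\Bmat^{\mathsf{T}}$, the same expansion of the quadratic form, and the same facts that $\operatorname{softplus}(d_i) > 0$ for finite $d_i$ and $\gamma_D \ge 0$. The only cosmetic difference is the lower bound: you use the smallest diagonal weight times $|\Wdot|^2$, whereas the paper uses the single term $\Lambda_{kk} a_k^2$ for one nonzero entry $a_k$.
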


\begin{proof}
Write $\Dmat = \boldsymbol{\Lambda} + \Bmat\Bmat^{\mathsf{T}}$, with $\boldsymbol{\Lambda}$ the diagonal matrix of
entries $\Lambda_{ii} = \gamma_D + \operatorname{softplus}(d_i)$.

\emph{Symmetry.} A diagonal matrix equals its transpose: $\boldsymbol{\Lambda}^{\mathsf{T}} = \boldsymbol{\Lambda}$.
The transpose of a product reverses its factors, which gives
$(\Bmat\Bmat^{\mathsf{T}})^{\mathsf{T}} = (\Bmat^{\mathsf{T}})^{\mathsf{T}}\Bmat^{\mathsf{T}} = \Bmat\Bmat^{\mathsf{T}}$.
The sum $\Dmat$ is therefore symmetric.

\emph{The quadratic form.} Take any $\mathbf{a} \in \mathbb{R}^n$. The form is linear in the matrix, which gives
\begin{equation}
  \mathbf{a}^{\mathsf{T}}\Dmat\mathbf{a}
  \;=\; \mathbf{a}^{\mathsf{T}}\boldsymbol{\Lambda}\mathbf{a} + \mathbf{a}^{\mathsf{T}}\Bmat\Bmat^{\mathsf{T}}\mathbf{a} .
  \label{eq:spd-split}
\end{equation}
In index form the first term is $\sum_{i,j} a_i \Lambda_{ij} a_j$. Every off-diagonal entry of $\boldsymbol{\Lambda}$ is
zero, which leaves only the terms with $i = j$, and $\mathbf{a}^{\mathsf{T}}\boldsymbol{\Lambda}\mathbf{a} = \sum_i
\Lambda_{ii} a_i^2$. In the second term, set $\mathbf{c} = \Bmat^{\mathsf{T}}\mathbf{a} \in \mathbb{R}^r$. Then
$\mathbf{a}^{\mathsf{T}}\Bmat = \mathbf{c}^{\mathsf{T}}$, and the term is $\mathbf{c}^{\mathsf{T}}\mathbf{c} =
|\mathbf{c}|^2$. Equation~\eqref{eq:spd-split} becomes
\begin{equation}
  \mathbf{a}^{\mathsf{T}}\Dmat\mathbf{a}
  \;=\; \sum_{i=1}^{n} \big(\gamma_D + \operatorname{softplus}(d_i)\big)\, a_i^2 \;+\; |\Bmat^{\mathsf{T}}\mathbf{a}|^2 .
  \label{eq:spd-form}
\end{equation}

\emph{The sign of each term.} For finite $d_i$, $e^{d_i} > 0$. Hence $1 + e^{d_i} > 1$ and
$\operatorname{softplus}(d_i) > 0$. With $\gamma_D \ge 0$, every coefficient $\Lambda_{ii}$ is strictly positive. A
squared norm is never negative. Hence the last term of Equation~\eqref{eq:spd-form} is non-negative.

\emph{Conclusion.} Let $\mathbf{a} \neq 0$. At least one entry $a_k$ is nonzero. Hence the sum in
Equation~\eqref{eq:spd-form} is at least $\Lambda_{kk} a_k^2 > 0$. Adding the non-negative last term keeps it
positive. Hence $\mathbf{a}^{\mathsf{T}}\Dmat\mathbf{a} > 0$ for every nonzero $\mathbf{a}$. A symmetric matrix with
this property is positive definite by definition. Setting $\mathbf{a} = \Wdot$ gives the statement.
\end{proof}

The lemma holds for every value of $\mathbf{d}$ and $\Bmat$ that training can produce. Its proof relies only on the
algebraic form of Equation~\eqref{eq:dissipation}, a positive diagonal matrix plus $\Bmat\Bmat^{\mathsf{T}}$. It does
not depend on the training record, the number of epochs, or whether the optimizer converged. Positive definiteness is
thus guaranteed before training, after training, and after an unsuccessful training run. The one requirement is that
$\theta$ be finite. As $d_i \to -\infty$, $\operatorname{softplus}(d_i) \to 0$, and with $\gamma_D = 0$ the matrix would
be only positive semidefinite.

No constraint and no penalty is needed to keep the dissipation admissible because every finite $\theta$ gives a
positive definite $\Dmat$. The vector $\mathbf{d}$ holds one entry for each of
the $n$ retained modes and the $n$ by $r$ factor $\Bmat$ holds $nr$ entries. The parameter vector $\theta$ therefore holds $n + nr$ numbers. RC-19
retains $n = 7$ measured modes and uses rank $r = 4$, a fixed setting, giving $7 + 28 = 35$ numbers. HyMAX retains
$n = 2$ modes and uses the full rank $r = 2$, giving $2 + 4 = 6$ numbers. Every run reported here sets the floor to $\gamma_D = 0$. 
Brouwer et al.~\cite{brouwer2021} damp the same panel with the two-parameter Rayleigh form. 
Equation~\eqref{eq:dissipation} contains that form as a special case. In the units of $\tau$, the Rayleigh force
$-\mathbf{C}\mathbf{p}$ becomes $-\omega_1^{-1} \mathbf{C}\, \Wdot$, a diagonal matrix with positive entries $c_i =
(\alpha_d + \beta_d \omega_i^2)/\omega_1$. Setting $\Bmat = 0$, $\gamma_D = 0$ and $d_i = \ln(e^{c_i} - 1)$ gives
$\operatorname{softplus}(d_i) = c_i$, which reproduces it exactly.
\subsection{The fitted force term}
\label{sec:methods-head}

The fitted force term is affine in the features $\mathbf{v} = (\Wvec, \Wdot, \hat{\mathbf{z}})$,
\begin{equation}
  \Qvec(\Wvec, \Wdot, \mathbf{z}) \;=\; \mathbf{A}_h\, \mathbf{v} + \mathbf{b}_h ,
  \label{eq:head}
\end{equation}
where $\mathbf{A}_h \in \mathbb{R}^{n \times (2n+2)}$ is the coefficient matrix and $\mathbf{b}_h \in \mathbb{R}^{n}$ the
constant offset. The feature vector $\mathbf{v}$ stacks the $n$ modal displacements, the $n$ modal rates and the two
standardized conditioning parameters $\hat{\mathbf{z}}$. Its length is therefore $2n + 2$.

\paragraph{The conditioning features} Let $\mathbf{z}^{(m)}$ be the conditioning vector of record member $m$, for
$m = 1, \dots, N_m$. Each entry $j = 1, 2$ is standardized over the members,
\begin{equation}
  \begin{gathered}
  \hat z_j \;=\; \frac{z_j - \bar z_j}{\sigma_{z,j}}, \qquad
  \bar z_j \;=\; \frac{1}{N_m} \sum_{m=1}^{N_m} z_j^{(m)}, \\
  \sigma_{z,j} \;=\; \max\!\left( \left[ \frac{1}{N_m-1} \sum_{m=1}^{N_m} \big( z_j^{(m)} - \bar z_j \big)^2 \right]^{1/2},\; 10^{-8} \right).
  \end{gathered}
  \label{eq:z-standard}
\end{equation}
The mean $\bar z_j$ and the deviation $\sigma_{z,j}$ are computed once from the record. An entry that
does not vary over the record equals $\bar z_j$ at every member, and hence $\hat z_j = 0$ there. RC-19 retains $n = 7$
measured modes (Section~\ref{sec:methods-panel}). Hence $\mathbf{A}_h$ is seven by sixteen and $\mathbf{b}_h$ has seven
entries.

\paragraph{The fit} The fitted term is not trained. Before the first epoch it is fitted by ridge-regularized least
squares~\cite{hoerl1970ridge} to the residual acceleration
\begin{equation}
    \mathbf{r} \;=\; \Wvec''_{\mathrm{exact}} + \nabla_{\Wvec} U - \gvec
  \label{eq:head-target}
\end{equation}
over every time sample of every record member. Here $\Wvec''_{\mathrm{exact}} =
\dot{\mathbf{p}}/(w_s \omega_1^2)$ is the exact acceleration stored with the record, in the units of $\tau$. Let
$\mathbf{v}_k$ and $\mathbf{r}_k$ be the features and the residual acceleration at time sample $k$. Member $m$ stores $N_t^{(m)}$ time samples, and $k$ runs over all
$N_r = \sum_{m=1}^{N_m} N_t^{(m)}$ of them. The coefficients solve
\begin{equation}
  \min_{\mathbf{A}_h \in \mathbb{R}^{n \times (2n+2)},\; \mathbf{b}_h \in \mathbb{R}^{n}} \;
  \sum_{k=1}^{N_r} \big| \mathbf{A}_h \mathbf{v}_k + \mathbf{b}_h - \mathbf{r}_k \big|^2
  \;+\; \epsilon_r N_r \big( \|\mathbf{A}_h\|_F^2 + |\mathbf{b}_h|^2 \big), \qquad \epsilon_r = 10^{-8} ,
  \label{eq:ridge}
\end{equation}
with $\|\cdot\|_F$ the Frobenius norm. The fitted term is then held fixed during training.
Each RC-19 member covers \SI{0.5}{\second} at \SI{25}{\kilo\hertz}. Counting the initial state,
that gives $N_t^{(m)} = 0.5 \times 25\,000 + 1 = 12\,501$, and the three members give $N_r = 37\,503$. Each HyMAX
member covers \SI{202}{\milli\second}, which gives $N_t^{(m)} = 5\,051$, and the six members give $N_r = 30\,306$.

\paragraph{The rate} The fit must use the exact rate $\dot{\mathbf{p}}$ stored with the record
(Section~\ref{sec:methods-record}). The aerodynamic force is a correction of about two percent on the elastic rate. In a test, the reference model was sampled at \SI{5}{\kilo\hertz}, which is the rate of the experiment's digital image
correlation (Section~\ref{sec:problems-rc19}). A finite-difference rate then erred by twelve times that force. The fit then returned a term that was not the aerodynamic force.
A measured record carries no exact rate. The only rate available would then be a fourth-order central difference,
$\dot{\mathbf{p}}_k \approx (-\mathbf{p}_{k+2} + 8\mathbf{p}_{k+1} - 8\mathbf{p}_{k-1} + \mathbf{p}_{k-2})/(12\,
\Delta t_r)$, with $\Delta t_r$ the time-sample spacing. The test above shows that this rate needs a finer sampling
than the experiment's.

\paragraph{The drive} The drive $\gvec$ in Equation~\eqref{eq:head-target} is subtracted at the time samples, not at the
points between them where the integrator samples it. A drive that varies between time samples would otherwise leave a
residual that the fitted term absorbs.

\paragraph{What the fit can represent} On RC-19 the reference model's non-conservative forces,
$-(\mathbf{A}\mathbf{q} + \mathbf{B}\mathbf{p}) - \mathbf{C}\mathbf{p}$, are linear in the state. With $\mathbf{q} = w_s
\Wvec$, $\mathbf{p} = w_s \omega_1 \Wdot$, and accelerations divided by $w_s \omega_1^2$, they become $-\omega_1^{-2}
\mathbf{A}\, \Wvec - \omega_1^{-1} (\mathbf{B} + \mathbf{C})\, \Wdot$. Equation~\eqref{eq:head} represents them exactly,
with $\mathbf{A}_h = [-\omega_1^{-2} \mathbf{A},\; -\omega_1^{-1} (\mathbf{B} + \mathbf{C}),\; \mathbf{0}]$ and
$\mathbf{b}_h = \mathbf{0}$. The fit residual then measures only the error of the least-squares solve. On HyMAX the
operators follow the measured inflow in time. A fitted term with constant coefficients can therefore represent them only
on average. The fit residual there measures that limitation, and Section~\ref{sec:hymax} reports it.

\paragraph{Parameter count} A reader checking the RC-19 model should count thirty-five trained numbers and $7 \times 16 +
7 = 119$ fitted ones. The potential adds fixed coefficients that are neither trained nor fitted. Those are seven
frequencies, forty-nine entries in the preload operator $\mathbf{G}$ and seven in the load vector $\boldsymbol{\ell}$.
The quartic tensor $\mathcal{D}$ adds $7^4 = 2401$. The total is $7 + 49 + 7 + 2401 = 2464$.

\subsection{Time integration of the learned model}
\label{sec:methods-time}

The learned model, written as a first-order system in $\mathbf{x}$, is integrated by one of two time-stepping
schemes. Both are second-order accurate.

\paragraph{Generalized-$\alpha$} The first scheme is the generalized-$\alpha$ method of Chung and
Hulbert~\cite{chungHulbert1993}, which the earlier reports of this series also use~\cite{aguilo2026spmlopen}. Its
first-order form is that of Jansen et al.~\cite{jansenWhitingHulbert2000}. Its
parameters follow from one number, the spectral radius at infinite frequency $\rho_\infty$, which sets how strongly the
step damps high frequencies. This $\rho_\infty$ is a parameter of the step, not a density. We take $\rho_\infty = 0.5$,
the value of the open-systems report~\cite{aguilo2026spmlopen}. At $\rho_\infty = 1$ the step is the implicit midpoint rule, which leaves high-frequency content undamped. It satisfies Equation~\eqref{eq:identity} only up to its truncation error. Its numerical damping and truncation
error differ from those of the AVF step that generated the record. The trained RC-19 and HyMAX models reported in this
work were trained and evaluated with this step.

\paragraph{The AVF step} This is Equation~\eqref{eq:avf} applied to Equation~\eqref{eq:learned}, with $\mathbf{f}_m
= -\Dmat\Wdot_m + \Qvec(\mathbf{x}_m, \mathbf{z}) + \gvec$, at the step size the reference model used to generate the record.

\begin{proposition}[Discrete identity for the learned model]
\label{prop:avf-learned}
Let $\mathbf{x}_0$ and $\mathbf{x}_1$ be the states before and after one AVF step of size $\Delta\tau$. Let $\mathbf{x}_m = \tfrac12 (\mathbf{x}_0 + \mathbf{x}_1) = (\Wvec_m, \Wdot_m)$, and $H_k = \tfrac12
|\Wdot_k|^2 + U(\Wvec_k; \mathbf{z})$ for $k = 0, 1$. For every value of the trained and fitted parameters, $H_1 - H_0 =
\Delta\tau\, \Wdot_m \cdot \big( \Qvec(\mathbf{x}_m, \mathbf{z}) + \gvec - \Dmat\, \Wdot_m \big)$.
\end{proposition}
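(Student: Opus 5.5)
The plan is to reduce Proposition~\ref{prop:avf-learned} to Proposition~\ref{prop:avf}. That result was stated for $H = \tfrac12|\mathbf{p}|^2 + U(\mathbf{q})$ with $U$ a polynomial of degree at most four. It holds for \emph{any} $\mathbf{f}_m$, because its proof never used a property of the non-conservative force. The AVF step of the learned model is Equation~\eqref{eq:avf} with $(\mathbf{q}, \mathbf{p}, \Delta t)$ replaced by $(\Wvec, \Wdot, \Delta\tau)$. The potential is $U(\cdot\,; \mathbf{z})$, and the midpoint force is $\mathbf{f}_m = -\Dmat\Wdot_m + \Qvec(\mathbf{x}_m, \mathbf{z}) + \gvec$.

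First I check that the learned model's Hamiltonian has the required form. The kinetic part is $\tfrac12|\Wdot|^2$, by the nondimensionalization of Section~\ref{sec:methods-panel}. The potential is $U(\Wvec; \mathbf{z}) = U_{\mathrm{ref}}(w_s\Wvec)/(w_s^2\omega_1^2)$, with $\mathbf{z}$ held fixed during the step. For RC-19, $U_{\mathrm{ref}}$ is the quartic polynomial of Equation~\eqref{eq:reference model-H}. For HyMAX it is the quadratic of Equation~\eqref{eq:beam-H}. A linear change of variable and a constant scaling preserve polynomial degree, so $U$ has degree at most four in $\Wvec$.

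Second, I note that the gradient the model evaluates must be the true gradient of $U$. For the cubic term this holds by Lemma~\ref{lem:gradient}, since $\mathcal{D}$ is symmetrized in Equation~\eqref{eq:quartic}. This is the one place where a hypothesis enters. With an unsymmetrized tensor, the Simpson-rule step in the proof of Proposition~\ref{prop:avf} would integrate a field that is not $\nabla U$. The identity would then fail.

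Third, I substitute into Proposition~\ref{prop:avf}. The kinetic part gives $\tfrac12(|\Wdot_1|^2 - |\Wdot_0|^2) = \Delta\tau\,\Wdot_m\cdot(-\overline{\nabla U} + \mathbf{f}_m)$. The potential part gives $U(\Wvec_1) - U(\Wvec_0) = \Delta\tau\,\overline{\nabla U}\cdot\Wdot_m$, by exactness of Simpson's rule on the cubic integrand together with $\Wvec_1 - \Wvec_0 = \Delta\tau\,\Wdot_m$. The two $\overline{\nabla U}$ terms cancel. Inserting $\mathbf{f}_m$ then gives the stated right-hand side.

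No property of $\Dmat$, $\Qvec$ or $\gvec$ is used. Their values at the midpoint are simply some vector, so the identity holds for every trained and fitted parameter value. The same argument covers the time-varying drive, provided it is evaluated at a single point of the step. I expect the only real obstacle to be bookkeeping: confirming the degree bound and the gradient symmetry after nondimensionalization. The algebra itself is inherited from Proposition~\ref{prop:avf}.
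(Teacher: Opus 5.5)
Your proposal is correct and matches the paper's proof. The paper makes the same substitution of $(\Wvec,\Wdot,\Delta\tau)$ into the AVF step and checks that $U(\cdot\,;\mathbf{z})$ is a polynomial of degree at most four with kinetic part $\tfrac12|\Wdot|^2$. It then invokes Proposition~\ref{prop:avf}, whose proof places no condition on $\mathbf{f}_m$, including its implicit dependence on $\mathbf{x}_1$ through $\mathbf{x}_m$. Your extra point, that the evaluated cubic force must be the true gradient of $U$ by Lemma~\ref{lem:gradient}, is made by the paper in its surrounding discussion rather than inside this proof.
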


\begin{proof}
Replace $\mathbf{q}$ by $\Wvec$, $\mathbf{p}$ by $\Wdot$ and $\Delta t$ by $\Delta\tau$ in Equation~\eqref{eq:avf}. The
learned step then has the same two updates, with $U(\cdot\,; \mathbf{z})$ in place of $U$. With $\mathbf{z}$ fixed, $U$ is
a polynomial in $\Wvec$ of degree at most four: quartic for RC-19 and quadratic for HyMAX. The kinetic part of $H$ is
$\tfrac12 |\Wdot|^2$, as Proposition~\ref{prop:avf} requires. That proof uses only the two updates and places no
condition on $\mathbf{f}_m$. In particular, $\mathbf{f}_m$ may depend on $\mathbf{x}_1$ through $\mathbf{x}_m$.
Substituting $\mathbf{f}_m = -\Dmat\Wdot_m + \Qvec(\mathbf{x}_m, \mathbf{z}) + \gvec$ gives the statement.
\end{proof}

With the AVF step, the learned model and the reference model share one integrator. No difference therefore arises from
the time stepping itself. On a chaotic record this matters, because any per-step difference grows as $e^{\lambda t}$, with $\lambda$ the
record's largest Lyapunov exponent.
Section~\ref{sec:rc19-chaos} measures what the choice changes. The AVF step is solved by Newton's method with the Jacobian in closed form. Split $\mathbf{A}_h = [\mathbf{A}_W,
\mathbf{A}_{W'}, \mathbf{A}_z]$ by the three blocks of features. The step solves $\mathbf{F}(\mathbf{x}_1) = 0$, with
$\mathbf{F}_W = \Wvec_1 - \Wvec_0 - \Delta\tau\, \Wdot_m$ and $\mathbf{F}_{W'} = \Wdot_1 - \Wdot_0 + \Delta\tau
(\overline{\nabla U} - \mathbf{f}_m)$. Differentiating each term,
\begin{equation}
\begin{aligned}
  \frac{\partial \mathbf{F}_W}{\partial \Wvec_1} &= \mathbf{I}, &
  \frac{\partial \mathbf{F}_W}{\partial \Wdot_1} &= -\tfrac{\Delta\tau}{2} \mathbf{I}, \\
  \frac{\partial \mathbf{F}_{W'}}{\partial \Wvec_1} &= \Delta\tau \big[ \tfrac13 \nabla^2 U(\Wvec_m) + \tfrac16 \nabla^2
  U(\Wvec_1) \big] - \tfrac{\Delta\tau}{2} \mathbf{A}_W, &
  \frac{\partial \mathbf{F}_{W'}}{\partial \Wdot_1} &= \mathbf{I} - \tfrac{\Delta\tau}{2} \big( \mathbf{A}_{W'} - \Dmat
  \big) .
\end{aligned}
\label{eq:newton-jacobian}
\end{equation}
The factors $\tfrac12$ and $\tfrac16$ come from $\mathbf{x}_m = \tfrac12(\mathbf{x}_0 + \mathbf{x}_1)$ and from the weights
of $\overline{\nabla U}$. The Hessian of $U$ is explicit, $\Dmat$ is
itself a matrix, and the fitted force term is evaluated at the midpoint $\mathbf{x}_m$. Its derivative with respect to the end state
$\mathbf{x}_1$ is therefore half the columns of $\mathbf{A}_h$ that multiply $\Wvec$ and $\Wdot$. The columns that
multiply $\mathbf{z}$ do not enter, because $\mathbf{z}$ is fixed. Gradients of the objective with respect to $\theta$ are taken through the converged Newton solution by the
implicit function theorem~\cite{bai2019deq}. At $\mathbf{F}(\mathbf{x}_1) = 0$, $\partial\mathbf{x}_1/\partial\theta =
-(\partial\mathbf{F}/\partial\mathbf{x}_1)^{-1}\, \partial\mathbf{F}/\partial\theta$. The parameters enter $\mathbf{F}$
only through $\Dmat\Wdot_m$ in $\mathbf{f}_m$.

\subsection{The objective}
\label{sec:methods-objective}

The objective is a data term and a one-sided bound,
\begin{equation}
        J(\theta) \;=\; \frac{1}{2nN} \sum_{i=1}^{N} \big\| \hat{\mathbf{s}}_i(\theta) - \mathbf{s}_i \big\|^2
    \;+\; \mu\, \Big\langle \max\!\big(0,\, \|\hat{\mathbf{s}}\| - c_u\, s_{\max}\big)^2 \Big\rangle ,
  \label{eq:loss}
\end{equation}
in which $\hat{\mathbf{s}}_i$ is the predicted state in the whitened coordinates and $\mathbf{s}_i$ is the recorded
one. The norm $\|\cdot\|$ is Euclidean, and in the second term $\hat{\mathbf{s}}$ stands for each $\hat{\mathbf{s}}_i$.

\paragraph{The data term} Each training sample is a rollout. The model starts from one recorded state and is integrated
over sixteen time samples, with no recorded state fed back. The sum and the mean $\langle\cdot\rangle$ run over these
sixteen time samples and over the samples of a batch. Together they comprise $N$ predicted states. Dividing by $2n$
averages the squared error over the $2n$ components of each state.

\paragraph{The bound term} Let $s_{\max}$ be the largest $\|\mathbf{s}\|$ over all time samples of the record. The second
term is zero whenever the norm of the predicted state stays within $c_u = 1.4$ times $s_{\max}$. Beyond that bound it
keeps one diverging rollout from spoiling an epoch. Its weight is $\mu = 0.05$.

\paragraph{Guards on the rollout} Two guards act on the rollout itself. The first is a radial saturation. With radius
$R_{\mathrm{sat}} = 3 s_{\max}$ and knee $R_{\mathrm{knee}} = 0.9 R_{\mathrm{sat}}$, it maps $\hat{\mathbf{s}}$ to
$\hat{\mathbf{s}}\, \varrho(\|\hat{\mathbf{s}}\|)/\|\hat{\mathbf{s}}\|$. Write $\hat s = \|\hat{\mathbf{s}}\|$. Then
$\varrho(\hat s) = \hat s$ for $\hat s \le R_{\mathrm{knee}}$, and
$\varrho(\hat s) = R_{\mathrm{knee}} + (R_{\mathrm{sat}} - R_{\mathrm{knee}}) \tanh\!\big((\hat s - R_{\mathrm{knee}})/(R_{\mathrm{sat}} - R_{\mathrm{knee}})\big)$ beyond.
The map is exactly the identity below $2.7\, s_{\max}$, is continuously differentiable, and never exceeds $3\, s_{\max}$.
The second guard skips the optimizer update for any batch whose rollout exceeds fifty times $s_{\max}$. No update was
skipped in the runs reported here.

\paragraph{The optimizer} Optimization is by Adam~\cite{kingma2015adam}, at a learning rate of $2\times10^{-3}$ annealed
on a cosine schedule~\cite{loshchilov2017sgdr} to $10^{-5}$ over a budget of $400$ epochs. Batches hold eight samples.
Gradients are clipped to unit norm. When the Euclidean norm of the gradient over all trained parameters exceeds one, the
gradient is rescaled to norm one. The learning rates, epoch budget, window, batch, bound and guard factors follow the
protocol of the open-systems report of this series~\cite{aguilo2026spmlopen}. They were carried over unchanged and not
tuned for these problems. The trained model carries the parameters of its final epoch.

The objective contains no term that expresses conservation, dissipativity, symmetry, or any residual of the governing
equations. Proposition~\ref{prop:identity} is not encouraged by the objective. It is a property of the
parameterization, and Lemma~\ref{lem:spd} is a property of Equation~\eqref{eq:dissipation}. In a penalty formulation such as a physics-informed network~\cite{raissi2019pinn}, the governing equations compete with
the data for weight in a sum. They hold only to the degree the penalty buys. Here there is no rate of exchange between the two, and the optimizer
cannot trade structure for fit.

\subsection{Assimilation as a generalized force}
\label{sec:methods-observer}

Because the response is chaotic (Section~\ref{sec:rc19-chaos}), no model tracks a single realization indefinitely. A small difference between the model state and the true state grows as $e^{\lambda t}$, with $\lambda$ the largest
Lyapunov exponent. Let the difference start at size $\varepsilon_0$. The state scale $\sigma$ is the size of the response,
at which the difference saturates. The tracking horizon is the time to grow from $\varepsilon_0$ to $\sigma$, $T_h =
\lambda^{-1}\ln(\sigma/\varepsilon_0)$.
A more accurate model reduces $\varepsilon_0$, which lengthens the horizon only logarithmically. Correcting the state with measurement is therefore built into the same
energy accounting, as one more generalized force.
This section adds that correction to the reference model and derives its gains. Section~\ref{sec:rc19-assimilation}
tests it in a twin experiment with synthetic observations. The test asks whether the experiment's published points and
rate would suffice to track the panel. No other result in this report depends on it.

Let $\Phimat \in \mathbb{R}^{n \times n_s}$ hold the unit-mass mode shapes at the $n_s$ instrumented points. The
instrument then reads $\mathbf{y} = \Phimat^{\mathsf{T}}\mathbf{q}$. The correction is a form of nudging~\cite{hoke1976nudging}, a Luenberger observer~\cite{luenberger1964observer} acting
through the same generalized forces as the flow. It feeds back the error in both the reading
and its rate,
\begin{equation}
  \mathbf{f}_{\mathrm{obs}} \;=\; k_p\, \Phimat \big( \mathbf{y}_{\mathrm{obs}} - \Phimat^{\mathsf{T}}\mathbf{q} \big)
  \;+\; k_d\, \Phimat \big( \dot{\mathbf{y}}_{\mathrm{obs}} - \Phimat^{\mathsf{T}}\mathbf{p} \big),
  \qquad \mathbf{M}_\Phi = \Phimat\Phimat^{\mathsf{T}} .
  \label{eq:nudging}
\end{equation}
Here $\mathbf{f}_{\mathrm{obs}}$ is the correction force, and $\mathbf{y}_{\mathrm{obs}}$ and $\dot{\mathbf{y}}_{\mathrm{obs}}$
are the observed reading and its rate. The scalars $k_p > 0$ and $k_d > 0$ are the gains on the reading and on its rate.
The matrix $\mathbf{M}_\Phi$ is the observability Gramian of the instrumented points.
It is applied in two halves. The state-dependent half, $-k_p \mathbf{M}_\Phi \mathbf{q} - k_d \mathbf{M}_\Phi \mathbf{p}$,
is added to the aerodynamic operators, $\mathbf{A} \to \mathbf{A} + k_p\mathbf{M}_\Phi$ and $\mathbf{B} \to \mathbf{B} +
k_d\mathbf{M}_\Phi$, and the implicit step solves it with the Newton Jacobian. The measurement half, $k_p \Phimat \mathbf{y}_{\mathrm{obs}} + k_d \Phimat
\dot{\mathbf{y}}_{\mathrm{obs}}$, is a function of time alone and enters as a drive. The rate term must use the rate of
the observation. Feeding back $-k_d\mathbf{M}_\Phi\mathbf{p}$ alone would leave a drive proportional to the true velocity that never
vanishes. The error equation below would then not be the one being solved. The observed rate is the
difference of consecutive samples, which carries the instrument's noise with it. Each reading at each point carries independent zero-mean Gaussian noise of a stated standard deviation. Noise is drawn
once per sample, not at every evaluation. The observer therefore cannot average it away.

Between samples, the reading is reconstructed in one of two ways. Let samples $\mathbf{y}_j$ arrive at $t_j = j\,
\Delta t_s$, and let $t_j \le t < t_{j+1}$. A zero-order \emph{hold} uses $\mathbf{y}_j$ and the rate $(\mathbf{y}_j -
\mathbf{y}_{j-1})/\Delta t_s$. It is the only reconstruction available in real time. A \emph{smoother} uses $\mathbf{y}_j +
\upsilon (\mathbf{y}_{j+1} - \mathbf{y}_j)$, with $\upsilon = (t - t_j)/\Delta t_s$, and the rate $(\mathbf{y}_{j+1} -
\mathbf{y}_j)/\Delta t_s$. It uses the next sample as well as the last. The smoother is non-causal. It is legitimate here only
because assimilating a recorded series is an offline problem. Every tracking number reported in
Section~\ref{sec:rc19-assimilation} states which reconstruction produced it.

\begin{corollary}[The budget closes under assimilation]
\label{cor:observer}
Consider the reference model in physical time, with the operators and the drive above. Every AVF step satisfies $H_1 -
H_0 = \Delta t\, \mathbf{p}_m \cdot \mathbf{f}_m$ exactly, where $\mathbf{f}_m$ now includes the correction. The work done by the observer is a reported term of the
budget.
\end{corollary}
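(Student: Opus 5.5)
The corollary follows from Proposition~\ref{prop:avf} once the assimilation force is shown to enter the AVF step as just another term of $\mathbf{f}_m$. The plan is to write the assimilated reference model in the form of Equation~\eqref{eq:avf} and check that the proposition's hypotheses still hold. Those hypotheses are that $H = \tfrac12|\mathbf{p}|^2 + U(\mathbf{q})$ with $U$ of degree at most four, that the position update is $\mathbf{q}_1 - \mathbf{q}_0 = \Delta t\,\mathbf{p}_m$, and that the momentum update is $\mathbf{p}_1 - \mathbf{p}_0 = \Delta t[-\overline{\nabla U} + \mathbf{f}_m]$ for some vector $\mathbf{f}_m$.

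First, I would note what the correction changes. It is split into a state-dependent half, absorbed as $\mathbf{A}\to\mathbf{A}+k_p\mathbf{M}_\Phi$ and $\mathbf{B}\to\mathbf{B}+k_d\mathbf{M}_\Phi$, and a measurement half $k_p\Phimat\mathbf{y}_{\mathrm{obs}}+k_d\Phimat\dot{\mathbf{y}}_{\mathrm{obs}}$, which enters as a time-dependent drive. Neither half touches $H$. Even though $k_p\mathbf{M}_\Phi$ is symmetric and could be read as a potential, the implementation places it among the non-conservative operators, so $U$ is still the quartic of Equation~\eqref{eq:reference model-H}. The midpoint force is therefore
\[
\mathbf{f}_m = -\mathbf{C}\mathbf{p}_m - (\mathbf{A}+k_p\mathbf{M}_\Phi)\mathbf{q}_m - (\mathbf{B}+k_d\mathbf{M}_\Phi)\mathbf{p}_m + \mathbf{g}_m + \mathbf{f}_{\mathrm{obs}}^{\mathrm{drive}},
\]
where $\mathbf{f}_{\mathrm{obs}}^{\mathrm{drive}}$ is the measurement half evaluated at the step midpoint, using whichever reconstruction (hold or smoother) is in use. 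In every case it is a vector fixed by the step's data.

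Second, I would invoke Proposition~\ref{prop:avf}. Its proof placed no condition on $\mathbf{f}_m$: not linearity, not sign, and not independence from $\mathbf{x}_1$ or from $t$. It therefore gives $H_1 - H_0 = \Delta t\,\mathbf{p}_m\cdot\mathbf{f}_m$ exactly. Expanding $\mathbf{f}_m$ by linearity of the dot product splits the budget into separate reported terms:
\begin{itemize}
\item the damping power $-\Delta t\,\mathbf{p}_m^{\mathsf T}\mathbf{C}\mathbf{p}_m$,
\item the aerodynamic work $-\Delta t\,\mathbf{p}_m\cdot(\mathbf{A}\mathbf{q}_m+\mathbf{B}\mathbf{p}_m)$,
\item the observer work $\Delta t\,\mathbf{p}_m\cdot\big[k_p\Phimat(\mathbf{y}_{\mathrm{obs}}-\Phimat^{\mathsf T}\mathbf{q}_m)+k_d\Phimat(\dot{\mathbf{y}}_{\mathrm{obs}}-\Phimat^{\mathsf T}\mathbf{p}_m)\big]$.
\end{itemize}
The observer work is the term the corollary asks to be reported.

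The only point needing care is that the observer terms, unlike the aerodynamic ones, may depend on the step's time interval through the sample index and the elapsed fraction $\upsilon$. This is harmless because the proposition is stated per step, for an arbitrary $\mathbf{f}_m$. One must also confirm that the Newton solve treats the added operators consistently at the midpoint, so that the discrete equations being solved are exactly of the form of Equation~\eqref{eq:avf}. The identity then holds up to solver tolerance, as stated after Proposition~\ref{prop:avf}.
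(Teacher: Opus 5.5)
Your proposal is correct and matches the paper's proof: the correction leaves $U$ unchanged and only adds terms to $\mathbf{f}_m$, so Proposition~\ref{prop:avf}, which places no condition on $\mathbf{f}_m$, gives the identity, and the observer's share $\Delta t\,\mathbf{p}_m\cdot\mathbf{f}_{\mathrm{obs},m}$ follows by linearity of the dot product. One small omission: your itemized budget leaves out the drive work $\Delta t\,\mathbf{p}_m\cdot\mathbf{g}_m$, which belongs in the list whenever a drive is present.
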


\begin{proof}
The correction changes $\mathbf{f}_m$ and leaves $U$ unchanged. With it, $\mathbf{f}_m = -\mathbf{C}\mathbf{p}_m -
(\mathbf{A} + k_p \mathbf{M}_\Phi) \mathbf{q}_m - (\mathbf{B} + k_d \mathbf{M}_\Phi) \mathbf{p}_m + \mathbf{g} + k_p
\Phimat \mathbf{y}_{\mathrm{obs}} + k_d \Phimat \dot{\mathbf{y}}_{\mathrm{obs}}$. Proposition~\ref{prop:avf} holds for any
$\mathbf{f}_m$, which gives the statement. The observer's share of the change is $\Delta t\, \mathbf{p}_m \cdot
\mathbf{f}_{\mathrm{obs},m}$, with $\mathbf{f}_{\mathrm{obs},m}$ the correction evaluated at the step's midpoint.
\end{proof}

The corollary does not say that energy is conserved. Assimilation injects energy by design. It says that the discrete energy balance holds exactly, and that the injected energy appears as a separate,
reported term. A filter that reset the state between steps would change $H$ by a jump
that is the work of no force. The balance $H_1 - H_0 = \Delta t\, \mathbf{p}_m \cdot \mathbf{f}_m$ would then fail
at every reset. The correction is therefore a continuous force rather than a jump.

The gains are derived, not tuned. The derivation starts from the corrected reference model, which consists of the modal equations of
motion of the panel with the correction added. With $\mathbf{p} = \dot{\mathbf{q}}$ and the correction's
$\mathbf{f}_m$ from the proof above, it reads
\begin{equation}
  \ddot{\mathbf{q}} + \big( \mathbf{C} + \mathbf{B} + k_d \mathbf{M}_\Phi \big) \dot{\mathbf{q}} + \nabla U(\mathbf{q})
  + \big( \mathbf{A} + k_p \mathbf{M}_\Phi \big) \mathbf{q}
  \;=\; \mathbf{g} + k_p \Phimat \mathbf{y}_{\mathrm{obs}} + k_d \Phimat \dot{\mathbf{y}}_{\mathrm{obs}} .
  \label{eq:corrected}
\end{equation}
Let $\mathbf{q}$ and $\mathbf{q} + \boldsymbol{\delta}$ both solve Equation~\eqref{eq:corrected} with the same
observations. Write Equation~\eqref{eq:corrected} for each and subtract the first from the second. The right-hand side
is a function of time alone and cancels. Each linear term leaves the same operator acting on $\boldsymbol{\delta}$. The
potential leaves $\nabla U(\mathbf{q} + \boldsymbol{\delta}) - \nabla U(\mathbf{q})$. A first-order Taylor
expansion gives $\nabla U(\mathbf{q} + \boldsymbol{\delta}) - \nabla U(\mathbf{q}) = \nabla^2 U(\mathbf{q})\,
\boldsymbol{\delta} + O(|\boldsymbol{\delta}|^2)$, with $\nabla^2 U$ the Hessian of the potential. Dropping the
second-order term gives the error equation
\begin{equation}
    \ddot{\boldsymbol{\delta}} + \big( \mathbf{C} + \mathbf{B} + k_d \mathbf{M}_\Phi \big) \dot{\boldsymbol{\delta}}
  + \big( \nabla^2 U(\mathbf{q}) + \mathbf{A} + k_p \mathbf{M}_\Phi \big) \boldsymbol{\delta} = 0 ,
  \label{eq:error}
\end{equation}
which is Equation 19 of~\cite{brouwer2021} with the gains added to the operators. Its largest Lyapunov exponent, the closed-loop exponent, is computed along the trajectory being tracked. The
estimator~\cite{benettin1980lyapunov} is the one used for the uncorrected, open-loop panel. The critical gain is the smallest $k_p$, at a fixed $k_d$, for which the
closed-loop exponent is negative. It is found by bisection. At the chaotic condition, with an open-loop exponent of \SI{285}{\per\second}, no position gain alone made the
closed-loop exponent negative. A restoring force turns growth into oscillation but does not damp it. Rate feedback
contracted the error.

$\mathbf{M}_\Phi$ is symmetric positive semidefinite, because $\mathbf{a}^{\mathsf{T}} \Phimat\Phimat^{\mathsf{T}}
\mathbf{a} = |\Phimat^{\mathsf{T}} \mathbf{a}|^2 \ge 0$ for every $\mathbf{a}$. A positive gain can therefore only add
stiffness or damping to the error dynamics. Its condition number decides whether the instrumented points can observe every mode.
If mode $i$ had a node at every point, row $i$ of $\Phimat$ would be zero. Then $\mathbf{1}_i^{\mathsf{T}}
\mathbf{M}_\Phi \mathbf{1}_i = 0$, with $\mathbf{1}_i$ the $i$th unit vector, and $\mathbf{M}_\Phi$ would be singular. No
gain could then recover that mode's amplitude.

The observations in this work are synthetic. They are readings of a reference model trajectory, sampled at the
rate and with the noise stated with each result in Section~\ref{sec:rc19-assimilation}. This is a twin
experiment~\cite{kalnay2003assimilation}, in which the model is exact and only its initial state is wrong. It measures
the tracking error that a given sampling rate and noise level allow. It is not agreement with the
experiment.

\section{Experimental setup and verification protocol}
\label{sec:setup}

\subsection{Training}
\label{sec:setup-training}

\paragraph{Training data} The reference models generate both training records from the challenge-problem
packages~\cite{aepw4hswg}. The HyMAX model is trained on the six-member record of Section~\ref{sec:methods-record}.
The RC-19 model is trained on the three-member record of the same section, at the periodic case's temperature rise of
\SI{12.76}{\kelvin}. Sections~\ref{sec:rc19-regime} and~\ref{sec:rc19-chaos} run the RC-19 model at other temperatures.

Temperature could enter the RC-19 model in two places. One is the exact potential $U$. The other is the fitted force
term, where the standardized temperature rise $\hat z_2$ of Equation~\eqref{eq:z-standard} multiplies the column
$\mathbf{a}_T$ of $\mathbf{A}_z$. The training record holds the single value $\Delta T = \SI{12.76}{\kelvin}$. Every
member therefore has $z_2^{(m)} = \bar z_2$, and Equation~\eqref{eq:z-standard} gives $\hat z_2 = 0$ at every time
sample. Thus, the product $\mathbf{a}_T \hat z_2$ vanishes, and $\mathbf{a}_T$ does not appear in the data term of
Equation~\eqref{eq:ridge}. Only the ridge term acts on $\mathbf{a}_T$, and it is smallest at $\mathbf{a}_T = \mathbf{0}$.
Hence, $\mathbf{a}_T = \mathbf{0}$ and the temperature reaches the RC-19 model only through $U$.

\paragraph{Stopping} Training stopped after $128$ epochs for RC-19 and $263$ for HyMAX, within a budget of $400$.
A second HyMAX model with three retained modes, reported in Section~\ref{sec:hymax}, stopped after $247$ epochs.
Convergence is judged against the lowest epoch mean of the objective $J$ of Equation~\eqref{eq:loss} recorded in each
run. On RC-19 that lowest value is $2.44\times10^{-5}$, at the 97th epoch. The epoch mean was $2.59\times10^{-5}$ after
the first epoch and within one percent of the lowest value after the second. On HyMAX the lowest value is
$3.17\times10^{-5}$, at the 135th epoch, and the epoch mean was within one percent of it after the fifth. Over the last
twenty epochs the epoch mean varied by $2.5$ percent on RC-19 and $3.5$ percent on HyMAX.

\subsection{RC-19 protocol}
\label{sec:setup-rc19}

\paragraph{Preregistration} Following~\cite{nosek2018preregistration}, the prediction of each study was fixed before its
run, together with its acceptance conditions and the statistics they test. Table~\ref{tab:predictions} lists them, and no
condition was moved afterward. One condition proved unusable: R2 set a tolerance below the noise of its statistic. Its
failure is reported, and its replacement, R3, was fixed before the replacement run (Section~\ref{sec:predictions}).

\paragraph{Estimators} Each RC-19 response is summarized by the quantities of Table~\ref{tab:estimators}, all read at
three-quarter chord, with displacements in thicknesses. The spectra use the package's spectral settings
(Section~\ref{sec:problems-rc19}). Every model frequency therefore lies on a \SI{4.88}{\hertz} grid. The measured
\SI{236}{\hertz} of Brouwer et al.~\cite{brouwer2021} lies between two bins of this grid, at $234.4$ and
\SI{239.3}{\hertz}. The two regime thresholds and the band-error power fraction are fixed settings
of this study. The Lyapunov estimator was checked on one mode of stiffness $-900^2\,\mathrm{s}^{-2}$, whose error obeys
$\ddot\delta = 900^2 \delta$ and grows at exactly \SI{900}{\per\second}. It returns \SI{900.1}{\per\second}.

\begin{table}[htbp]
  \centering
  \caption{Estimators of the RC-19 responses, all read at three-quarter chord.}
  \label{tab:estimators}
  \small
  \begin{tabular}{>{\raggedright\arraybackslash}p{0.17\textwidth}>{\raggedright\arraybackslash}p{0.47\textwidth}>{\raggedright\arraybackslash}p{0.28\textwidth}}
    \toprule
    Quantity & Definition & Setting \\
    \midrule
    Regime & static when the displacement standard deviation is below the static threshold; otherwise periodic when the
      dominant spectral peak and its two neighboring bins carry more than the periodic fraction of the variance;
      otherwise chaotic & static threshold $0.02$ thicknesses; periodic fraction one half \\
    Dominant frequency & strongest bin of the displacement spectrum & Hamming windows of \SI{0.2048}{\second}, half
      overlap; \SI{4.88}{\hertz} bins \\
    Amplitude & standard deviation of the displacement, in thicknesses & -- \\
    Peak amplitude & largest deflection magnitude over the record, in thicknesses & -- \\
    Band error & root-mean-square difference of base-10 log power between two spectra, divided by the root-mean-square
      of the reference spectrum's log power, over the strongest reference bins & bins ranked by reference power and kept
      until they hold \SI{99}{\percent} of it \\
    Largest Lyapunov exponent & growth rate of the linearized error equations of Brouwer et al., Equations 19 to 23
      of~\cite{brouwer2021}, marched along the model's own trajectory with the operators of the model under test, in
      \si{\per\second} & -- \\
    \bottomrule
  \end{tabular}
\end{table}

\paragraph{Repeat sets and ensembles} In a chaotic response, a frequency means nothing without the number of
trajectories behind it. The two buckled static shapes of opposite sign of the heated panel are called its \emph{wells}. A \emph{kick} is a small first-mode displacement that is added to the buckled shape before release. A \emph{repeat set} is eight kicked releases per well, with kicks spaced logarithmically from $10^{-3}$ to $10^{-1}$
thicknesses. Eight is a fixed setting of this study. A single release cannot decide a regime. A set therefore needs several. An
\emph{ensemble} of $N_e$ members samples kicks inside a stated interval and reports a mean with its standard deviation.
The ensembles of Section~\ref{sec:rc19-agreement} hold thirty members. The relative standard error of a sample standard
deviation is about $1/\sqrt{2(N_e - 1)}$, which is \SI{32}{\percent} at six members and \SI{13}{\percent} at thirty.

\paragraph{Vocabulary} Several sources produce a number in this report, and they are easily run together. The terms
in Table~\ref{tab:vocabulary} are used strictly. Two operating conditions recur, and they differ. The
\emph{experimental operating condition} is the one the tunnel ran at. The \emph{simulation operating conditions} are two 
conditions from the simulations of Brouwer et al.~\cite{brouwer2021}. The periodic one, at \SI{52.7}{\kilo\pascal} and
\SI{8.8}{\kelvin}, is the single-well response of their Fig.~11a. The chaotic one, at \SI{51.1}{\kilo\pascal} and
\SI{13.6}{\kelvin}, is that of their Fig.~12. 
\emph{Self-consistency} is agreement of the trained model with the reference model that generated its record. 
It also bounds the learning rather than the physics. \emph{Experimental agreement} appears only in
Section~\ref{sec:rc19-agreement}.

\begin{table}[htbp]
  \centering
  \caption{The source of a quantity, and the term this report uses for it.}
  \label{tab:vocabulary}
  \small
  \begin{tabular}{p{0.27\textwidth}p{0.30\textwidth}p{0.33\textwidth}}
    \toprule
    Term & What produced it & Example \\
    \midrule
    experimental measurement & the wind tunnel & the \SI{236}{\hertz} response frequency \\
    derived from experimental measurements & a calculation from quantities the tunnel measured & the adiabatic
      wall temperature from the Kulite pressures; the pressure ratio of $1.521$ from the pressure-sensitive paint \\
    result of Brouwer et al. & the reduced-order model of~\cite{brouwer2021} & its regimes; its \SI{113}{\hertz} peak \\
    simulation result & the computational fluid dynamics of~\cite{brouwer2021} & the \SI{377}{\kelvin} wall
      temperature \\
    reference model result & the reduced physics model of Section~\ref{sec:methods}, not trained & the aerodynamic
      sensitivity; the assimilation twin experiment \\
    trained-model prediction & the learned model of Section~\ref{sec:methods-identity} & the ensemble frequency of
      \SI{238.4}{\hertz} \\
    \bottomrule
  \end{tabular}
\end{table}

\subsection{HyMAX protocol}
\label{sec:setup-hymax}

The HyMAX study compares the trained model's trailing-edge displacement with one member of the six-member HyMAX record
(Section~\ref{sec:methods-record}). That member has a back pressure $p_b = \SI{549}{\pascal}$ and the shock at
$x_s = \SI{117}{\milli\meter}$ from the hinge. The trained model runs in free rollout from recorded time sample $100$.
No recorded state is fed back after it. Over a window of $N_w$ time samples after the start, the relative root-mean-square
error is
\begin{equation}
  e_{\mathrm{rms}} \;=\; \left[ \frac{\sum_{k=1}^{N_w} \big( w_{\mathrm{te}}(t_k) - w_{\mathrm{te}}^{\mathrm{ref}}(t_k) \big)^2}
  {\sum_{k=1}^{N_w} w_{\mathrm{te}}^{\mathrm{ref}}(t_k)^2} \right]^{1/2} .
  \label{eq:hymax-rms}
\end{equation}
Here $w_{\mathrm{te}}(t_k)$ and $w_{\mathrm{te}}^{\mathrm{ref}}(t_k)$ are the trailing-edge displacements of the trained
model and of the record member at time sample $k$. The error is quoted with its window since an error without its
window is not a result. The early window spans $N_w = 48$ time samples, \SI{1.9}{\milli\second}, which is three training
windows of $16$ each. The full window spans the whole \SI{64.9}{\milli\second} rollout. The start time sample and the
early window are fixed settings of the study.

\section{RC-19: the clamped panel at Mach 1.92}
\label{sec:rc19}

\subsection{Exact structure, and self-consistency}
\label{sec:rc19-self}

The energy identity of Proposition~\ref{prop:identity} was checked on the trained weights at random states with
amplitudes from $10^{-6}$ to $10^{6}$, $24$ states per amplitude. At each state, $\mathrm{d}H/\mathrm{d}\tau$ by automatic
differentiation is compared with the right side of Equation~\eqref{eq:identity}. The relative residual is their
difference divided by the largest of the terms that cancel. The worst relative residual is $2.2\times10^{-15}$. The individual terms span about forty-five decades over that amplitude range. These are measurements of roundoff, not of how well a condition was
learned. Proposition~\ref{prop:avf} states the discrete energy balance of one AVF step. We checked it along eighty
reference model trajectories at the simulation and the experimental operating conditions. The worst residual per
step, divided by the energy $H$ at that step, is $3.8\times10^{-11}$. The learned dissipation was positive definite at every training epoch of every model, 
as Lemma~\ref{lem:spd} requires. Figure~\ref{fig:energy} shows both checks.

\begin{figure}[htbp]
  \centering
  \includegraphics[width=\textwidth]{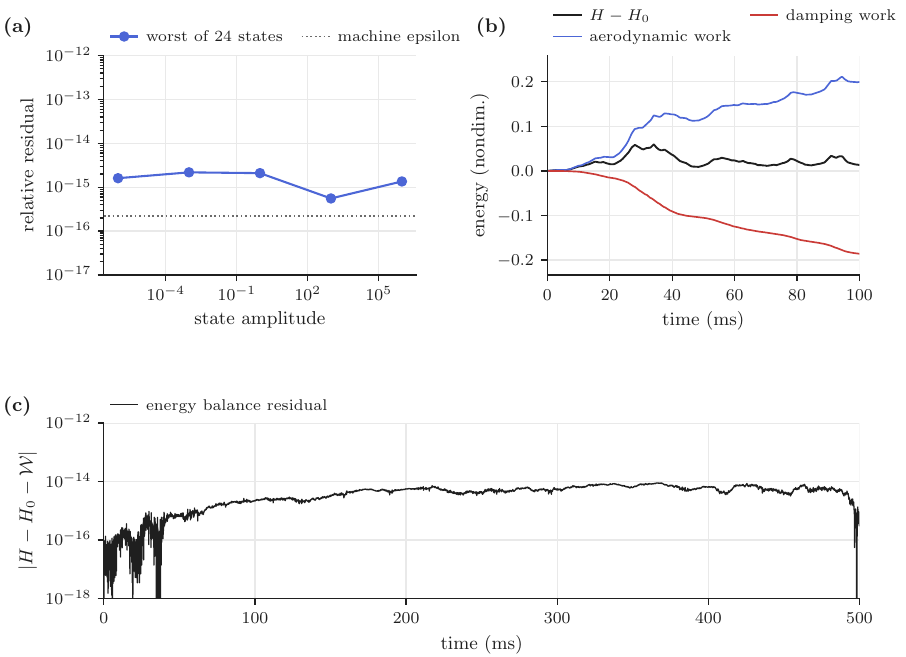}
  \caption{The energy accounting, measured. (a) The worst relative residual of the energy identity,
  Equation~\eqref{eq:identity}, on the trained weights, against state amplitude over twelve decades. Each point is the worst
  of $24$ random states. (b) Along one record member of the reference model, the change in energy $H - H_0$ and the work
  done by the aerodynamic force and the damping. (c) The closure $|H - H_0 - \mathcal{W}|$ at every step of the record, where $\mathcal{W}$ is the total work of the
  two forces.}
    \label{fig:energy}
\end{figure}

\begin{figure}[htbp]
  \centering
  \includegraphics[width=\textwidth]{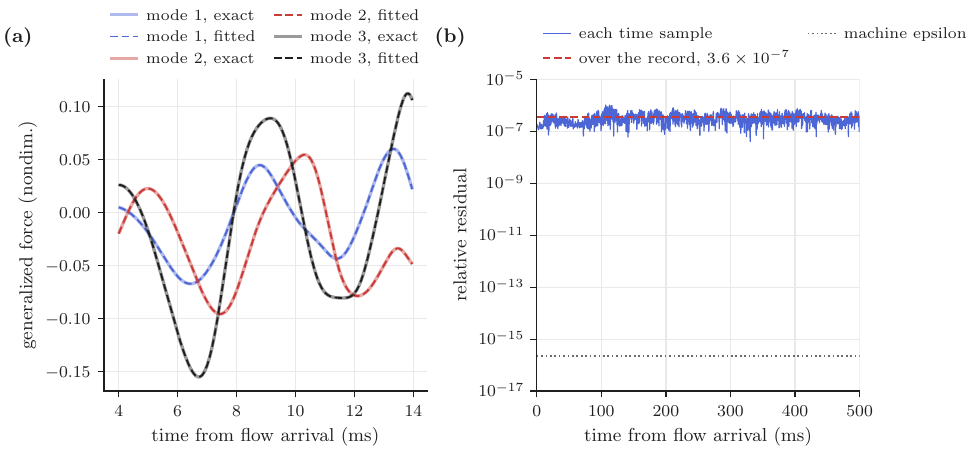}
  \caption{The fitted force term of the periodic-case trained model against the reference model's exact non-conservative
  force, $-(\mathbf{A}\mathbf{q} + (\mathbf{B} + \mathbf{C})\mathbf{p})$. Both are evaluated on the states of the record
  member at \SI{50.43}{\kilo\pascal}. (a) The first three of the seven modes over \SI{10}{\milli\second}. Solid lines are
  exact and dashed lines fitted. (b) The residual norm at each time sample, relative to the root-mean-square norm of the
  exact force over the record. The dashed line is the relative root-mean-square error, $3.6\times10^{-7}$ over the
  time samples where the exact force exceeds five percent of its largest norm.}
  \label{fig:force}
\end{figure}

Table~\ref{tab:self} reports the trained model against the reference model whose record it was fitted to. The fitted force term
reproduces the reference model's non-conservative operators to $3.6\times10^{-7}$ relative error. Figure~\ref{fig:force}
shows the fitted and exact forces of the first three modes and the residual at every time sample of the record. 
Its displacement block matches the aerodynamic stiffness operator $-\mathbf{A}$ to $9.6\times10^{-7}$. 
Its rate block matches the aerodynamic and structural damping, $-(\mathbf{B} + \mathbf{C})$, to $7.3\times10^{-7}$. 
The fitted term thus carries the structural damping as well as the aerodynamic force. The trained dissipation
therefore has nothing left to carry. Its eigenvalues settle at $2\times10^{-8}$ to $10^{-7}$, against $0.010$ to $0.020$
for the reference model's damping in the same units. Training confirms the fit on this record rather than improving it. On a
record whose non-conservative forces are not linear in the state, the trained dissipation would carry what the fitted term
cannot. These records do not test that.

\begin{table}[htbp]
  \centering
  \caption{Self-consistency: the trained RC-19 model against the reference model that generated its record. These are not
  comparisons with experimental measurements, which are in Section~\ref{sec:rc19-agreement}. The last three statistics
  are taken on one record member, at \SI{50.43}{\kilo\pascal}, at the nine extraction points. They cover one free rollout of
  $48$ time samples, \SI{1.9}{\milli\second}. The band error uses the log-power measure of Table~\ref{tab:estimators}, with the spectral window set to
  the whole $48$-sample rollout, and is the largest over the points. The amplitude ratio is the smallest and largest over the points. The mean error is the largest
  over the points.}
  \label{tab:self}
  \begin{tabular}{p{0.52\textwidth}c}
    \toprule
    Quantity & Measured \\
    \midrule
    Energy identity on the trained weights, worst relative residual & $2.2\times10^{-15}$ \\
    Fitted non-conservative forces, relative error & $3.6\times10^{-7}$ \\
    Spectral band error & \SI{0.16}{\percent} \\
    Amplitude, ratio of standard deviations & $0.993$ to $1.015$ \\
    Mean deflection error & $0.0009$ thicknesses \\
    Learned dissipation, positive definite & in every epoch, by construction \\
    \bottomrule
  \end{tabular}
\end{table}

Three of the reference model's own spectral quantities are inputs rather than results. 
Modes two to seven carry the measured frequencies because the basis is the measured shapes. 
The installed first frequency and the buckling temperature are the two values the reference model is calibrated on 
(Section~\ref{sec:methods-panel}). Agreement on those three is a check of the construction, not a prediction.

\subsection{Agreement with the experimental measurements}
\label{sec:rc19-agreement}

The package and the published papers support three comparisons with the experiment, and we make all three. None of the numbers below
is fitted to a response measurement.

\paragraph{The shock, from geometry alone} As in~\cite{brouwer2023pressure}, the wedge angle follows from the package's wedge height and length, and the
shock angle from its schlieren. The exact two-shock reflection~\cite{anderson2007fundamentals} then predicts a pressure ratio of $1.521$. The
measured ratio is also $1.521$. It is the ratio of two centerline medians of the package's pressure-sensitive paint,
measured on the rigid panel in the attached-shock case~\cite{brouwer2023pressure}. The medians are taken over
$x = 112$ to \SI{145}{\milli\meter} behind the pressure rise and $x = 80$ to \SI{92}{\milli\meter} ahead of it. Other
reasonable windows give $1.515$ to $1.522$. The prediction band of $1.45$ to $1.60$ comes from the package's
uncertainties. These are $\pm\SI{0.9}{\milli\meter}$ on the \SI{7.8}{\milli\meter} wedge height and $\pm\SI{0.62}{\degree}$ on
the \SI{35.7}{\degree} shock angle.

\paragraph{The wall temperature, from the package's own pressures} On the chaotic case the adiabatic wall
temperature derived from the Kulite pressures is \SI{369.4}{\kelvin}, and from the scanner pressures
\SI{369.5}{\kelvin}. The simulation result of~\cite{brouwer2021} is \SI{377}{\kelvin}. The two instruments agree with each other to \SI{0.1}{\kelvin}, and each differs from the simulation by about \SI{7.5}{\kelvin}.
That agreement, not the difference from \SI{377}{\kelvin}, is the finding. Both values are derived from experimental
measurements, and the simulation is a computation. Figure~\ref{fig:measurements} shows the shock and the wall temperature.

\begin{figure}[htbp]
  \centering
  \includegraphics[width=\textwidth]{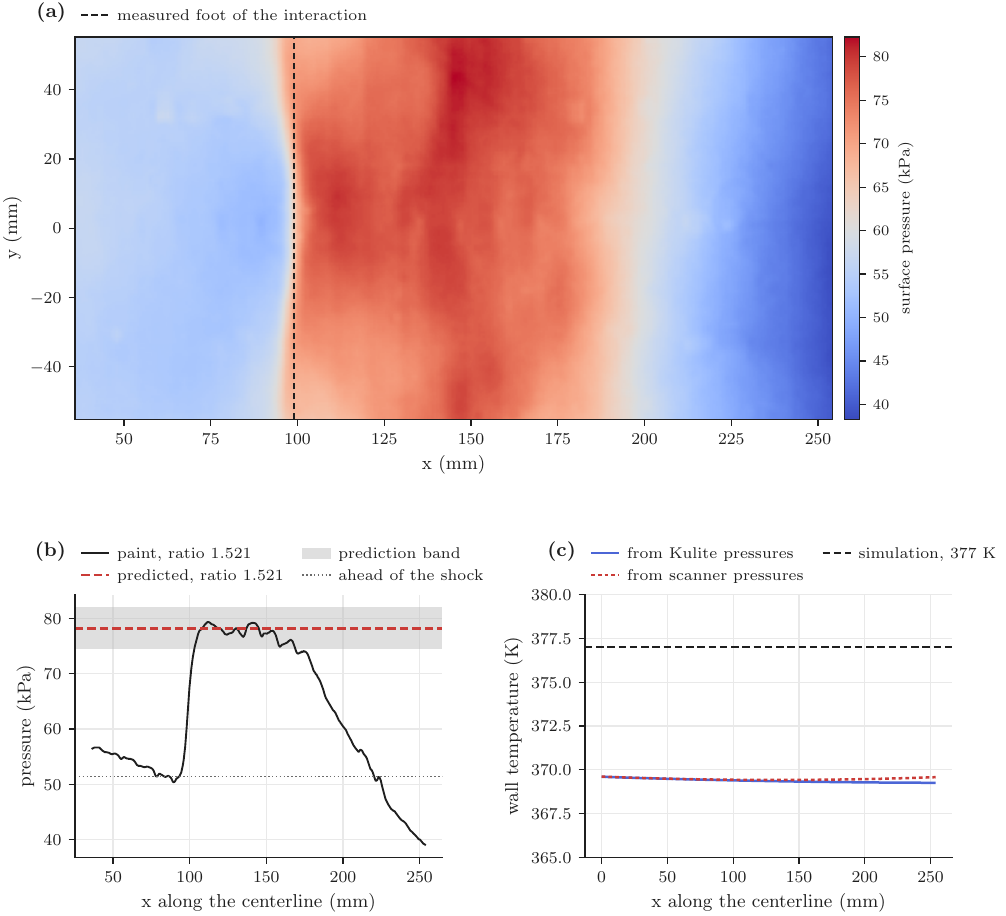}
  \caption{Two comparisons from the package's own measurements with nothing fitted. (a) The pressure-sensitive paint field
  of the attached-shock case on the rigid panel; the dashed line is the measured foot of the interaction. (b) Its
  centerline against the plateau the two-shock reflection predicts from the wedge geometry, with the band set by the package's
  uncertainties in wedge height and shock angle. The dotted line is the pressure ahead of the shock. (c) The adiabatic wall temperature along the centerline of the chaotic case.
  The temperature is derived from the Kulite and from the scanner pressures. The dashed line is the simulation result of Brouwer et al.~\cite{brouwer2021}.}
  \label{fig:measurements}
\end{figure}

\paragraph{The response frequency, as a statistic} We ran thirty-member ensembles of the trained and the reference
model at the periodic case's experimental operating condition. Table~\ref{tab:ensemble} compares their statistics with
the measured \SI{236}{\hertz}. The two mean frequencies lie within one \SI{4.88}{\hertz} spectral bin of each other.
Each lies within \SI{1.1}{\percent} of the measurement. The amplitude of Table~\ref{tab:estimators} is equally stable
in both ensembles. Its standard deviation across the members is \SI{1.6}{\percent} of its mean for each model. On the
chaotic case the reference model's thirty-member ensemble gives a mean frequency of \SI{239.6}{\hertz} with a standard
deviation of \SI{12.8}{\hertz}. That mean is \SI{6.8}{\percent} below the measured \SI{257}{\hertz} that Brouwer et al.\ report for the chaotic
case~\cite{brouwer2021}. Figure~\ref{fig:ensemble} shows
the two periodic-case ensembles.

\begin{table}[htbp]
  \centering
  \caption{Thirty-member ensembles at the periodic case's experimental operating condition.}
  \label{tab:ensemble}
  \small
  \begin{tabular}{lccc}
    \toprule
    & Trained model & Reference model & Measured \\
    \midrule
    Mean dominant frequency (\si{\hertz}) & $238.4$ & $237.2$ & $236$ \\
    Standard deviation across members (\si{\hertz}) & $12.3$ & $10.5$ & -- \\
    Difference from measured & \SI{1.0}{\percent} & \SI{0.5}{\percent} & -- \\
    Mean amplitude (thicknesses) & $0.760$ & $0.752$ & -- \\
    Amplitude standard deviation / mean & \SI{1.6}{\percent} & \SI{1.6}{\percent} & -- \\
    Amplitude full range / mean & \SI{5.7}{\percent} & \SI{6.5}{\percent} & -- \\
    \bottomrule
  \end{tabular}
\end{table}

\begin{figure}[htbp]
  \centering
  \includegraphics[width=\textwidth]{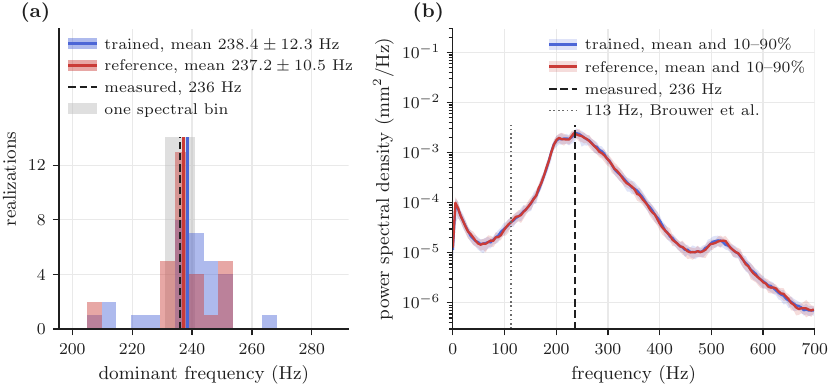}
  \caption{The thirty-member ensembles at the periodic case's experimental operating condition. (a) The dominant frequency
  of every realization, trained model against reference model, with the ensemble means as solid lines. The dashed line is
  the measured \SI{236}{\hertz}, and the shaded band one spectral bin. (b) The mean spectrum of each ensemble at
  three-quarter chord, with the 10th to 90th percentile band. The dotted line marks the \SI{113}{\hertz} peak of the
  reduced-order model of Brouwer et al., which neither ensemble shows.}
  \label{fig:ensemble}
\end{figure}

These ensemble means depend on the interval of kicks that the ensemble samples. A kick is the first-mode displacement
added to the buckled shape before release (Section~\ref{sec:methods-record}). The thirty kicks of each ensemble span
$\pm\SI{1.8}{\percent}$ about the set's median kick. The aerodynamic study of Section~\ref{sec:rc19-aero} measured this
dependence with a larger ensemble of $192$ members. That ensemble uses unscaled aerodynamic operators and releases every
member from the upper well. Its kicks span $\pm\SI{11.9}{\percent}$. Its first thirty members are the reference ensemble
of Table~\ref{tab:ensemble}, and they give a mean frequency of \SI{237.19}{\hertz}. All $192$ give \SI{233.34}{\hertz}. The difference could come from the sample size or from the
wider interval. Splitting the $192$ members by interval separates the two. The $31$ members with kicks inside
$\pm\SI{1.8}{\percent}$ give \SI{237.20}{\hertz}, and the $161$ outside it give \SI{232.60}{\hertz}. The mean therefore
follows the interval, not the sample size. An ensemble mean must therefore be quoted with its kick interval.

\subsection{The regime map at the simulation operating conditions of Brouwer et al.}
\label{sec:rc19-regime}

The reference model, at the experimental operating condition of the periodic case, returns a chaotic response. The reduced-order model of Brouwer et al. returns one as well, and it needed adjusted simulation operating conditions to place its periodic case.
We therefore asked whether our two models reproduce its regime map when given its own inputs. The prediction, registered before the run, was periodic at a cavity pressure of \SI{52.7}{\kilo\pascal} and a temperature rise of \SI{8.8}{\kelvin}. It was
chaotic at \SI{51.1}{\kilo\pascal} and \SI{13.6}{\kelvin}. The periodic condition is the single-well response of
Fig.~11a of~\cite{brouwer2021}. Their periodic companion to the experiment, in their Fig.~13, uses \SI{7.1}{\kelvin}
instead. The chaotic condition is that of their Fig.~12.

\begin{table}[htbp]
  \centering
  \caption{The regime at the simulation operating conditions of Brouwer et al.~\cite{brouwer2021}. Each entry is a repeat set of sixteen
  releases, eight from each buckled well. The trained model was trained at a temperature rise of \SI{12.76}{\kelvin}. Both conditions therefore
  lie outside its training temperature.}
  \label{tab:regime}
  \begin{tabular}{lcccc}
    \toprule
    Condition & Brouwer et al. & Reference & Trained & Agreement \\
              & \cite{brouwer2021} & model & model & \\
    \midrule
    \SI{52.7}{\kilo\pascal}, \SI{8.8}{\kelvin}  & periodic & periodic & periodic & 16 of 16 \\
    \SI{51.1}{\kilo\pascal}, \SI{13.6}{\kelvin} & chaotic  & chaotic  & chaotic  & 16 of 16 \\
    \bottomrule
  \end{tabular}
\end{table}

\begin{figure}[htbp]
  \centering
  \includegraphics[width=\textwidth]{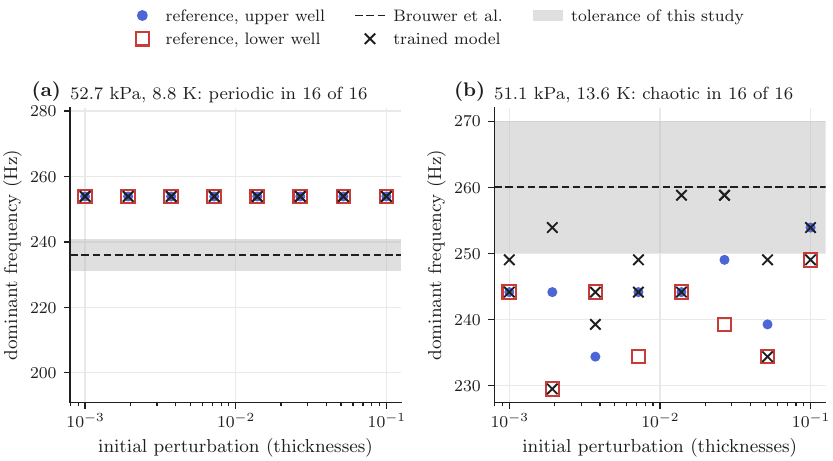}
  \caption{The regime at the simulation operating conditions of Brouwer et al., on the reference model. Each panel holds sixteen repeats, kicked
  over two decades and released from both buckled wells. (a) At the periodic simulation operating condition every repeat returns
  one frequency, and the two wells coincide. (b) At the chaotic simulation operating condition the frequencies scatter, which is
  the response rather than numerical noise. The dashed line is the measured dominant frequency reported by Brouwer et al.~\cite{brouwer2021}.
  It is \SI{236}{\hertz} for the periodic case. For the chaotic case it is approximately \SI{260}{\hertz}, from the
  discussion of their Fig.~12. Their Section 5.4 gives \SI{257}{\hertz} for the same case, which lies inside the band.
  The shaded band is a tolerance set in this study. It is $\pm\SI{4.88}{\hertz}$, one bin of the frequency grid, for
  the sharp periodic peak. It is $\pm\SI{10}{\hertz}$ for the broad chaotic one. Crosses mark the trained model at the same kicks. It returns the same regimes in every repeat
  (Table~\ref{tab:regime}).}
  \label{fig:regime}
\end{figure}

The prediction held on both the reference and the trained model, at both conditions, from both wells, in every repeat (Table~\ref{tab:regime},
Figure~\ref{fig:regime}). At the periodic simulation operating condition, every repeat of the trained model returns the same \SI{253.9}{\hertz} bin,
as the reference model does. The amplitude of Table~\ref{tab:estimators} is $0.7164$ thicknesses for the trained model and $0.7161$ for the
reference model, which agree to \SI{0.04}{\percent}. At the chaotic simulation operating condition, the trained model's median frequency over its sixteen repeats is
\SI{249.0}{\hertz}. The reference model that generated its record, with the periodic case's thermal field, gives \SI{239.3}{\hertz}.
Brouwer et al.~\cite{brouwer2021} report a peak at \SI{113}{\hertz} in their reduced-order model that the experiment does
not show. Neither of our models shows it in any repeat or in any of the thirty ensemble realizations of
Section~\ref{sec:rc19-agreement}.

What this settles is where the regime disagreement at the experimental operating condition lives. With the inputs of Brouwer et al., our models return the regimes of their model. The disagreement is therefore a property of the inputs, not of the aerodynamic model. The inputs in question are the
distribution of the temperature rise and the pressure the static load is referenced to. The reduced-order model of Brouwer et al.\ has the same disagreement. Brouwer et al.~\cite{brouwer2021} report the
offsets of their periodic companion simulation at \SI{7.1}{\kelvin}. It sits \SI{4.6}{\percent} off the experiment in
cavity pressure and \SI{44.5}{\percent} off in temperature rise.

One comparison with the experiment at the simulation operating conditions does not agree. Our frequency at the periodic
simulation operating condition is \SI{253.9}{\hertz}, which is \SI{7.6}{\percent} above the measured \SI{236}{\hertz}. Our peak
amplitude at the simulation operating conditions is $1.54$ to $1.95$ thicknesses. The measured oscillation amplitude at
three-quarter chord reaches about $1.8$ thicknesses in Fig.~7a of~\cite{brouwer2021}. Those measurements span the
experimental runs rather than the simulation operating conditions. Their amplitude measure may also differ from our peak
amplitude, and the comparison is therefore approximate. Brouwer et al.\ report that their periodic companion misses
the measured amplitude by \SI{31.7}{\percent}.

\subsection{What the aerodynamic model contributes}
\label{sec:rc19-aero}

The reference model's aerodynamic force is first-order piston theory with freestream coefficients. The enrichment that Brouwer et al.\ apply replaces freestream quantities by local ones. That scales the
aerodynamic operators by the local to freestream pressure ratio. Brouwer et al.\ give a mean RANS surface pressure of
\SI{51.0}{\kilo\pascal} against a freestream pressure of \SI{48.4}{\kilo\pascal}~\cite{brouwer2021}, a ratio of
$51.0/48.4 = 1.054$. Rather than build the enriched model, we measured what that scaling changes.

The first registered prediction was that the scaling leaves the regime unchanged and moves the dominant frequency by
less than \SI{2}{\percent}. At the periodic simulation operating condition, it moved the frequency by \SI{0.00}{\hertz}. 
At the experimental operating condition, where the response is chaotic, the median frequency moved by $+3.23$ and
$-3.00$~\SI{}{\percent} in the two wells. The condition failed. It set a tolerance of \SI{4.7}{\hertz}, two percent
of \SI{236}{\hertz}, on a frequency that the estimator quantizes at \SI{4.88}{\hertz}. Across repeats that frequency has
a standard deviation of \SI{16.3}{\hertz}. The two wells also moved in opposite directions, which a change in the
aerodynamic coefficients cannot cause. The failure therefore lies in the condition rather than the physics
(Section~\ref{sec:predictions}). The replacement, R3, was registered before its run. It has two clauses: the mean frequency
is equivalent within two percent, and the spectral change stays within sampling.

The mean-frequency clause uses an equivalence test, because a difference test cannot support a claim of no effect.
It applies two one-sided tests~\cite{schuirmann1987tost} to each well. The \SI{95}{\percent} interval for the difference
of the two ensemble means must lie inside $\pm\delta_{\mathrm{eq}}$. The tolerance $\delta_{\mathrm{eq}}$ is two percent
of the well's unscaled ensemble mean, which gives about \SI{4.7}{\hertz}. This two percent was registered before the
runs without reference to the frequency's sampling noise. Section~\ref{sec:predictions} reports what that cost.

The ensemble size gives the test \SI{80}{\percent} power. It uses the pooled standard deviation
$s_{\mathrm{pool}} = \SI{16.3}{\hertz}$ across repeats. It also uses $\delta_{\mathrm{eq}} = \SI{4.72}{\hertz}$, the
tolerance of the well whose unscaled mean is \SI{236.12}{\hertz}. Each configuration then needs
\begin{equation}
  N_e \;\ge\; \frac{2\, s_{\mathrm{pool}}^2 \big( z^\ast_{0.975} + z^\ast_{0.80} \big)^2}{\delta_{\mathrm{eq}}^2} \;=\; 187.2
\end{equation}
members, where $z^\ast_{0.975} = 1.96$ and $z^\ast_{0.80} = 0.84$ are standard normal quantiles. The minimum is
therefore $188$ members, and each ensemble used $192$.

The spectral clause tests the change in the spectrum against sampling, using the band error of
Table~\ref{tab:estimators}. The band error between the mean spectra of the two configurations must not exceed that
between the odd and even members of one configuration.

Four ensembles of $192$ reference model members were run at the experimental operating condition of the periodic case,
unscaled and scaled, released from each well. The reference model is the right model for this study, because its aerodynamic operators can be scaled directly. The
trained model's fitted force term matches them to relative errors of $9.6\times10^{-7}$ and $7.3\times10^{-7}$
(Section~\ref{sec:rc19-self}).

\begin{figure}[htbp]
  \centering
  \includegraphics[width=\textwidth]{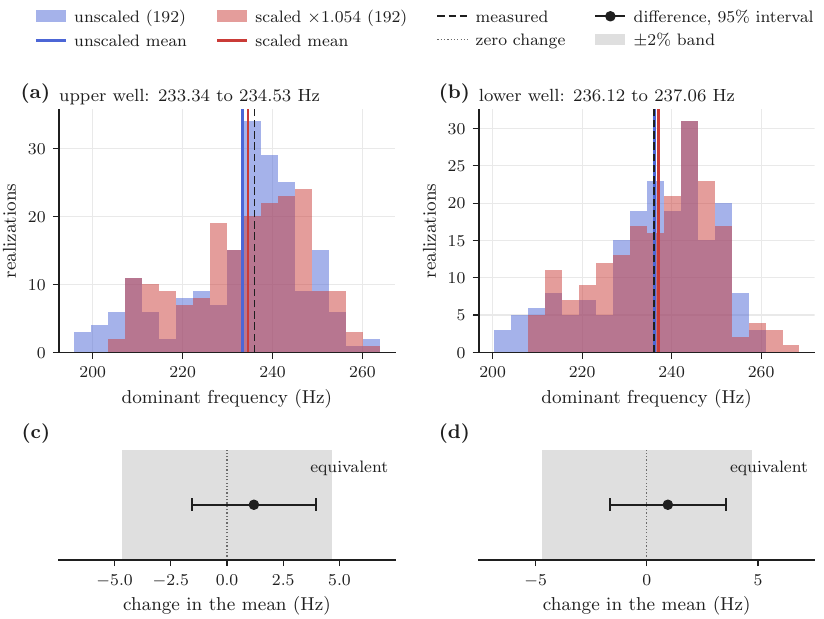}
  \caption{Equivalence of the mean response frequency under the enrichment, on the reference model. (a, b) The dominant
  frequency over $192$ realizations in each configuration, per buckled well. Solid lines are the ensemble means, and
  the dashed line is the measured value. (c, d) The difference of the means with its \SI{95}{\percent} interval,
  inside the registered equivalence band. The ensemble means from the two wells bracket the measured value.}
  \label{fig:equivalence}
\end{figure}
\begin{figure}[htbp]
  \centering
  \includegraphics[width=\textwidth]{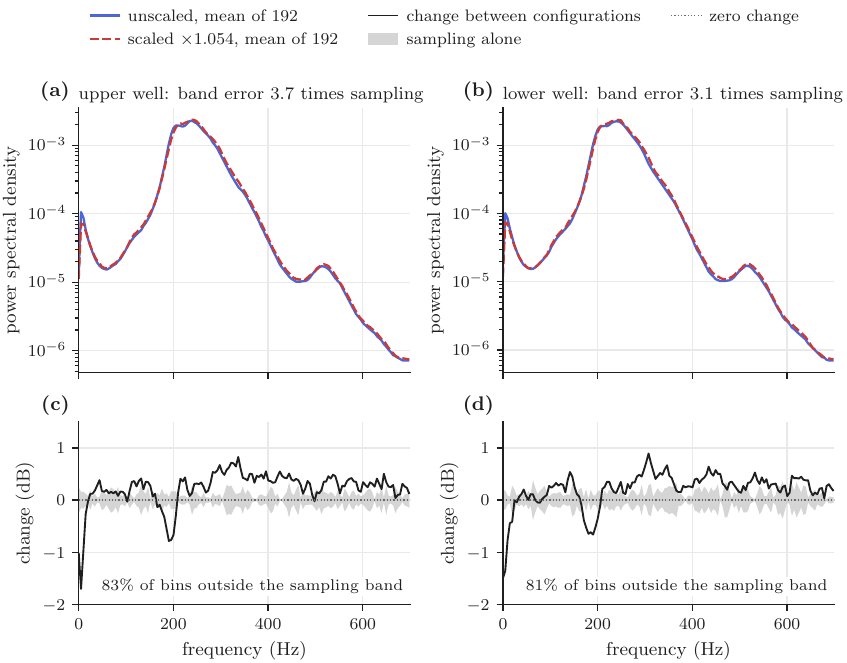}
  \caption{Aerodynamic sensitivity at the experimental operating condition, from four ensembles of $192$ reference model
  members. (a, b) The mean spectrum per buckled well, with the aerodynamic operators unscaled and scaled by the local to
  freestream pressure ratio. (c, d) The change between the two configurations, against the difference between the odd and even members of one configuration, which is sampling alone.}
  \label{fig:spectra}
\end{figure}

The mean frequency is equivalent within \SI{2}{\percent} in both wells. In the lower well the shift is \SI{+0.40}{\percent}, with an interval of $[-1.67, +3.56]$~\SI{}{\hertz}. In the
upper well it is \SI{+0.51}{\percent}, with $[-1.55, +3.94]$~\SI{}{\hertz} (Figure~\ref{fig:equivalence}). Each band is two
percent of that well's unscaled mean: $\pm\SI{4.72}{\hertz}$ of \SI{236.12}{\hertz} and $\pm\SI{4.67}{\hertz}$ of
\SI{233.34}{\hertz}. The movements of $+3.23$ and $-3.00$~\SI{}{\percent} under the first prediction were therefore sampling. The spectrum, however, does respond. The band error between configurations is $3.09$ and $3.74$ times the band error between odd and even members (Figure~\ref{fig:spectra}). The amplitude rises by $2.31$ and \SI{2.55}{\percent}, outside its own interval in both
wells. The criterion required both the frequency test and the spectrum test to pass, and it is not met. The two tests disagree,
and the disagreement is the result.

The local-coefficient enrichment of Brouwer et al.\ has a small but measurable effect. A \SI{5.4}{\percent} change in
the aerodynamic coefficients moves the amplitude by \SI{2.4}{\percent}, which is an elasticity of $2.4/5.4 = 0.44$. It does
not move the mean frequency measurably. The enrichment is therefore a percent-level refinement of accuracy. It does not
set the regime label, and Section~\ref{sec:rc19-open} shows that it does not set the fluctuation level.

\subsection{Chaos and the tracking horizon}
\label{sec:rc19-chaos}

\paragraph{The exponent} Brouwer et al.~\cite{brouwer2021} establish the chaos with a positive largest Lyapunov
exponent that rises with temperature. Figure~\ref{fig:lyapunov} shows the sweep described below.

\begin{figure}[htbp]
  \centering
  \includegraphics[width=\textwidth]{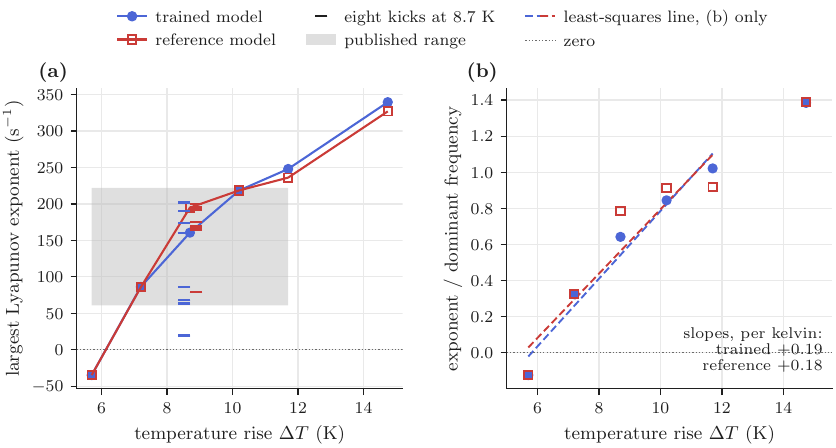}
  \caption{The largest Lyapunov exponent over temperature at \SI{51.1}{\kilo\pascal}, one trajectory per temperature, trained
  model against the reference model that generated its record. (a) The exponent, with the range Brouwer et al.\ report
  over their five temperatures, \SI{5.7}{} to \SI{11.7}{\kelvin}. The short bars at \SI{8.7}{\kelvin} are eight trajectories per
  model, kicked logarithmically from $10^{-3}$ to $10^{-1}$ thicknesses. (b) The exponent divided by each model's dominant frequency, with least-squares lines over
  the five published temperatures.}
  \label{fig:lyapunov}
\end{figure}
 We march the linearized error equations of Brouwer et al.\ (Equations 19 to 23 of~\cite{brouwer2021}) along the
trained model's own trajectories. The linearization is the trained model's own. The result reproduces that behavior. The sweep runs at \SI{51.1}{\kilo\pascal} and six temperatures from \SI{5.7}{} to \SI{14.74}{\kelvin}. There the trained model and the reference model that generated its record, with the periodic case's thermal field, agree in sign at all six. Both are negative at \SI{5.7}{\kelvin}, where each gives \SI{-34.9}{\per\second}. At $7.2$, $10.2$, $11.7$
and \SI{14.74}{\kelvin} their difference, relative to the reference model, is at most \SI{5.1}{\percent}. At \SI{8.7}{\kelvin} they differ by \SI{17.6}{\percent},
\SI{160.6}{} against \SI{194.9}{\per\second}, which broke the registered tolerance of \SI{15}{\percent}.

That temperature was then examined by a measurement registered before its run. Eight trajectories were run on each model at \SI{8.7}{\kelvin}, with kicks spaced logarithmically from $10^{-3}$ to $10^{-1}$ thicknesses. They give exponents from \SI{19.6}{}
to \SI{201.9}{\per\second} on the trained model and \SI{79.2}{} to \SI{199.9}{\per\second} on the reference model. The ranges overlap, each is wider than
\SI{17.6}{\percent} of its median, and a Mann--Whitney rank test~\cite{mann1947test} does not distinguish them (two-sided $p$-value $0.19$). At this temperature a single
trajectory does not determine the exponent on either level. The broken comparison says nothing against the trained model, and
the prediction stays broken as registered. The trained model's median, \SI{123.5}{\per\second}, sits below the reference model's,
\SI{172.5}{\per\second}. With eight trajectories per model that difference is not resolved. If it is real, the integrators are a candidate
cause. The trained model uses the generalized-$\alpha$ step, which damps high frequencies numerically. The reference
model uses the energy-conserving AVF step. Numerical damping could lower the growth rate of small separations, and with
it the exponent. We did not run the larger set of trajectories that would test this.

The same caution applies to the trend. Brouwer et al.\ report exponents at five temperatures, $5.7$ to
\SI{11.7}{\kelvin} at \SI{51.1}{\kilo\pascal}. Over those five we divide the exponent by each model's own dominant frequency. That ratio rises
$0.19$ per kelvin on the trained model and $0.18$ on the reference model. On both, the exponents at $7.2$,
$8.7$ and \SI{10.2}{\kelvin} fall inside the published range of $61$ to \SI{222}{\per\second}. Each value rests on one
trajectory per temperature and carries the spread just measured. The slope and the count are therefore indicative, not
established.

\paragraph{The horizon} Chaos limits every model of this panel, ours included. A model started at a separation $\varepsilon_0$ from the truth separates as $e^{\lambda t}$ until it reaches the state
scale $\sigma$. The state scale is the size of the response itself, at which the separation saturates. The tracking horizon is therefore
\begin{equation}
  T_h \;=\; \frac{1}{\lambda} \ln\!\left( \frac{\sigma}{\varepsilon_0} \right).
  \label{eq:horizon}
\end{equation}
Starting accuracy enters through a logarithm. At \SI{51.1}{\kilo\pascal} and the chaotic case's \SI{14.74}{\kelvin}, the reference model with the chaotic case's
thermal field has an exponent of \SI{349}{\per\second}. The sweep above, with the periodic case's thermal field, gives
\SI{327}{\per\second} at the same temperature. One decade of starting accuracy then buys $\ln 10 / \lambda = \SI{6.6}{\milli\second}$. A model exact to
double-precision roundoff starts near $\varepsilon_0 = 10^{-16}$. The state scale is about one thickness. Here and below, tracking ends at half of it, $\sigma/2 = 0.5$. With
$\sigma/2$ in place of $\sigma$, that model would track $\ln(0.5/10^{-16})/\lambda \approx \SI{104}{\milli\second}$. That is the
ceiling for any model of this panel computed in that arithmetic.

The bound needs no learned model. Its rate $\lambda$ is measured on the reference model alone, and its starting error
$\varepsilon_0$ is whatever a model achieves. A model that starts at $\varepsilon_0 = 10^{-2}$ tracks
$\ln(0.5/10^{-2})/\lambda = \SI{11.2}{\milli\second}$. At $\varepsilon_0 = 10^{-5}$ it tracks
$\ln(0.5/10^{-5})/\lambda = \SI{31.0}{\milli\second}$. Each further decade of starting accuracy adds only
\SI{6.6}{\milli\second}. Improving the model therefore extends the horizon only slowly.

The horizon is why the comparisons of this report are statistics, which are reproducible, rather than the peak of a single
realization, which is not. Brouwer et al.~\cite{brouwer2021} grade their own model the same way. Their published exponent comes from their reduced
model's trajectories. Their model therefore loses a trajectory on the same clock. Tracking beyond the horizon cannot come from model quality or integrator quality. It needs measurement.

\subsection{Tracking by assimilation}
\label{sec:rc19-assimilation}

The assimilation of Section~\ref{sec:methods-observer} was tested in a twin experiment. In a twin experiment the model
is exact and only its state is wrong. The test therefore measures what the observations can support, not what the
model's physics lacks.

The truth is one reference model member at the chaotic case's experimental operating condition of
\SI{51.49}{\kilo\pascal} and \SI{14.74}{\kelvin}. It starts in the upper well with a kick of \num{e-3} thicknesses, runs
\SI{0.12}{\second}, and is sampled at \SI{25}{\kilo\hertz}. A copy of the reference model starts with its first-mode
displacement \num{e-3} thicknesses away from the truth's. The copy is corrected with the truth's displacements at the
nine published extraction points. It takes two discrete-gradient steps per time sample. The correction of
Equation~\eqref{eq:nudging} acts like springs and dampers that tie the copy to the truth at the nine points. The gain
$k_p$ is the spring stiffness, acting on the displacement mismatch. The gain $k_d$ is the damping coefficient, acting on
the mismatch in its rate. The tracking runs use $k_p = \SI{e5}{\newton\per\meter}$ and
$k_d = \SI{5e3}{\newton\second\per\meter}$. The gain study uses a \SI{0.2}{\second} member at
the same condition.

The energy budget of Corollary~\ref{cor:observer} closes under assimilation, with the observer's work reported as a
separate term. The worst per-step defect, divided by the larger of the step's two energies, is $3\times10^{-13}$.

\paragraph{Observability} The nine published points observe all seven modes. The observability Gramian $\mathbf{M}_\Phi$ of Equation~\eqref{eq:nudging} has a condition number, the ratio of its largest to its smallest eigenvalue, of $45$. The least visible combination of modes therefore registers
about $\sqrt{45} \approx 6.7$ times more weakly in amplitude than the most visible one. Amplitude scales as the square
root of a Gramian eigenvalue. No mode has a node at all nine points, and recovering the weakest one does not
demand a large amplification of noise. That was not assured in advance, because the points were chosen for imaging
rather than for observability.

\paragraph{Gains} The open-loop exponent is the largest Lyapunov exponent of the error
equation~\eqref{eq:error} with both gains set to zero. On the trajectory of this twin experiment it is
\SI{285}{\per\second}. The \SI{349}{\per\second} of Section~\ref{sec:rc19-chaos} comes from another trajectory, at
\SI{51.1}{\kilo\pascal} with the chaotic case's thermal field. The difference lies within the single-trajectory scatter
measured there.

Both gains act on the unit-mass modal coordinates. Position feedback alone does not contract the error. With $k_d = 0$, no position gain up to
$k_p = 10^{14}$ did. Rate feedback does, but more of it is not better. The smallest position gain tested is
$k_p = 10^{2}$, and at both rate gains below the error already contracts there. Both closed-loop exponents are
therefore quoted at $k_p = 10^{2}$. The exponent is \SI{-10.6}{\per\second} at $k_d = 2\times10^{3}$ and only
\SI{-2.6}{\per\second} at $k_d = 2\times10^{4}$. A negative exponent means the error shrinks rather than grows. With
feedback the copy therefore stays on the truth for as long as observations arrive, instead of leaving it after a
horizon.

A one-mode argument explains why. At $k_p = 10^{2}$ the position feedback is negligible. The largest eigenvalue of
$k_p \mathbf{M}_\Phi$ is $10^{2} \times 195 = 2.0\times10^{4}$. That is under one percent of the flat panel's smallest
squared natural frequency $\omega_1^2 = 2.7\times10^{6}\,\mathrm{s}^{-2}$. The error equation~\eqref{eq:error}
therefore keeps the panel's own stiffness $\nabla^2 U + \mathbf{A}$. Consider one overdamped mode, and let
$K_\delta$ be its stiffness. The roots of $s_r^2 + k_d s_r + K_\delta = 0$ are then real, and the slower root is about
$-K_\delta/k_d$. Raising $k_d$ therefore slows the decay. The measured rates differ by a factor of $4.1$ rather than the
predicted ten. The one-mode picture is therefore qualitative.

\paragraph{Tracking against observation rate} Without observations the copy leaves the truth after about
$\ln(0.5/10^{-3})/(\SI{285}{\per\second}) \approx \SI{22}{\milli\second}$. The late tracking error is taken over the
second half of the \SI{0.12}{\second} run, from $60$ to \SI{120}{\milli\second}, which lies entirely past that horizon.
It is the median of the Euclidean norm of the modal displacement error divided by $w_s$, in thicknesses.
Table~\ref{tab:assimilation} and Figure~\ref{fig:observability} give it.

\begin{figure}[htbp]
  \centering
  \includegraphics[width=\textwidth]{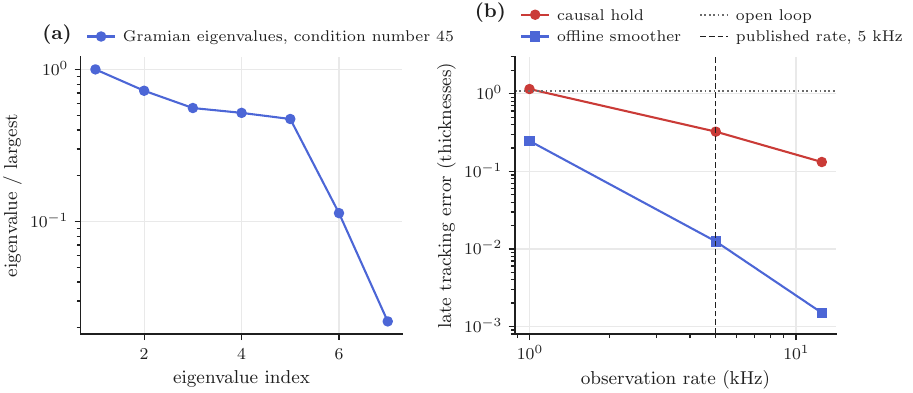}
  \caption{Assimilation through the nine published points. (a) The eigenvalues of the observability Gramian $\mathbf{M}_\Phi$,
  relative to the largest; their ratio is the condition number. (b) The late tracking error of the twin experiment against
  observation rate, with the causal hold and the offline smoother. The dotted line is the open-loop error, and the dashed
  line the published \SI{5}{\kilo\hertz} rate.}
  \label{fig:observability}
\end{figure}
 Open loop it
saturates at $1.09$ thicknesses, which is the size of the response itself. With the causal hold, the error falls linearly with the observation interval and is insensitive to the gain over three
decades. That is the signature of the hold's reconstruction, not of the information in the data. With the offline smoother it falls roughly as the square of the interval. This rate is the order of the smoother's reconstruction. At the experiment's \SI{5}{\kilo\hertz} measurement rate, the smoother holds the state to
$0.0125$ thicknesses, twenty-six times smaller than the hold's $0.325$.

\begin{table}[htbp]
  \centering
  \caption{Late tracking error against observation rate, in panel thicknesses, in a twin experiment on the reference model,
  observed at the nine published points. The experiment's measurement rate is \SI{5}{\kilo\hertz}. Open loop the
  error is $1.09$. Samples per cycle divide the observation rate by \SI{240}{\hertz}, the reference model's mean
  dominant frequency on the chaotic case, \SI{239.6}{\hertz}, rounded.}
  \label{tab:assimilation}
  \begin{tabular}{lccc}
    \toprule
    Observation rate & Samples per cycle & Causal hold & Offline smoother \\
    \midrule
    \SI{1}{\kilo\hertz}                     & $4.2$  & $1.151$ & $0.247$ \\
    \SI{5}{\kilo\hertz}, the published rate & $20.8$ & $0.325$ & $\mathbf{0.0125}$ \\
    \SI{12.5}{\kilo\hertz}                  & $52$   & $0.132$ & $0.0015$ \\
    \bottomrule
  \end{tabular}
\end{table}

\paragraph{Noise} At \SI{5}{\kilo\hertz} with the smoother the error is $0.0125$ thicknesses without noise, $0.0130$ with
one micron of noise on each reading, and $0.0352$ with ten. The noise is Gaussian, independent at each point and each
sample. Ten microns is $0.016$ thicknesses of the \SI{0.635}{\milli\meter} panel. Once noise dominates, the error settles at
$0.0352/0.016 \approx 2.2$ times the noise level. That is what a linear observer with a well-conditioned Gramian should give.

This result is a best case, not a validation. It assumes an exact model and observations generated by that model. The
published layout and rate, with an offline smoother, hold the reference model's state to $0.0130$ thicknesses with
\SI{1}{\micro\meter} of measurement noise. With \SI{10}{\micro\meter} the error is $0.0352$. Measured data would add the model's
own error to these values. The result also states what future data must provide for assimilation: displacement time
histories at the nine points, at \SI{5}{\kilo\hertz}, with noise near one micron. The public package holds no response time series. The published statistics come from Brouwer et
al.~\cite{brouwer2021} and Piccolo Serafim et al.~\cite{piccoloserafim2025}.

\subsection{The open comparison, and the measurement that closes it}
\label{sec:rc19-open}

One comparison with the experiment remains open on the attached-shock case. The measured deflection at three-quarter
chord on the centerline has a standard deviation of $0.943$ thicknesses. We digitized it from a figure
of~\cite{piccoloserafim2025} that plots the mean deflection and the lines one standard deviation above and below it. At
three-quarter chord those three lines read $+0.096$, $+1.035$ and $-0.850$ thicknesses. Half the spread between the
outer two is $(1.035 + 0.850)/2 = 0.943$.

The reference model is driven by a random surface pressure at the measured fluctuation level. That level is
$0.0096\,q_\infty$ in root-mean-square, where $q_\infty$ is the freestream dynamic pressure. It is the Kulite reading
beneath the incident shock in Fig.~16(b) of~\cite{brouwer2023pressure}. The random pressure is Gaussian, flat from
\SI{20}{} to \SI{2000}{\hertz}, and correlated over the boundary-layer thickness of \SI{8.6}{\milli\meter}. Each run
discards its first \SI{0.25}{\second} to remove the decay from the initial state and keeps the next \SI{2}{\second}.
Over three random seeds, the model's deflection at the same station has a standard deviation of $0.057$ to $0.062$
thicknesses. That is about a sixteenth of the measured $0.943$. Figure~\ref{fig:attached} shows the mean shape and this
deviation.

The published statistics do not say how the panel moves. Switching between two buckled shapes, a limit cycle, and
broadband motion about one shape all reproduce the published mean and deviation. The factor of sixteen therefore does
not depend on which of these motions the panel performs. No learned model was built for this case. The gap lies in the
reference model, upstream of any learning.

Every parametric candidate cause inside the physically admissible range was eliminated by measurement
(Table~\ref{tab:eliminated}). One structural mechanism remains, and it is stated below.
One of them, a learned conservative correction to the potential, was registered as a prediction, run, and falsified.
The aerodynamic enrichment of Section~\ref{sec:rc19-aero} is bounded with the others. Amplitude scales as the coefficients to the power $0.44$. Closing a sixteenfold deficit would therefore need the
coefficients scaled by $16^{1/0.44} \approx 10^{2.7}$. No enrichment of piston theory provides that.

\begin{table}[htbp]
  \centering
  \caption{Candidate causes of the fluctuation deficit on the attached-shock case, and how each was eliminated.}
  \label{tab:eliminated}
  \small
  \begin{tabular}{p{0.38\textwidth}p{0.52\textwidth}}
    \toprule
    Candidate cause & How it was eliminated \\
    \midrule
    Turbulent forcing level & bounded by the facility's own unsteady measurements~\cite{brouwer2023pressure}; the response is linear in it and
      reaches a third of the measurement at five times the measured level \\
    Temperature & bounded above by the adiabatic wall \\
    Edge stiffness, membrane stiffness, and the modal basis & swept across the admissible range; none closes the
      deficit \\
    Shock position, and its dependence on the panel's deflection & each thickness of deflection moves the measured shock
      footprint \SI{0.88}{\milli\meter} upstream; a self-consistent solve with that shift still finds one equilibrium \\
    Shock strength, and the same dependence & closed by shock-expansion theory on the deflected panel \\
    A learned conservative correction to the potential & registered, run, and falsified \\
    The enriched aerodynamic model & amplitude elasticity of $0.44$ to the piston coefficients
      (Section~\ref{sec:rc19-aero}) \\
        \bottomrule
  \end{tabular}
\end{table}

\begin{figure}[htbp]
  \centering
  \includegraphics[width=\textwidth]{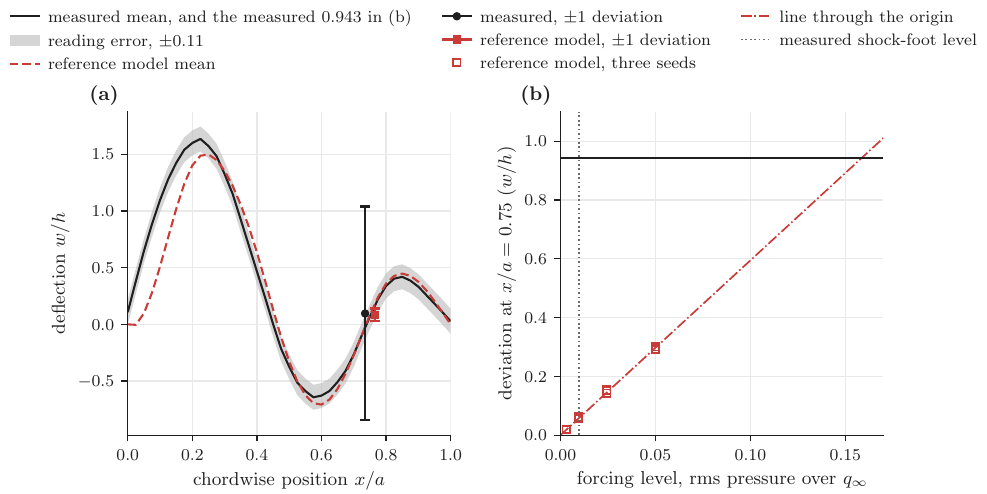}
  \caption{The attached-shock case, with the reference model against the measurement digitized
  from~\cite{piccoloserafim2025}. The reference model uses the measured static and thermal fields, seven measured modes
  and clamped edges. (a) The chordwise mean deflection on the centerline at the measured forcing level. The shaded band
  is the reading error of the digitized profile. The bars at three-quarter chord span plus and minus one standard
  deviation. (b) The model's standard deviation at three-quarter chord against the forcing level, which is the
  root-mean-square surface pressure divided by $q_\infty$. Each level has three random seeds. The deviation grows
  linearly with the forcing level and reaches the measured $0.943$ only near sixteen times the measured level. The
  dotted line marks that measured level. Each run lasts \SI{2.25}{\second}, and its first \SI{0.25}{\second} is
  discarded.}
  \label{fig:attached}
\end{figure}

What remains is structural rather than a parameter. The reference model is loaded by the pressure field measured on the rigid
panel, held fixed as the panel deflects. The real field depends on the panel's deformed shape, and that dependence changes
the effective aeroelastic stiffness. With the fixed field the reference model reproduces the measured chordwise mean shape to a root-mean-square
deviation of $0.22$ thicknesses. The profile digitized from~\cite{piccoloserafim2025} carries a reading error of about $0.11$. 
The mean is therefore not where it fails. Testing the mechanism needs either time-resolved surface data or a
computation on the deformed geometry. The public package carries neither. 
The same limitation shows in the reduced-order model of Brouwer et al., which needed adjusted inputs to place its own regimes.

Two items of data would close this comparison. The first is the time series behind the
published statistics at the nine points, or a statement that none was retained. Section~\ref{sec:rc19-assimilation} shows
that the published rate and a plausible noise level suffice to use it. The second is a computational record on the
deformed geometry of the attached-shock case, which would test the remaining mechanism directly.
\FloatBarrier

\section{HyMAX: the cantilever plate at Mach 5.85}
\label{sec:hymax}

The HyMAX model is the RC-19 construction applied to a different configuration. The plate is a cantilever rather than a clamped panel, at Mach $5.85$ rather than $1.92$. An oblique shock impinges on the plate, which the RC-19 training cases lack, and the inflow varies in time. What changed is the reference model and its generalized forces. The energy identity (Proposition~\ref{prop:identity}), the learned dissipation (Lemma~\ref{lem:spd}), the fitted force term, the objective and the integrator are unchanged from Section~\ref{sec:methods}. No term
was added to the objective.

Table~\ref{tab:hymax} and Figure~\ref{fig:hymax} give the relative root-mean-square error $e_{\mathrm{rms}}$ of
Equation~\eqref{eq:hymax-rms}, under the protocol of Section~\ref{sec:setup-hymax}. Over the early window, the first
\SI{1.9}{\milli\second}, it is \SI{0.31}{\percent}. Over the whole \SI{64.9}{\milli\second} rollout it is \SI{3.67}{\percent}, and that is the result to read. The error grows linearly rather than
exponentially, because the HyMAX response is not chaotic (Figure~\ref{fig:hymax-modes}a). The fitted force term
represents the time-varying aerodynamic operators only on average, and its small residual acts at the first-mode
frequency. On a mode damped at $\zeta = 0.01$ that residual builds up over cycles. The build-up stays linear for times
short against $1/(\zeta\omega_1) \approx \SI{190}{\milli\second}$. The trained model follows the reference model's trailing edge through about five first-mode cycles, $\SI{64.9}{\milli\second} \times \SI{83}{\hertz} \approx 5.4$, to within four percent, with nothing fed back.

\begin{table}[htbp]
  \centering
  \caption{HyMAX: relative root-mean-square error of the trailing-edge displacement in free rollout, against the reference model record
  member, for the trained model after $263$ epochs.}
  \label{tab:hymax}
  \begin{tabular}{lcc}
    \toprule
    Window & Duration & Relative error \\
    \midrule
    First $48$ time samples & \SI{1.9}{\milli\second} & \SI{0.31}{\percent} \\
    Full rollout & \SI{64.9}{\milli\second} & \SI{3.67}{\percent} \\
    \bottomrule
  \end{tabular}
\end{table}

\begin{figure}[htbp]
  \centering
  \includegraphics[width=\textwidth]{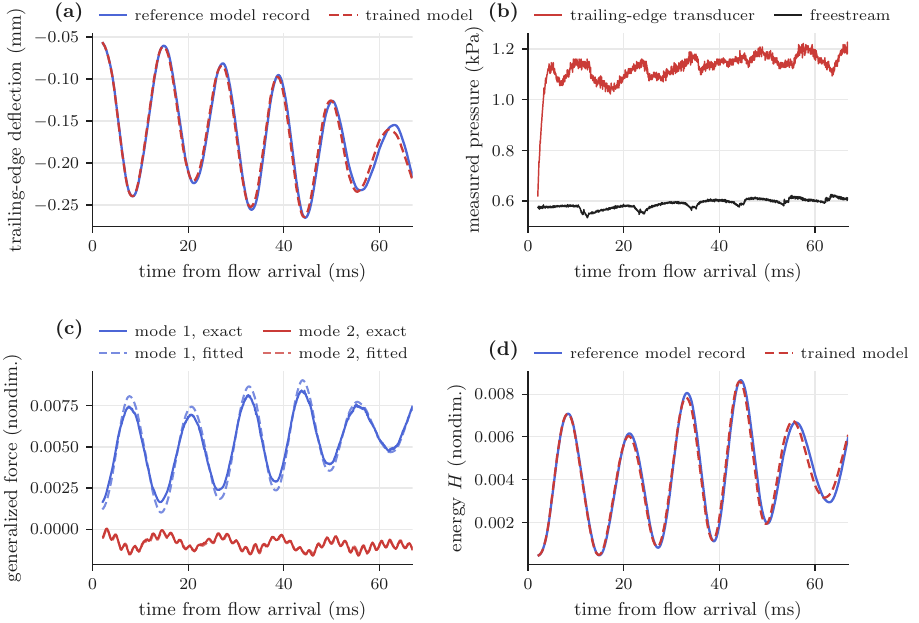}
  \caption{HyMAX in free rollout from one recorded state. (a) The trailing-edge deflection of the trained model against the
  reference model record member it is graded on. (b) The measured freestream and trailing-edge pressures that drive
  both. (c) The state-dependent generalized force of each mode, exact and fitted, evaluated on the record's states. (d)
  The energy along the rollout.}
  \label{fig:hymax}
\end{figure}

The error that remains is the one Section~\ref{sec:methods-head} anticipates. The reference model's aerodynamic operators follow
the measured dynamic pressure in time, and a fitted term with constant coefficients can represent them only on average. Over the record's states, the fitted generalized forces correlate with the reference model's exact forces at a Pearson coefficient of $0.998$, over both modes pooled. Figure~\ref{fig:hymax}(c) shows the
residual. The fitted first-mode force follows the exact one with a slightly larger swing, and the second mode's matches
closely.

Two further results concern the number of retained modes. A third mode does not reduce the error. Two and three modes give \SI{0.307}{}
and \SI{0.294}{\percent} over the first \SI{1.9}{\milli\second}, and \SI{3.67}{\percent} over the full rollout.
Figure~\ref{fig:hymax-modes} shows the two errors over the rollout. A six-mode basis is not admissible.
With six modes the state has twelve components, but the whitened coordinate of Section~\ref{sec:methods-identity} keeps only eleven independent directions.
The sixth mode, near \SI{7}{\kilo\hertz}, is resolved at the record's \SI{25}{\kilo\hertz} rate but carries no representable energy under the measured drive.
The map back to physical coordinates is then not invertible, and the basis is rejected. The largest usable basis is therefore determined by the record rather than assumed. The model reports what its data do not support,
instead of fitting a direction the data do not contain.

\begin{figure}[htbp]
  \centering
  \includegraphics[width=\textwidth]{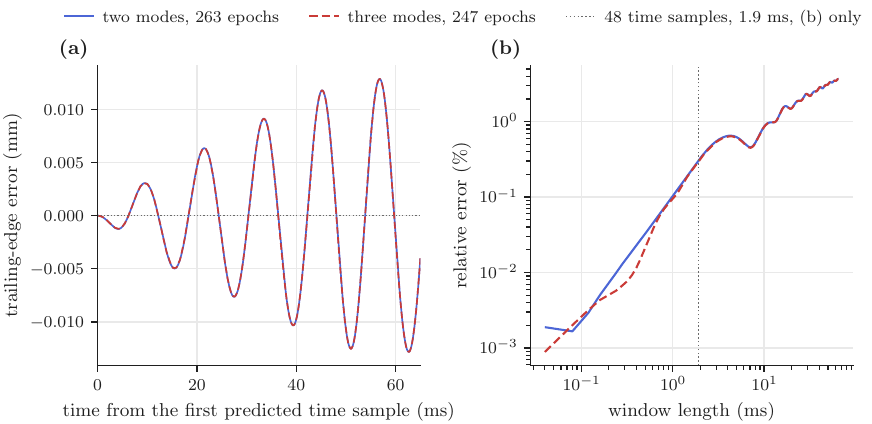}
  \caption{HyMAX trained models with two and three retained modes, trained for $263$ and $247$ epochs, in free rollout from
  the time sample of Figure~\ref{fig:hymax}. (a) The trailing-edge displacement error, model minus reference model record. (b)
  The relative root-mean-square error over a window from the first predicted time sample, against the window's length. The
  dotted line marks $48$ time samples, \SI{1.9}{\milli\second}.}
  \label{fig:hymax-modes}
\end{figure}

These results are graded against the HyMAX reference model, not against the experiment. The reference model's first frequency is the measured \SI{83}{\hertz} by construction. Its second, \SI{520}{\hertz}, lies within \SI{2}{\percent} of the \SI{530.4}{\hertz} computed by Xu and Dowell~\cite{xu2025}.
Its damping ratio, $\zeta = 0.01$, is a setting rather than a measurement.
\FloatBarrier

\section{Preregistered predictions and their outcomes}
\label{sec:predictions}

Table~\ref{tab:predictions} lists the predictions registered for the studies this report presents, in the order
registered, with their outcomes. A prediction is broken when any registered condition fails. Four of nine broke: A1, R2,
R3 and T5. All are retained as registered, and no threshold was moved. Twenty-one earlier predictions, made
while the reference models were developed, concern superseded configurations and are not listed.

\begin{table}[htbp]
  \centering
  \caption{The predictions registered for the RC-19 studies of this report, and their outcomes. Row A1 tests a
  correction to the reference model on the attached-shock case. Rows R1 to R3 test the reference model, and rows T1 to T5
  the trained model.}
  \label{tab:predictions}
  \small
  \begin{tabular}{p{0.04\textwidth}p{0.36\textwidth}p{0.38\textwidth}p{0.08\textwidth}}
    \toprule
    & Registered prediction & Outcome & Verdict \\
    \midrule
        A1 & A learned conservative correction to the potential, fitted on the attached-shock case, transfers unchanged:
      it reproduces the two no-shock regimes and their frequency within \SI{15}{\percent}, at under half the size of the
      original terms & size test passed; neither regime reproduced; the chaotic case turned periodic & broken \\
    R1 & Regime at the simulation operating conditions of Brouwer et al.: periodic at \SI{52.7}{\kilo\pascal}, \SI{8.8}{\kelvin}; chaotic at
      \SI{51.1}{\kilo\pascal}, \SI{13.6}{\kelvin}; both wells & as predicted in every repeat & held \\
    R2 & Scaling the aerodynamic operators by $1.054$ leaves the regime unchanged and moves the frequency less than \SI{2}{\percent}
      & regime unchanged; median frequency $+3.23$ and $-3.00$~\SI{}{\percent} in the two wells at the experimental operating condition & broken \\
    R3 & Mean frequency equivalent within \SI{2}{\percent}, and spectral change within sampling, both wells
      & frequency within; spectral band error $3.1$ and $3.7$ times the sampling level & broken \\
    T1 & Energy identity on the trained weights at most $10^{-12}$ & $2.2\times10^{-15}$ & held \\
    T2 & Periodic simulation operating condition: 16 of 16 periodic; frequency within one bin and amplitude within \SI{5}{\percent} of the reference model
      & 16 of 16; \SI{253.9}{\hertz}, zero spread; $0.7164$ against $0.7161$ & held \\
    T3 & Chaotic simulation operating condition: 16 of 16 chaotic on the trained and on the reference model with the same thermal field & 16 of 16 on each; medians
      $249.0$ and \SI{239.3}{\hertz} & held \\
    T4 & Ensemble: mean frequency within one bin of the reference model; \SI{113}{\hertz} absent in 30 of 30; amplitude within
      \SI{5}{\percent} & $238.4$ against \SI{237.2}{\hertz}; absent in 30 of 30; $0.760$ against $0.752$ & held \\
    T5 & Lyapunov exponent: sign agrees at six temperatures; within \SI{15}{\percent} where positive & sign at all six;
      \SI{17.6}{\percent} at \SI{8.7}{\kelvin} & broken \\
    \bottomrule
  \end{tabular}
\end{table}

\paragraph{R2 broke because its condition sat below its own noise} Every frequency the estimator returns lies on the
package's \SI{4.88}{\hertz} spectral grid. Two percent of \SI{236}{\hertz} is \SI{4.7}{\hertz}, less than one grid
interval. The two wells moved by the same \SI{7.3}{\hertz} in opposite directions. A change in the aerodynamic coefficients does not do that,
since it stiffens or softens both wells in the same sense. Within one repeat set, the frequency range, largest minus smallest, was
$39$ to \SI{44}{\hertz}, against a pooled standard deviation of \SI{16.3}{\hertz}. The condition placed a two percent
tolerance on a quantity whose standard deviation across repeats was \SI{16.3}{\hertz}, about seven percent. Its
estimator is quantized at \SI{4.88}{\hertz}, about two percent. It
was registered without checking it against the noise of the quantity it tested. The replacement R3 changed the
statistic, not the tolerance, and sized its sample from the measured spread before the run.

\paragraph{R3 broke on one of its two clauses, and that is its result} Its mean-frequency clause held and its
spectral clause did not. Section~\ref{sec:rc19-aero} reports both. Together they bound the local-coefficient enrichment of Brouwer et al.: it moves the amplitude by \SI{2.4}{\percent}, not the mean frequency.

\paragraph{T5 broke at one temperature} The follow-up measurement of Section~\ref{sec:rc19-chaos}, eight trajectories per model at \SI{8.7}{\kelvin}, was registered before it ran.
It found the \SI{17.6}{\percent} difference inside the scatter of single-trajectory exponents. That diagnosis
does not re-grade the prediction.

\paragraph{What the four trained-model predictions that held license} In T2 to T4 the trained model was run outside its training temperature, with the generalized-$\alpha$ step it was trained with. It returns the reference
model's regimes, ensemble frequency and amplitude, and, like the reference model, shows no \SI{113}{\hertz} peak. These results may therefore be attributed to the trained
model. The fitted force term reproduces the reference model's non-conservative forces to a relative root-mean-square error of
$3.6\times10^{-7}$. These results therefore
measure the fitted structure-preserving model and its integrator. They do not show that training learned the dynamics.

\section{Discussion}
\label{sec:discussion}

\paragraph{One generated record suffices when the physics is in the model's form} Neither challenge problem offers an ensemble, and
neither offers a time-resolved response. The models of this report needed neither, because almost nothing in them is
learned from data. The RC-19 model has $35$ trained numbers and $119$ fitted ones, against $2464$ fixed numbers in its
potential (Section~\ref{sec:methods-head}). The energy identity holds for any weights, hence at the random initialization and at every epoch. The optimizer therefore never had to find
the physics. What the record determines is the non-conservative part. On RC-19 a ridge-regularized least-squares fit determines it to a relative error of $3.6\times10^{-7}$ before the first epoch.
On HyMAX the same fit is limited by the time variation of the aerodynamic operators. There the fitted and exact forces have a Pearson correlation of $0.998$, over both modes pooled.

\paragraph{What training did here} On these records, training changed little. On RC-19 the fitted term absorbed the reference model's damping.
The trained dissipation's eigenvalues settled at $2\times10^{-8}$ to $10^{-7}$, against $0.010$ to $0.020$ for that damping. That is the expected outcome when the reference model's non-conservative forces are linear in the state. We report it as such,
not as a success of the optimizer. The
trained model's value in this report is that its results do not depend on the reference model being available. It runs at
conditions it was not trained at, from its trained parameters alone, with its energy identity and positive-definite dissipation intact. Whether the trained
dissipation earns its place has to be tested on a record whose dissipation is not linear, which a measured response
would provide.

\paragraph{Structure tells the user what the data cannot support} The HyMAX construction rejected a six-mode basis, because the measured drive does not
excite the sixth mode. A penalty formulation could fit the unsupported direction and still report a small loss. The RC-19 reference model reports the work the flow does on the panel, and the work the assimilation correction does on it. Both are terms
of the discrete energy budget, which closes to roundoff. In both cases the built-in structure is more useful as a diagnostic than as a constraint. A failure appears at the step where it occurs, as a budget
term, and not later as a drift.

\paragraph{Statistics, chaos and assimilation} A chaotic panel cannot be forecast trajectory by trajectory beyond a horizon
of tens of milliseconds at practical starting accuracy. Even an exact model computed in double precision loses the
trajectory after about \SI{100}{\milli\second}. The comparisons that are reproducible are statistics. Every ensemble mean of this report is quoted with its member count and kick interval, because
Section~\ref{sec:rc19-agreement} shows that the mean depends on that interval.
Past the horizon, only measurement can keep a model on the realized trajectory. The twin experiment of Section~\ref{sec:rc19-assimilation} shows that the nine published points, sampled at \SI{5}{\kilo\hertz}, would suffice for offline reconstruction.
That requires the time series behind the published statistics, which the package does not hold.

\paragraph{The preregistered predictions} Each of the four broken predictions of Table~\ref{tab:predictions} carries a
lesson. A1 showed that a correction fitted at one condition is a calibration, not physics. R2 showed that an
acceptance condition must be sized against the noise of its statistic before it is registered. T5 showed that a
single-trajectory Lyapunov exponent at \SI{8.7}{\kelvin} is a range, not a value. That also qualifies the quoted rise of the exponent with temperature, which uses one trajectory per temperature.
A preregistration in which every prediction held would have been less informative and less credible.

\section{Limitations}
\label{sec:limitations}

The limitations of this work follow mostly from the data the two public packages contain. Both packages were assembled
to grade computations against statistics, not to train or to track a model. We describe first what they hold and what
they lack. We then state what each gap implies for the results, and close with the limits of scope that are our own.

\paragraph{The data available} The RC-19 response is reported as statistics at nine points in the published
papers~\cite{brouwer2021,piccoloserafim2025}. These are the mean, the standard deviation and spectra. The public package
defines the points but does not hold the statistics. It holds no time history, and each case is a single run with no repeats. Because the
response is chaotic, a single statistic carries sampling spread that one run cannot estimate. The package gives no
material constants. The plate's modulus, density and thermal expansion are therefore taken from a materials database~\cite{matweb4140}. The temperature field
rests on two thermocouples. One is on the panel and the other is on the frame. The panel thermocouple sits in the thermal
gradient near the frame~\cite{brouwer2021}. The field's shape is therefore inferred rather than measured. Surface pressures were
measured on a rigid panel only, never on the deflecting one. The HyMAX package holds one run of surface pressure at four
points. Its one structural frequency, \SI{83}{\hertz}, comes from~\cite{currao2019}. It holds no displacement, no damping estimate, and no measure of the pressure
behind the plate or of the shock's footprint on it.

\paragraph{What the gaps imply for the learned models} With no time-resolved response, the record each model is trained
on is generated by the reference model rather than measured. The comparisons of the learned model are therefore
self-consistency, and the comparisons with experiment are made through statistics alone. A second consequence follows
from the reference model's form. Its non-conservative forces are linear in the state, and a least-squares fit recovers
them to a relative root-mean-square error of $3.6\times10^{-7}$ before training begins. The accuracy in Table~\ref{tab:self} is thus set by that fit,
and training changes it little. Whether the learned dissipation adds accuracy on its own needs a record whose dissipation is not linear. A
measured record would be one. The assimilation result of Section~\ref{sec:rc19-assimilation}
has the same origin. With no measured time series to assimilate, it is a twin experiment. It establishes what the
published sampling rate and layout can support, and it is not agreement with the experiment.

On HyMAX the gaps are wider. The back pressure and shock position were not measured. The record therefore spans three back pressures and two shock positions.
The damping ratio, $\zeta = 0.01$, is a setting, and the HyMAX results are graded against that case's own reference model. The
package states that a measured trailing-edge displacement will be provided for validation~\cite{aepw4hswg}. It is
not used here, and
Section~\ref{sec:hymax} contains no comparison with experiment.

\paragraph{Limits of the comparisons with experiment} Four RC-19 comparisons fall short, and each traces to an input
rather than to the method. At the periodic simulation operating condition, our models' frequency, \SI{253.9}{\hertz}, sits \SI{7.6}{\percent} above the
measured \SI{236}{\hertz}. On the chaotic case, the reference model's ensemble mean of \SI{239.6}{\hertz} sits
\SI{6.8}{\percent} below the measured \SI{257}{\hertz}.
Neither our models nor that of Brouwer et al.\ returns the measured periodic regime at the measured inputs. This points to the temperature field the package
cannot resolve. The reference model also uses the Mach number of Brouwer et al., $1.94$. The package's computed
value is $1.92$. At the same stagnation conditions, the freestream static pressure rises as the Mach number falls. The
dynamic pressure would then rise by about one percent, and the piston coefficient $2q_\infty/\beta$ by about
\SI{2.5}{\percent}. On the attached-shock case, the reference model's fluctuation level
is about sixteen times below the measurement, and no learned model was built for that case. Section~\ref{sec:rc19-open} leaves the
dependence of the pressure field on the deflected shape as the one untested cause. The package measures that field on a rigid panel
only.

Two statistical limits apply to every comparison. An ensemble mean depends on the interval of initial perturbations it
sampled. Each mean in this report is therefore quoted with its interval. Likewise, a Lyapunov exponent from one trajectory is a
range, not a value. The trend over temperature was computed from one trajectory per temperature. At \SI{8.7}{\kelvin}, the eight
trajectories per model of Section~\ref{sec:rc19-chaos} span ranges wider than the difference between the models.

\paragraph{Limits of scope} We did not measure the trained model's cost per time step, and no comparison of cost against computational
fluid dynamics was made. No speedup is claimed, and none is implied.

\section{Conclusions and outlook}
\label{sec:conclusions}

We built structure-preserving learned reduced-order models of both AePW-4 High-Speed Working Group challenge problems
with one construction. Each model was built from the data package that the working group distributes for its case. The
energy identity holds to roundoff on the trained weights. The learned dissipation cannot supply energy at any parameter
value. The discrete-gradient step closes the energy budget at every step, including under assimilation.

On RC-19, inviscid shock relations applied to the package's wedge geometry and schlieren shock angle predict a shock
pressure ratio of $1.521$. The measured value comes from the pressure-sensitive paint that Brouwer et
al.~\cite{brouwer2023pressure} applied to the rigid panel. The ratio of its centerline pressures behind and ahead of the
shock is also $1.521$. The trained model's thirty-member ensemble places the mean response frequency within one percent
of the \SI{236}{\hertz} measured by Brouwer et al.~\cite{brouwer2021}. At two conditions taken from their simulations,
the trained model returns the regimes of their model in every repeat. It does so without the \SI{113}{\hertz} peak of
their model, which the experiment does not show. Brouwer et al.\ enrich piston theory with local coefficients. On the
reference model, that enrichment changes the fluctuation amplitude by only $0.44$ percent per percent change in the
coefficients.

The chaos of the panel limits any model to a tracking horizon of tens of milliseconds. In a twin experiment,
assimilation at the nine published points and the published \SI{5}{\kilo\hertz} rate holds the state past that horizon.
The tracking error then stays near one percent of a thickness. On HyMAX the trained model follows its reference model's
trailing edge for \SI{65}{\milli\second} with nothing fed back. Its relative error over that rollout is
\SI{3.67}{\percent}.

The open items are data rather than method. The time series behind the RC-19 statistics would allow the experiment
itself to be assimilated, and trajectories rather than statistics to be graded. They would also allow training on a
measured record. The fitted force term and the learned dissipation would then absorb what the reference physics misses.

A deployed digital twin would run the learned model with the assimilation force on live measurements. Three steps
separate the twin experiment of Section~\ref{sec:rc19-assimilation} from that use. First, real-time tracking must use
the causal hold, which holds the state to $0.325$ thicknesses at \SI{5}{\kilo\hertz} against the smoother's $0.0125$. A
twin would therefore need a higher observation rate or a better causal reconstruction. Second, the correction must
absorb model error as well as state error. The twin experiment used an exact model and did not test that. Third, the
correction has run only on the reference model. With the learned model, the observer's work in the energy budget would
measure what the model misses.

Further data would close the remaining comparisons. A computation or a measurement on the deformed geometry of the
attached-shock case would test the one mechanism left for its fluctuation deficit. That mechanism is the dependence of
the pressure field on the deflected shape. The HyMAX trailing-edge displacement would turn that case's
self-consistency into experimental agreement. On the side of the method, two records are missing. Testing what the
trained dissipation can carry needs a record whose non-conservative forces are not linear in the state. Testing the
temperature dependence of the fitted term needs a record that spans temperature.

The construction is not specific to these panels. Equation~\eqref{eq:reference model-H} is a Ritz--Galerkin model of a von
K\'arm\'an plate, and a finite element model can supply its operators directly~\cite{mignolet2013review}. Fitting then sets
every term whose form the physics fixes. Training is reserved for the remaining terms. With finite element records across
material and thermal parameters, the conditioned model could be tested on held-out parameter sets, registered in advance.
That test requires a modal basis valid across the parameter range, and data that excite every retained mode.

\section*{Reproducibility}
\label{sec:repro}

\paragraph{Data availability} The challenge-problem packages are distributed by the AePW-4 High-Speed Working
Group~\cite{aepw4hswg}. The records of this report are generated from them by the reference models of
Section~\ref{sec:methods}. Inquiries may be directed to the corresponding author.

\section*{References}
\bibliographystyle{elsarticle-num}
\bibliography{spmlhswg}

\begin{thebibliography}{10}
\expandafter\ifx\csname url\endcsname\relax
  \def\url#1{\texttt{#1}}\fi
\expandafter\ifx\csname urlprefix\endcsname\relax\def\urlprefix{URL }\fi
\expandafter\ifx\csname href\endcsname\relax
  \def\href#1#2{#2} \def\path#1{#1}\fi

\bibitem{aepw4hswg}
{NASA Engineering and Safety Center Academy}, Fourth aeroelastic prediction
  workshop: High speed working group,
  \url{https://nescacademy.nasa.gov/workshops/AePW4/public/wg/high_speed},
  accessed 3 September 2026.

\bibitem{brouwer2023pressure}
K.~R. Brouwer, R.~A. Perez, T.~J. Beberniss, S.~M. Spottswood, Surface pressure
  measurements and predictions in shock-dominated flows, in: AIAA AVIATION 2023
  Forum, San Diego, CA, 2023, aIAA Paper 2023-4134.
\newblock \href {http://dx.doi.org/10.2514/6.2023-4134}
  {\path{doi:10.2514/6.2023-4134}}.

\bibitem{brouwer2021}
K.~R. Brouwer, R.~A. Perez, T.~J. Beberniss, S.~M. Spottswood, D.~A. Ehrhardt,
  R.~Wiebe, Investigation of aeroelastic instabilities for a thin panel in
  turbulent flow, Nonlinear Dynamics 104 (2021) 3323--3346.
\newblock \href {http://dx.doi.org/10.1007/s11071-021-06571-4}
  {\path{doi:10.1007/s11071-021-06571-4}}.

\bibitem{dowell1970review}
E.~H. Dowell, Panel flutter: a review of the aeroelastic stability of plates
  and shells, AIAA Journal 8~(3) (1970) 385--399.
\newblock \href {http://dx.doi.org/10.2514/3.5680} {\path{doi:10.2514/3.5680}}.

\bibitem{mei1999review}
C.~Mei, K.~Abdel-Motagaly, R.~Chen, Review of nonlinear panel flutter at
  supersonic and hypersonic speeds, Applied Mechanics Reviews 52~(10) (1999)
  321--332.
\newblock \href {http://dx.doi.org/10.1115/1.3098919}
  {\path{doi:10.1115/1.3098919}}.

\bibitem{mcnamara2011hypersonic}
J.~J. McNamara, P.~P. Friedmann, Aeroelastic and aerothermoelastic analysis in
  hypersonic flow: past, present, and future, AIAA Journal 49~(6) (2011)
  1089--1122.
\newblock \href {http://dx.doi.org/10.2514/1.J050882}
  {\path{doi:10.2514/1.J050882}}.

\bibitem{dowell1982chaotic}
E.~H. Dowell, Flutter of a buckled plate as an example of chaotic motion of a
  deterministic autonomous system, Journal of Sound and Vibration 85~(3) (1982)
  333--344.
\newblock \href {http://dx.doi.org/10.1016/0022-460X(82)90259-0}
  {\path{doi:10.1016/0022-460X(82)90259-0}}.

\bibitem{brouwer2025aepw4}
K.~Brouwer, {AePW-4} high-speed working group: An overview of recent progress
  and future directions, AIAA AVIATION Forum, Aeroelastic Prediction Workshop
  session, presentation, 22 July 2025. Distribution A, AFRL-2025-3295 (2025).

\bibitem{piccoloserafim2025}
L.~Piccolo~Serafim, E.~Dowell, K.~Brouwer, Correlation of mathematical model
  with {AFRL} {RC-19} aerothermoelastic experiment including an impinging
  shock, AIAA AVIATION Forum, Las Vegas, NV, presentation, 22 July 2025 (2025).

\bibitem{aguilo2026spml}
M.~A. Aguil\'o, Structure-preserving reduced-order forecasting of transient,
  parametric flows with neural {W}hitney forms and a hybridizable neural
  integrator, technical report, Morphorm LLC; arXiv submission in preparation
  (2026).

\bibitem{aguilo2026spmlfsi}
M.~A. Aguil\'o, A structure-preserving mortar coupling of reduced-order fluid
  and structure forecasters for fluid--structure interaction, technical report,
  Morphorm LLC; arXiv submission in preparation (2026).

\bibitem{aguilo2026spmlopen}
M.~A. Aguil\'o, Structure-preserving reduced-order forecasting of open flows
  with prescribed and learned boundary fluxes, technical report, Morphorm LLC;
  arXiv submission in preparation (2026).

\bibitem{schuirmann1987tost}
D.~J. Schuirmann, A comparison of the two one-sided tests procedure and the
  power approach for assessing the equivalence of average bioavailability,
  Journal of Pharmacokinetics and Biopharmaceutics 15~(6) (1987) 657--680.
\newblock \href {http://dx.doi.org/10.1007/BF01068419}
  {\path{doi:10.1007/BF01068419}}.

\bibitem{spottswood2019exploring}
S.~M. Spottswood, T.~J. Beberniss, T.~G. Eason, R.~A. Perez, J.~M. Donbar,
  D.~A. Ehrhardt, Z.~B. Riley, Exploring the response of a thin, flexible panel
  to shock-turbulent boundary-layer interactions, Journal of Sound and
  Vibration 443 (2019) 74--89.
\newblock \href {http://dx.doi.org/10.1016/j.jsv.2018.11.035}
  {\path{doi:10.1016/j.jsv.2018.11.035}}.

\bibitem{brouwer2021experiments}
K.~R. Brouwer, R.~A. Perez, T.~J. Beberniss, S.~M. Spottswood, D.~A. Ehrhardt,
  Experiments on a thin panel excited by turbulent flow and
  shock/boundary-layer interactions, AIAA Journal 59~(7) (2021) 2737--2752.
\newblock \href {http://dx.doi.org/10.2514/1.J060114}
  {\path{doi:10.2514/1.J060114}}.

\bibitem{currao2019}
G.~Currao, A.~Neely, C.~Kennell, S.~Gai, D.~Buttsworth, Hypersonic
  fluid--structure interaction on a cantilevered plate with shock impingement,
  AIAA Journal 57~(11) (2019) 4819--4834.
\newblock \href {http://dx.doi.org/10.2514/1.J058375}
  {\path{doi:10.2514/1.J058375}}.

\bibitem{currao2021oscillating}
G.~M.~D. Currao, L.~P. McQuellin, A.~J. Neely, S.~L. Gai, S.~O'Byrne,
  F.~Zander, D.~R. Buttsworth, J.~J. McNamara, I.~Jahn, Hypersonic oscillating
  shock-wave/boundary-layer interaction on a flat plate, AIAA Journal 59~(3)
  (2021) 940--959.
\newblock \href {http://dx.doi.org/10.2514/1.J059590}
  {\path{doi:10.2514/1.J059590}}.

\bibitem{ames1953naca1135}
{Ames Research Staff}, Equations, tables, and charts for compressible flow,
  NACA Report 1135, National Advisory Committee for Aeronautics (1953).

\bibitem{xu2025}
H.~Xu, E.~Dowell, Computational modeling of the {HyMAX} experiment in support
  of a new aeroelastic experimental design, AIAA AVIATION Forum, Aeroelastic
  Prediction Workshop session, presentation, 22 July 2025 (2025).

\bibitem{lighthill1953oscillating}
M.~J. Lighthill, Oscillating airfoils at high {M}ach number, Journal of the
  Aeronautical Sciences 20~(6) (1953) 402--406.
\newblock \href {http://dx.doi.org/10.2514/8.2657} {\path{doi:10.2514/8.2657}}.

\bibitem{ashley1956piston}
H.~Ashley, G.~Zartarian, Piston theory: a new aerodynamic tool for the
  aeroelastician, Journal of the Aeronautical Sciences 23~(12) (1956)
  1109--1118.
\newblock \href {http://dx.doi.org/10.2514/8.3740} {\path{doi:10.2514/8.3740}}.

\bibitem{dowell1975plates}
E.~H. Dowell, Aeroelasticity of Plates and Shells, Noordhoff International,
  Leyden, 1975.

\bibitem{mignolet2013review}
M.~P. Mignolet, A.~Przekop, S.~A. Rizzi, S.~M. Spottswood, A review of
  indirect/non-intrusive reduced order modeling of nonlinear geometric
  structures, Journal of Sound and Vibration 332~(10) (2013) 2437--2460.
\newblock \href {http://dx.doi.org/10.1016/j.jsv.2012.10.017}
  {\path{doi:10.1016/j.jsv.2012.10.017}}.

\bibitem{brouwer2019enriched}
K.~R. Brouwer, J.~J. McNamara, Enriched piston theory for expedient aeroelastic
  loads prediction in the presence of shock impingements, AIAA Journal 57~(3)
  (2019) 1288--1302.
\newblock \href {http://dx.doi.org/10.2514/1.J057595}
  {\path{doi:10.2514/1.J057595}}.

\bibitem{ganti2025panel}
H.~Ganti, L.~Bravo, A.~Ghosal, P.~Khare, Mutual interactions between a thin
  flexible panel and supersonic flows, arXiv preprint arXiv:2501.08875.

\bibitem{raissi2019pinn}
M.~Raissi, P.~Perdikaris, G.~E. Karniadakis, Physics-informed neural networks:
  A deep learning framework for solving forward and inverse problems involving
  nonlinear partial differential equations, Journal of Computational Physics
  378 (2019) 686--707.
\newblock \href {http://dx.doi.org/10.1016/j.jcp.2018.10.045}
  {\path{doi:10.1016/j.jcp.2018.10.045}}.

\bibitem{greydanus2019hnn}
S.~Greydanus, M.~Dzamba, J.~Yosinski, Hamiltonian neural networks, in: Advances
  in Neural Information Processing Systems (NeurIPS), Vol.~32, 2019, pp.
  15353--15363.

\bibitem{vanderSchaftJeltsema2014}
A.~van~der Schaft, D.~Jeltsema, Port-{H}amiltonian systems theory: an
  introductory overview, Vol.~1 of Foundations and Trends in Systems and
  Control, Now Publishers, 2014.

\bibitem{desai2021porthnn}
S.~A. Desai, M.~Mattheakis, D.~Sondak, P.~Protopapas, S.~J. Roberts,
  Port-{H}amiltonian neural networks for learning explicit time-dependent
  dynamical systems, Physical Review E 104~(3) (2021) 034312.
\newblock \href {http://dx.doi.org/10.1103/PhysRevE.104.034312}
  {\path{doi:10.1103/PhysRevE.104.034312}}.

\bibitem{quispel2008avf}
G.~R.~W. Quispel, D.~I. McLaren, A new class of energy-preserving numerical
  integration methods, Journal of Physics A: Mathematical and Theoretical
  41~(4) (2008) 045206.
\newblock \href {http://dx.doi.org/10.1088/1751-8113/41/4/045206}
  {\path{doi:10.1088/1751-8113/41/4/045206}}.

\bibitem{gonzalez1996discrete}
O.~Gonzalez, Time integration and discrete {H}amiltonian systems, Journal of
  Nonlinear Science 6~(5) (1996) 449--467.
\newblock \href {http://dx.doi.org/10.1007/BF02440162}
  {\path{doi:10.1007/BF02440162}}.

\bibitem{mclachlan1999discrete}
R.~I. McLachlan, G.~R.~W. Quispel, N.~Robidoux, Geometric integration using
  discrete gradients, Philosophical Transactions of the Royal Society A
  357~(1754) (1999) 1021--1045.
\newblock \href {http://dx.doi.org/10.1098/rsta.1999.0363}
  {\path{doi:10.1098/rsta.1999.0363}}.

\bibitem{lowdin1950nonorthogonality}
P.-O. L{\"o}wdin, On the non-orthogonality problem connected with the use of
  atomic wave functions in the theory of molecules and crystals, The Journal of
  Chemical Physics 18~(3) (1950) 365--375.
\newblock \href {http://dx.doi.org/10.1063/1.1747632}
  {\path{doi:10.1063/1.1747632}}.

\bibitem{matweb4140}
{MatWeb}, {AISI} 4140 steel, annealed at 815\,{\textdegree}{C}
  (1500\,{\textdegree}{F}), furnace cooled 11\,{\textdegree}{C}/hour to
  665\,{\textdegree}{C}, air cooled, 25~mm round, material property data sheet,
  \url{https://www.matweb.com/search/DataSheet.aspx?MatGUID=7b75475aa1bc41618788f63c6500d36b},
  density, modulus, Poisson's ratio, linear thermal expansion
  (0--100\,{\textdegree}{C}) and thermal conductivity (100\,{\textdegree}{C}).

\bibitem{white2006viscous}
F.~M. White, Viscous Fluid Flow, 3rd Edition, McGraw-Hill, New York, 2006.

\bibitem{incropera2007heat}
F.~P. Incropera, D.~P. DeWitt, T.~L. Bergman, A.~S. Lavine, Fundamentals of
  Heat and Mass Transfer, 6th Edition, Wiley, Hoboken, NJ, 2007.

\bibitem{eckert1956}
E.~R.~G. Eckert, Engineering relations for heat transfer and friction in
  high-velocity laminar and turbulent boundary-layer flow over surfaces with
  constant pressure and temperature, Transactions of the ASME 78~(6) (1956)
  1273--1283.
\newblock \href {http://dx.doi.org/10.1115/1.4014011}
  {\path{doi:10.1115/1.4014011}}.

\bibitem{matweb6061t6}
{MatWeb}, Aluminum 6061-{T6}; 6061-{T651}, material property data sheet,
  \url{https://www.matweb.com/search/datasheet.aspx?matguid=b8d536e0b9b54bd7b69e4124d8f1d20a},
  cited by the HyMAX challenge-problem package as the plate's material
  specification.

\bibitem{blevins1979formulas}
R.~D. Blevins, Formulas for Natural Frequency and Mode Shape, Van Nostrand
  Reinhold, New York, 1979.

\bibitem{hoerl1970ridge}
A.~E. Hoerl, R.~W. Kennard, Ridge regression: biased estimation for
  nonorthogonal problems, Technometrics 12~(1) (1970) 55--67.
\newblock \href {http://dx.doi.org/10.1080/00401706.1970.10488634}
  {\path{doi:10.1080/00401706.1970.10488634}}.

\bibitem{chungHulbert1993}
J.~Chung, G.~M. Hulbert, A time integration algorithm for structural dynamics
  with improved numerical dissipation: the generalized-$\alpha$ method, Journal
  of Applied Mechanics 60~(2) (1993) 371--375.
\newblock \href {http://dx.doi.org/10.1115/1.2900803}
  {\path{doi:10.1115/1.2900803}}.

\bibitem{jansenWhitingHulbert2000}
K.~E. Jansen, C.~H. Whiting, G.~M. Hulbert, A generalized-$\alpha$ method for
  integrating the filtered {N}avier--{S}tokes equations with a stabilized
  finite element method, Computer Methods in Applied Mechanics and Engineering
  190~(3--4) (2000) 305--319.
\newblock \href {http://dx.doi.org/10.1016/S0045-7825(00)00203-6}
  {\path{doi:10.1016/S0045-7825(00)00203-6}}.

\bibitem{bai2019deq}
S.~Bai, J.~Z. Kolter, V.~Koltun, Deep equilibrium models, in: Advances in
  Neural Information Processing Systems, Vol.~32, 2019.

\bibitem{kingma2015adam}
D.~P. Kingma, J.~Ba, {Adam}: A method for stochastic optimization, in:
  International Conference on Learning Representations (ICLR), 2015,
  arXiv:1412.6980 [cs.LG].

\bibitem{loshchilov2017sgdr}
I.~Loshchilov, F.~Hutter, {SGDR}: Stochastic gradient descent with warm
  restarts, in: International Conference on Learning Representations, 2017.

\bibitem{hoke1976nudging}
J.~E. Hoke, R.~A. Anthes, The initialization of numerical models by a
  dynamic-initialization technique, Monthly Weather Review 104~(12) (1976)
  1551--1556.
\newblock \href
  {http://dx.doi.org/10.1175/1520-0493(1976)104<1551:TIONMB>2.0.CO;2}
  {\path{doi:10.1175/1520-0493(1976)104<1551:TIONMB>2.0.CO;2}}.

\bibitem{luenberger1964observer}
D.~G. Luenberger, Observing the state of a linear system, IEEE Transactions on
  Military Electronics 8~(2) (1964) 74--80.
\newblock \href {http://dx.doi.org/10.1109/TME.1964.4323124}
  {\path{doi:10.1109/TME.1964.4323124}}.

\bibitem{benettin1980lyapunov}
G.~Benettin, L.~Galgani, A.~Giorgilli, J.-M. Strelcyn, Lyapunov characteristic
  exponents for smooth dynamical systems and for {H}amiltonian systems; a
  method for computing all of them. {P}art 1: Theory, Meccanica 15~(1) (1980)
  9--20.
\newblock \href {http://dx.doi.org/10.1007/BF02128236}
  {\path{doi:10.1007/BF02128236}}.

\bibitem{kalnay2003assimilation}
E.~Kalnay, Atmospheric Modeling, Data Assimilation and Predictability,
  Cambridge University Press, Cambridge, 2003.

\bibitem{nosek2018preregistration}
B.~A. Nosek, C.~R. Ebersole, A.~C. DeHaven, D.~T. Mellor, The preregistration
  revolution, Proceedings of the National Academy of Sciences 115~(11) (2018)
  2600--2606.
\newblock \href {http://dx.doi.org/10.1073/pnas.1708274114}
  {\path{doi:10.1073/pnas.1708274114}}.

\bibitem{anderson2007fundamentals}
J.~D. Anderson, Jr., Fundamentals of Aerodynamics, 4th Edition, McGraw-Hill,
  New York, 2007.

\bibitem{mann1947test}
H.~B. Mann, D.~R. Whitney, On a test of whether one of two random variables is
  stochastically larger than the other, The Annals of Mathematical Statistics
  18~(1) (1947) 50--60.
\newblock \href {http://dx.doi.org/10.1214/aoms/1177730491}
  {\path{doi:10.1214/aoms/1177730491}}.

\end{thebibliography}

\end{document}